\documentclass[10pt]{article}

%
%

\usepackage[authoryear,round]{natbib}					
\usepackage{amsmath}    													
\usepackage{graphicx}   													
\usepackage{subfigure}														
\usepackage{amssymb}    													
\usepackage[mathscr]{eucal}	
\usepackage{colortbl}
\usepackage{color}
\usepackage[colorlinks,citecolor=blue,urlcolor=blue]{hyperref}
\usepackage{cancel}																
\usepackage[normalem]{ulem} 																
\usepackage{pstricks}
\usepackage{enumitem}
\usepackage{rotating}
\usepackage{lscape}
\usepackage[paperwidth=8.5in,paperheight=11in,top=1.25in, bottom=1.25in, left=1.00in, right=1.00in]{geometry}
\usepackage{mathtools}														
\mathtoolsset{showonlyrefs=true}									
\newcommand{\be}{\begin{equation}}
\newcommand{\ee}{\end{equation}}
\newcommand{\nee}{\nonumber\end{equation}}
\newcommand{\eel}[1]{\label{#1}\end{equation}}																									
\usepackage{amsthm}																
\theoremstyle{plain}
\newtheorem{theorem}{Theorem}
\newtheorem{proposition}[theorem]{Proposition}	
\newtheorem{corollary}[theorem]{Corollary}	
\theoremstyle{definition}

\newtheorem{remark}[theorem]{Remark}
\theoremstyle{remark}
\numberwithin{equation}{section}	
\numberwithin{theorem}{section}

\definecolor{OrangeRMA}{RGB}{232,81,0}

\definecolor{GreenRMA}{RGB}{51,125,0}

\definecolor{BlueRMA}{RGB}{51,51,153}


%
%



\renewcommand{\(}{\left(}				
\renewcommand{\)}{\right)}
\renewcommand{\[}{\left[}
\renewcommand{\]}{\right]}			


\def\E{\mathbb{E}}			
\def\P{\mathbb{P}}	
							
\def\R{\mathbb{R}}										
\def\I{\mathbb{I}}

					
\def\A{\mathscr{A}}
\def\B{\mathscr{B}}

\def\Ec{\mathscr{E}}
\def\F{\mathscr{F}}

\def\H{\mathscr{H}}
\def\Ic{\mathscr{I}}

\def\Pc{\mathscr{P}}


\def\m{\mathfrak{m}}
\def\s{\mathfrak{s}}


\def\eps{\varepsilon}

\def\om{\omega}
\def\Om{\Omega}
\def\sig{\sigma}

\def\Gam{\Gamma}
\def\lam{\lambda}
\def\del{\delta}


\def\psib{\overline{\psi}}


\def\omh{\widehat{\om}}
\def\Fh{\widehat{F}}


\def\d{\partial}



%
%

\begin{document}

\title{
Variance Swaps on Defaultable Assets and Market Implied Time-Changes
}

\author{
Matthew Lorig
\thanks{ORFE Department, Princeton University.  Work partially supported by NSF grant DMS-0739195.}
\and
Oriol Lozano-Carbass\'e
\thanks{
Bendheim Center for Finance, 26 Prospect Avenue, Princeton University, Princeton, NJ  08540, United States.
}
\and
Rafael Mendoza-Arriaga
\thanks{McCombs School of Business, University of Texas at Austin.}
}

\maketitle

\begin{abstract}
We compute the value of a variance swap when the underlying is modeled as a Markov process time changed by a L\'{e}vy subordinator.  In this framework, the underlying may exhibit jumps with a state-dependent L\'{e}vy measure, local stochastic volatility and have a local stochastic default intensity.  Moreover, the L\'{e}vy subordinator that drives the underlying can be obtained directly by observing European call/put prices.  To illustrate our general framework, we provide an explicit formula for the value of a variance swap when the underlying is modeled as (i) a L\'evy subordinated geometric Brownian motion with default and (ii) a L\'evy subordinated Jump-to-default CEV process (see \citet{carr-linetsky-1}).
{In the latter example, we extend} the results of \cite{mendoza-carr-linetsky-1}, by allowing for joint valuation of credit and equity derivatives as well as variance swaps.
\end{abstract}

%
%

\section{Introduction}
\label{sec:intro}
A \emph{variance swap} (VS) is a forward contract written on the realized variance of an underlying $S=\{S_t, t \geq 0\}$.  As is typical in derivatives literature, we define the \emph{realized variance} over the interval $[0,t]$ as $[\log S]_t$, the continuously sampled quadratic variation of the $\log S$.  Thus, at maturity the VS has a payoff (to the long side) of
\begin{align}
[\log S]_t - K_{var} . \label{eq:VS}
\end{align}
The \emph{variance swap rate} $K_{var}=\E \( [\log S]_t \)$ is determined at inception so that initial value of the VS is zero.  Here, $\E$ is the expectation under the risk-neutral pricing measure $\P$.
\par
There are a number of reasons for which one may wish to enter into a VS agreement.  First, a trader who delta-hedges a short position in a European option can limit his exposure to stochastic volatility risk by trading a VS (see, e.g., \cite{carr-schoutens-1}).  Second, a drop in the level of an underlying $S$ is often accompanied by an increase in the volatility of $S$ (the leverage effect).  Thus, a long position in a VS serves as protection against a market crash.  Such is the demand for VSs that, according to \cite{jung-1}, the daily trading volume in equity index VSs reached 45 million USD vega notional in 2006 (vega measures the change in an option's price caused by changes in volatility).  On an annual basis, this corresponds to payments of more than 1 billion USD per percentage point of volatility (see \cite{carr-lee-2}). 
\par
At the level of individual stocks there is yet another source for the leverage effect that is related to the default risk associated with the underlying firm. The \emph{credit-related leverage effect} explains the interaction between market risk (return variance) and credit risk (default arrival). For instance, \cite{cremers-driessen-maenhout-weinbaum-1} show that CDS rates are correlated with both stock option implied volatility levels and the at-the-money slope of implied volatility.
Similar results can be found in \cite{Consigli-1}. \cite{carr-wu-2} study the interaction between the pricing of stock options and the pricing of credit default swaps. Specifically, they regress CDS spreads on stock option implied volatilities for four companies and find $R^2$'s ranging from 36-82\%. More recently, \cite{Chang-Hung-Tsai-1} verify not only that, on average, the returns in the CDS and stock markets are negatively correlated, but also that this correlation is higher (i.e., more negative) for high yield firms than it is for investment grade firms. This suggests that the credit-leverage effect is even stronger for companies with lower credit ratings.  In summary, when the probability of default of a firm increases, its stock price tends to lose value while the option-implied volatility increases. In such a situation, it is common for investors to look for credit protection by taking long positions in deep out-of-the-money puts (see, e.g., \cite{Angelos-1} and \cite{carr-wu-2,carr-wu-3}). However, there is a side-effect in this hedging strategy. Though the investor's 
 objective is to minimize credit exposure, this strategy will increase his exposure to volatility due to the credit-equity leverage effect. This fact could also be seen from the turmoil experienced in the variance swap market when the stock market acutely dropped during the credit crisis of 2008-9 (see, e.g., \cite{Filipovic-Gourier-Mancini-1}).  Hence, to summarize, in the pricing of variance swaps written on an individual stock, it is especially important to take into account default risk (and the credit-related leverage effect).
\par
In his seminal paper, \cite{neuberger-1} showed that, when the underlying $S$ is modeled by a process with \emph{continuous} sample paths, $dS_t = \sig_t S_t dW_t$, (for simplicity, assume the risk-free rate of interest is zero) the fair value of $K_{var}$ is given by a European-style $\log$ contract $K_{var} =	\E[ -2 \log (S_t/S_0)]$.  Later, \cite{carr-madan-6} showed that the $\log$ contract (or in fact, any European-style derivative with a twice differentiable payoff) can be synthesized from a linear combination of calls and puts.\footnote{In fact, the replication result of \cite{carr-madan-6} is independent of the continuity assumption of the price process.}
Thus, when $t$-expiry calls and/or puts are available at every $K \in (0,\infty)$, the fair value of $K_{var}$ is uniquely determined by the implied volatility smile.  Remarkably, this result is independent of any assumption about the volatility process $\sig = \{ \sig_t , t \geq 0 \}$.  However, this result does rely on the continuity of sample paths of $S$.  The events of the recent (and ongoing) financial crisis underscore the need to include jumps in the underlying price process $S$.  The question naturally arises then: what is the fair value of $K_{var}$ when the price process $S$ is allowed to have discontinuous sample paths?
\par
One possible answer to this question is given by \cite{carr-lee-wu-1}, who show that, when $\log S$ is modeled as a L\'{e}vy process time-changed by an absolutely continuous time-change, the fair value of $K_{var}$ is equal to a multiplier $Q_X$ times the value of a log contract $K_{var}= Q_X \E \[ - \log \( S_t/S_0 \) \]$.  Interestingly, the multiplier $Q_X$ does \emph{not} depend on the time-change process.  While the value of $K_{var}$ \emph{does} depend on the time-change through the $\log$ contract, since the $\log$ contract can be synthesized by a linear combination of calls, when one has full knowledge of the volatility smile, the value of $K_{var}$ can be determined in the framework of \cite{carr-lee-wu-1} without any knowledge of the time-change process.
\par
This last point cannot be emphasized enough.  A parametric model for the time-change process would leave one open to model risk,  as any misspecification of the time-change parameters would (in general) result in erroneous values of $K_{var}$.  By using knowledge of the volatility smile to construct the value of the $\log$ contract, \cite{carr-lee-wu-1} circumvent the need to parametrically model the time-change process.  Thus, the risk of model misspecification is greatly reduced (though, some model risk still exists, since the multiplier $Q_X$ depends on the choice of a specific L\'{e}vy process).
An alternative and quite interesting approach to the joint pricing of volatility derivatives and index options in the presence of jumps is provided in \cite{cont-kokholm-1}.
\par
In this paper, we consider the class of L\'{e}vy subordinated diffusion processes described in \cite{mendoza-carr-linetsky-1}.  This class of models, like the class considered by \cite{carr-lee-wu-1}, allows for the underlying $S$ to experience both jumps and stochastic volatility.  However, there are a few important differences between these two frameworks.  In \cite{carr-lee-wu-1}, the background process is modeled as a L\'{e}vy process, which naturally includes the possibility of jumps, but does not include stochastic volatility.  Stochastic volatility is added by time-changing the background process with a continuous increasing stochastic clock.  In contrast, in \cite{mendoza-carr-linetsky-1}, the background process is modeled as a diffusion, which may include local stochastic volatility, but does not include jumps.  Jumps are added by time-changing the background process with a L\'{e}vy subordinator.  Additionally, the framework of \cite{mendoza-carr-linetsky-1} allows for the possibility of a default event, whereas the framework of \cite{carr-lee-wu-1} does not.  While default may not be realistic for an index, it is certainly an important consideration for individual stocks as explained above.\footnote{A default event would cause $[ \log S ]_t$ to blow up.  As such, we must amend our definition of a VS to account for this possibility.  We will do this in Section \ref{sec:VS}.}  Finally, in \cite{carr-lee-wu-1}, the ratio of $K_{var}$ to the value of the $\log$ contract $\E \[ - \log \( S_t/S_0 \) \]$ is a constant $Q_X$, which is \emph{independent} of the initial value of the underlying $S_0$.  However, empirical results from \cite{carr-lee-wu-1} indicate that this ratio is not constant.  In the L\'evy subordinated diffusion setting of our paper, the ratio $\frac{K_{var}}{\E \[ - \log \( S_t/S_0 \) \]}$ can, in general, depend on the initial value $S_0$ of the underlying.  The reason for this difference is that L\'evy processes are spatially homogeneous, whereas diffusion processes may have locally-dependent drift and diffusion coefficients.
\par
Despite the differences between these frameworks, they share one desirable feature: when the background process is fixed and one has full knowledge of the volatility smile, the fair value of $K_{var}$ is robust to misspecification of the time-change process.  In \cite{carr-lee-wu-1}, the effect that the time-change has on the value of $K_{var}$ is felt through the $\log$ contract, which is constructed directly from European calls.  In the framework of \cite{mendoza-carr-linetsky-1}, we will show that the L\'{e}vy subordinator can be inferred directly from the volatility smile.  Once the subordinator is obtained, it can be used to compute the fair value of $K_{var}$ (among other things).
\par
The rest of this paper proceeds as follows.  In Section \ref{sec:model} we describe the class of L\'{e}vy subordinated diffusion processes in detail.  In Section \ref{sem.gens.ito} we introduce some mathematical tools, which we shall need to compute the price of a VS.  In Section \ref{sec:VS} we modify the payoff of a VS to account for the possibility that the underlying defaults (i.e., jumps to zero).  We then derive a general expression for the value of the modified VS in the L\'{e}vy subordinated diffusion setting.  In Section \ref{sec:spectral0} we present some important results concerning generalized eigenfunction expansions.  These results will be needed for the diffusion-specific VS computations provided in Section \ref{sec:examples}.  In Section \ref{sec:impliedtime} we show that, by observing European call and put prices written on the underlying, one can uniquely determine (in a non-parametric way) the drift and L\'{e}vy measure of the subordinator that drives the underlying price process.  
In section \ref{sec:calibration} we implement the numerical methods developed in Sections \ref{sec:model}-\ref{sec:impliedtime}.  First, we extract the subordinator driving four different stocks by observing call and put prices.  Then, we then use the extracted subordinator to compute VS rates.
Finally, in Section \ref{sec:conclusion}, we provide some concluding remarks and discuss directions for future research.

%
%

\section{Model}
\label{sec:model}
We assume frictionless markets, no arbitrage, and take an equivalent martingale measure (EMM) $\P$ chosen by the market on a complete filtered probability space $(\Om,\F,\P)$ as given.  All stochastic processes defined below live on this probability space, and all expectations are with respect to $\P$ unless stated otherwise. {As all of the processes considered below are Feller, the natural filtrations generated by these processes are right-continuous.  We shall assume that all the (natural) filtrations defined throughout this Section are augmented to include the $\P$-null sets.}  In particular, this assumption holds for the \emph{subordinate filtration} ${\mathbb F}=\{\F_t,t\geq0\}$, which we describe at the end of this Section.  For simplicity, we assume zero interest rates.  All of our results can be easily extended to include deterministic interest rates.
\par
As in \cite{mendoza-carr-linetsky-1}, we model the stock price dynamics under the risk-neutral pricing measure $\P$  as a stochastic process $S=\{S_t, t\geq0\}$ defined by
\begin{align}
S_t
		&= 	(1-D_{T_t})e^{\rho t}X_{T_t} , &
D_t
		&=	\I_{\{\zeta \leq t\}} , &
X_0
		&=	x , &
T_0
		&=	0 , &
\zeta
		&>	0 .  \label{eq:S}
\end{align}
Here, the background process $X=\{X_t, t \geq 0\}$ is a scalar Feller diffusion, $T=\{T_t, t \geq 0\}$ is a L\'{e}vy subordinator, $\rho$ is a scaling factor, which is needed to ensure that the asset price $S$ is a martingale, and $\zeta$ is a positive random variable, which will be used to model a possible default event of the underlying $S$.  Below, we describe each of the above-mentioned elements in detail.


\textbf{Background Feller process $X$.}
We let $X=\{X_t,t\geq 0\}$ be a time-homogeneous Markov diffusion process, starting from a positive value $X_0=x>0$, which solves a stochastic differential equation (SDE) of the form
\begin{align}
dX_{t}
	&=b(X_t)\, dt+a(X_t)\, dB_{t},
\label{eq:dX}
\end{align}
where
\begin{align}
b(x)&:= [\mu+k(x)]x &
		& \text{and}& 
a(x)&:=\sigma(x) x . \label{eq:drift,vol}
\end{align}
Here, $\sigma(x)$ and $[\mu+k(x)]$ are the state-dependent instantaneous volatility and drift rate, $\mu\in {\mathbb R}$ is a constant, and $B=\{B_t,t\geq 0\}$ is a standard Brownian motion.  We assume that $\sigma(x)>0$ and $k(x)>0$ are Lipschitz continuous on the interval $[\eps,\infty)$ for each $\eps>0$ (i.e., locally Lipschitz), and that $\sigma(x)$ and $k(x)$ remain bounded as $x\rightarrow \infty$. We do not assume that $\sigma(x)$ and $k(x)$ remain bounded as $x\rightarrow 0$. Under these assumptions the process $X$ does not explode to infinity (i.e., infinity is a {\em natural boundary} for the diffusion process; see \cite{borodin-salminen} p.14 for boundary classification of diffusion processes). We also assume that zero is either an (unattainable) natural boundary or an entrance boundary. If zero is a natural boundary the state space is given by $E=(0,\infty)$. If zero is an entrance boundary, i.e., the process $X$ can be started at $x=0$  then it quickly moves to interior of $[0,\infty)$ to never hit zero again. Throughout this document we assume that the process always starts from a positive value $X_0=x>0$, and hence the state space is also defined as $E=(0,\infty)$.
Under all our previous assumptions $X$ is the unique strong solution to the SDE~\eqref{eq:dX}. The transition function $P^0_t(x,dy)={\mathbb P}(X_t\in dy|X_0=x)$ of the diffusion process $X$ started at $x$ defines a Feller semigroup $\Pc^0=\{\Pc_t^0,t\geq 0\}$ acting  
on the space $C([0,\infty])$ 
of functions continuous on $(0,\infty)$ and such that the limits $\lim_{x\rightarrow 0}f(x)$ and $\lim_{x\rightarrow \infty}f(x)$ exist and are finite (Ethier and Kurtz (1986) p.366)  by 
\begin{align}\label{smgn.1}\Pc_t^0f(x)&=\E_x\big[f(X_t)\big]=\int_E f(y) P^0_t(x,dy). \end{align}
The infinitesimal generator of $\Pc^0$ is a second-order differential operator of the form
\begin{align}\label{infg.0} 
\A^0 f(x)&= \frac{1}{2}a^2(x)\frac{d^2 f}{dx^2}(x)+b(x)\frac{df}{dx}(x)
\end{align}
with the domain ${\rm Dom}(\A^0)=\{f\in C([0,\infty])\cap C^2((0,\infty)),\A^0f\in C([0,\infty])\}$ if zero is an inaccessible boundary (natural or entrance).  We also note that the semigroup leaves the space $C_0((0,\infty))\subset C([0,\infty])$ of functions continuous on $(0,\infty)$ and having zero limits  $\lim_{x\rightarrow 0}f(x)=0$ and $\lim_{x\rightarrow \infty}f(x)=0$ invariant and is a Feller semigroup on it.  Lastly, we denote by ${\mathbb F}^X=\{\F_t^X,t\geq0\}$ the completed natural filtration of the process $X$.


\textbf{The trigger event $\zeta$ and the indicator process $D$.} Let $\Ec \sim \text{Exp}(1)$ be an exponential random variable, independent of $X$, and define the \emph{trigger event time} $\zeta$ as 
\begin{align}
\zeta
		&=		\inf \Big\{ t \geq 0 : \int_0^{t} k(X_s) ds \geq \mathscr{E} \Big\}. \label{zeta.1}
\end{align}
That is,  $\zeta$ is the first jump time of a doubly stochastic Poisson process with jump intensity given by the killing rate $k(x)$.  Observe that the killing rate is added in the drift~\eqref{eq:drift,vol} to compensate for the killing (jump-to-default).  This compensation will be needed in order to ensure that the stock price $S$ is a martingale.  Moreover, since the process $X$ cannot reach zero from the interior of the state space, then the underlying stock price process $S$ cannot go to zero continuously, rather, it may only jump to zero from a strictly positive value. We denote 
by $\Pc^1=\{\Pc^1_t,t\geq0\}$ the Feynman-Kac semigroup associated with the killing rate $k(X)$:
\begin{align} 
\Pc^1_tf(x)&=\E_x\big[e^{-\int_0^t k(X_u)du}f(X_t)\big]=\int_E f(y) P^1_t(x,dy),\quad t\geq0.\label{fk.1}
\end{align}
$\Pc^1$ is a (sub-Markovian) Feller semigroup on $C([0,\infty])$
with the generator
\begin{align}
\A^1 f(x)&=\A^0f(x)-k(x)f(x)
\label{infg.1}\end{align}
with the domain ${\rm Dom}(\A^1)\subseteq {\rm Dom}(\A^0)$. More precisely,  ${\rm Dom}(\A^1)=\{f\in C([0,\infty])\cap C^2((0,\infty)),\A^1 f\in C([0,\infty])\}$. Since zero is an inaccessible boundary (natural or entrance) of the {\em diffusion with killing at the rate} $k(x)$ it suffices to restrict the domain to  ${\rm Dom}(\A^1)=\{f\in C_0([0,\infty])\cap C^2((0,\infty)),\A^1 f\in C_0([0,\infty])\}$ whenever we start the process from the interior of $E$, i.e., $x>0$ (cf. \cite{borodin-salminen} pp.15ff). 

Since the random variable $\zeta$ is not $\F^X_\infty$-measurable, we  introduce an \emph{indicator process} $D=\{D_t, t \geq 0\}$ in order to keep track of the event $\{\zeta \leq t\}$. The indicator process $D$ is defined by
\begin{align}
D_t
		&=		\I_{\{\zeta \leq t\}}  . \label{eq:D}
\end{align}
Lastly, we denote by ${\mathbb F}^D=\{\F^D_t,t\geq0\}$ the (completed) natural filtration of the process $D$. 


\textbf{The auxiliary bivariate process  $(X,D)$.}  As in \cite{linetsky-mendoza-arriaga-3} we define the auxiliary bivariate process $(X,D)=\{(X_t, D_t),t\geq0\}$ with state space $\tilde{E}=E\times\{0,1\}$. The process $(X,D)$ is a Feller semimartingale in the enlarged filtration ${\mathbb F}^{X,D}=\{\F_t^{X,D},t\geq0\}$ with $\F_t^{X,D}=\F_t^X\vee\F_t^D$ (${\mathbb F}^{X,D}$ is the smallest filtration that contains ${\mathbb F}^X$ and in which $\zeta$ is a stopping time). Moreover, any function $f(x,d)\in C([0,\infty]\times\{0,1\})$ is represented as,
\begin{align}\label{func.f.1}
f(x,d)
	&=  u(x)(1-d)+v(x)d 
	=  (u-v)(x)(1-d)+v(x) , &u,v&\in C([0,\infty]).
\end{align}
As we shall see below, the function $u(x)=f(x,0)$ can be interpreted as a promised payoff function if the triggering event $\zeta$ {\em does not} occur by time $t>0$, while $v(x)=f(x,1)$ can be understood as a recovery payoff function if the triggering event occurs prior time $t>0$. 

The following theorem gives the Markovian characterization of the bivariate process $(X,D)$.
\begin{theorem}\label{th1.1} 
(i) The bi-variate process $(X,D)$ is a Feller process with the Feller semigroup $\Pc=\{\Pc_t,t\geq 0\}$ acting on $f\in C([0,\infty]\times \{0,1\})$ according to:
\be
\Pc_t f(x,d)=\Pc^0_tv(x)+(1-d)\Pc^1_t(u-v)(x),\eel{semi.bi.1}
where $u,v\in C([0,\infty])$, $\Pc^0$ is the Feller semigroup of the process $X$~\eqref{smgn.1} and $\Pc^1$ is the corresponding Feynman-Kac semigroup~\eqref{fk.1}.

(ii) The infinitesimal generator of the Feller semigroup $\Pc$ is given by
\be
\A f(x,d)=\A^0 f(x,d) +k(x)(f(x,1)-f(x,d))
\eel{semigdelt.1.0}
\be=\A^0 v(x)+(1-d)\A^1(u-v)(x),\eel{semigdelt.1}
where
$\A^0$ and $\A^1$ are the generators of $\Pc^0$  and  $\Pc^1$, respectively.

(iii) If $f(x,d)\in {\rm Dom}(\A)$ (i.e., $f$ is of the form \eqref{func.f.1} with $u,v\in  {\rm Dom}(\A^1)$) and $(X,D)$ is the bi-variate process with $X_0=x>0$ and $D_0=d\in \{0,1\}$, then the process 
\be
M^f_t:=f(X_t,D_t)-f(x,d)-\int_0^t \A f(X_s,D_s)ds
\eel{gen.uno.b}
is an ${\mathbb F}^{X,D}$-martingale.
\end{theorem}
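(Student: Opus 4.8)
For part (i) the plan is to first establish the probabilistic representation $\Pc_t f(x,d)=\E_{(x,d)}[f(X_t,D_t)]$ and match it with \eqref{semi.bi.1}, then verify the Feller-semigroup properties. Write $f(x,d)=u(x)(1-d)+v(x)d$ as in \eqref{func.f.1}. If $d=1$ then $D$ is identically $1$, so $f(X_t,D_t)=v(X_t)$ and $\E_x[v(X_t)]=\Pc^0_t v(x)$, which is \eqref{semi.bi.1} since $(1-d)=0$. If $d=0$ then $\zeta>0$ and $\E_x[f(X_t,D_t)]=\E_x[u(X_t)\I_{\{\zeta>t\}}]+\E_x[v(X_t)\I_{\{\zeta\le t\}}]$. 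Conditioning on $\F^X_\infty$ and using that $\Ec\sim\text{Exp}(1)$ is independent of $X$, the definition \eqref{zeta.1} yields $\P(\zeta>t\mid\F^X_\infty)=\P(\Ec>\int_0^t k(X_s)\,ds\mid\F^X_\infty)=e^{-\int_0^t k(X_s)\,ds}$; hence $\E_x[u(X_t)\I_{\{\zeta>t\}}]=\E_x[u(X_t)e^{-\int_0^t k(X_u)\,du}]=\Pc^1_t u(x)$, and $\E_x[v(X_t)\I_{\{\zeta\le t\}}]=\E_x[v(X_t)]-\E_x[v(X_t)\I_{\{\zeta>t\}}]=\Pc^0_t v(x)-\Pc^1_t v(x)$. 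Adding these and using $(1-d)=1$ gives $\Pc^0_t v(x)+\Pc^1_t(u-v)(x)$, which is \eqref{semi.bi.1}.

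To see that $\{\Pc_t\}$ is a Feller semigroup I would argue that $(X,D)$ is Markov in ${\mathbb F}^{X,D}$: by the memorylessness of $\Ec$, conditionally on $\F^{X,D}_s$ and on $\{\zeta>s\}$ the residual $\Ec-\int_0^s k(X_u)\,du$ is again $\text{Exp}(1)$ and independent of the post-$s$ increments of $X$, so the conditional law of $(X_{s+\cdot},D_{s+\cdot})$ depends only on $(X_s,D_s)$; this gives Chapman--Kolmogorov for $\{\Pc_t\}$. Positivity and the contraction property are immediate from the representation, while the facts that $\Pc_t$ maps $C([0,\infty]\times\{0,1\})$ into itself and is strongly continuous follow from \eqref{semi.bi.1} and the corresponding properties of $\Pc^0$ and $\Pc^1$ on $C([0,\infty])$, upon identifying $C([0,\infty]\times\{0,1\})$ with pairs $(u,v)$ of elements of $C([0,\infty])$.

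For part (ii), differentiating \eqref{semi.bi.1} at $t=0$ gives, for $v\in{\rm Dom}(\A^0)$ and $u-v\in{\rm Dom}(\A^1)$, the uniform limit $\frac{d}{dt}\big|_{t=0}\Pc_t f(x,d)=\A^0 v(x)+(1-d)\A^1(u-v)(x)$, which is \eqref{semigdelt.1}. Writing $\A^1=\A^0-k$, using $f=v+(1-d)(u-v)$ (so that $\A^0 v+(1-d)\A^0(u-v)=\A^0 f$) and $f(x,1)-f(x,d)=-(1-d)(u-v)(x)$, this rearranges into $\A^0 f(x,d)+k(x)(f(x,1)-f(x,d))$, i.e.\ \eqref{semigdelt.1.0}. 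To pin down ${\rm Dom}(\A)$ I would compute the resolvent $R_\lambda f(x,d)=\int_0^\infty e^{-\lambda t}\Pc_t f(x,d)\,dt=R^0_\lambda v(x)+(1-d)R^1_\lambda(u-v)(x)$, with $R^i_\lambda$ the resolvent of $\Pc^i$, and use that ${\rm Dom}(\A)$ is the range of $R_\lambda$; comparing ranges identifies ${\rm Dom}(\A)$ with the class \eqref{func.f.1} subject to $u,v\in{\rm Dom}(\A^1)$. This is the step I expect to be the main obstacle: it requires care with the boundary at $0$, since when $k$ is unbounded as $x\to0$ membership in ${\rm Dom}(\A^1)$ effectively forces a vanishing condition there, so one must check that the range of $R_\lambda$ really consists of functions with both $u$ and $v$ in ${\rm Dom}(\A^1)$ and not merely $v\in{\rm Dom}(\A^0)$.

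Finally, part (iii) is the standard Dynkin martingale associated with a Feller generator. For $f\in{\rm Dom}(\A)$ one has the semigroup identity $\Pc_t f-f=\int_0^t \Pc_s\A f\,ds$; combining it with the Markov property of $(X,D)$ in ${\mathbb F}^{X,D}$ from part (i), for $s<t$ one gets $\E[M^f_t-M^f_s\mid\F^{X,D}_s]=\Pc_{t-s}f(X_s,D_s)-f(X_s,D_s)-\int_0^{t-s}\Pc_r\A f(X_s,D_s)\,dr=0$, so $M^f$ in \eqref{gen.uno.b} is an ${\mathbb F}^{X,D}$-martingale. Integrability is automatic because $f$ and $\A f$ are bounded (being continuous on the compact space $[0,\infty]\times\{0,1\}$), and adaptedness holds since ${\mathbb F}^{X,D}$ is the completed natural filtration of $(X,D)$.
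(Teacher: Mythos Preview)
The paper does not actually prove this theorem: its entire proof reads ``The proof can be found in \cite{linetsky-mendoza-arriaga-3}.'' So there is no in-paper argument to compare against. Your proposal is the natural and correct route: compute $\E_{(x,d)}[f(X_t,D_t)]$ by splitting on $d$ and conditioning on $\F^X_\infty$ to exploit the independence and exponential distribution of $\Ec$, read off the generator from the semigroup formula, and invoke Dynkin for the martingale property. This is precisely the standard argument one would expect the cited reference to contain.

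Your self-flagged concern about ${\rm Dom}(\A)$ is well placed but not fatal. From \eqref{semi.bi.1} the resolvent indeed decomposes as $R_\lambda f(x,d)=R^0_\lambda v(x)+(1-d)R^1_\lambda(u-v)(x)$, so the range is $\{V+(1-d)W:\,V\in{\rm Dom}(\A^0),\,W\in{\rm Dom}(\A^1)\}$; the identification with pairs $u,v\in{\rm Dom}(\A^1)$ stated in part (iii) then uses the paper's observation (just below \eqref{infg.1}) that, for the process started from $x>0$, one may restrict to $C_0$ where ${\rm Dom}(\A^0)$ and ${\rm Dom}(\A^1)$ coincide in the relevant sense. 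In any case, for part (iii) you only need ${\rm Dom}(\A^1)\subset{\rm Dom}(\A)$, which your resolvent computation already gives, so the Dynkin-martingale conclusion stands regardless of how the full domain is pinned down.
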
 
\begin{proof} The proof can be found in \cite{linetsky-mendoza-arriaga-3}.
\end{proof}
Similarly, the Doob-Meyer decomposition of $D$ is given in the following corollary.
\begin{corollary}
 The increasing process $D$ has the Doob-Meyer decomposition
\be 
D_t=M_t+A_t
\eel{doobs1.1}
with the predictable ${\mathbb F}^{X,D}$-compensator  $A_t$ and  ${\mathbb F}^{X,D}$-martingale $M_t$
\be 
A_t=\int_0^t k(X_u)(1-D_u)du,\quad M_t=D_t-\int_0^t k(X_u)(1-D_u)du.
\eel{doobs1.2}
\end{corollary}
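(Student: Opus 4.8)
The plan is to read the decomposition off Theorem~\ref{th1.1}(iii) and then invoke uniqueness of the Doob--Meyer decomposition. First I would observe that $D_t=\I_{\{\zeta\le t\}}$ is adapted, integrable and nondecreasing, hence an ${\mathbb F}^{X,D}$-submartingale, and since it is bounded by $1$ it is of class (D); thus it admits a \emph{unique} decomposition $D=M+A$ into an ${\mathbb F}^{X,D}$-martingale $M$ and a predictable nondecreasing process $A$, both null at $0$. The candidate $A_t=\int_0^t k(X_u)(1-D_u)\,du$ is continuous (hence predictable), nondecreasing (since $k>0$ and $1-D_u\in\{0,1\}$) and null at $0$, so the whole task reduces to showing that $M_t:=D_t-A_t$ is an ${\mathbb F}^{X,D}$-martingale; uniqueness then forces $(M,A)$ to coincide with the processes in \eqref{doobs1.2}.

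To produce that martingale I would apply Theorem~\ref{th1.1}(iii) to the payoff $f(x,d)=d$, i.e.\ $u\equiv 0$ and $v\equiv 1$ in the representation \eqref{func.f.1}. Then \eqref{semigdelt.1}, together with $\A^0 1=0$ and $\A^1(-1)=-\A^0 1+k=k$, gives $\A f(x,d)=(1-d)\,k(x)$, so that, using $D_0=0$ and $f(X_t,D_t)=D_t$, the process $M^f_t=f(X_t,D_t)-f(x,0)-\int_0^t\A f(X_s,D_s)\,ds$ furnished by the theorem is exactly $D_t-\int_0^t k(X_s)(1-D_s)\,ds=M_t$.

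The main obstacle is that $v\equiv 1$ does not literally lie in ${\rm Dom}(\A^1)$: it violates the zero-limit condition at the boundaries and, more to the point, $\A^1 1=-k$ need not extend continuously to $x=0$ since $k$ is not assumed bounded near $0$. I would repair this by localization. Set $\tau_n=\inf\{t\ge0:X_t\notin[1/n,n]\}$; because $0$ is inaccessible (natural or entrance) and $\infty$ is a natural boundary, $X$ neither reaches $0$ nor explodes, so $\tau_n\uparrow\infty$ a.s. Pick $v_n\in{\rm Dom}(\A^1)$ equal to $1$ on $[1/n,n]$ and vanishing off $[\tfrac{1}{2n},2n]$, set $f_n(x,d)=v_n(x)\,d$, and apply Theorem~\ref{th1.1}(iii) to $f_n$; since $\A^0$ and $\A^1$ are local differential operators, $f_n(X_{t\wedge\tau_n},D_{t\wedge\tau_n})=D_{t\wedge\tau_n}$ and $\A f_n(X_s,D_s)=(1-D_s)k(X_s)$ for $s\le\tau_n$, so the stopped process $M^{f_n}_{t\wedge\tau_n}$ equals $M_{t\wedge\tau_n}$, showing that $M$ is an ${\mathbb F}^{X,D}$-local martingale. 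To upgrade it to a true martingale I would note that stopping at level $n$ gives $\E_x[A_{t\wedge\tau_n}]=\E_x[D_{t\wedge\tau_n}]\le1$, whence $\E_x[A_t]\le1$ by monotone convergence, so that $\sup_n|M_{t\wedge\tau_n}|\le 1+A_t\in L^1$ and dominated convergence completes the argument. As an alternative that bypasses Theorem~\ref{th1.1} entirely, one may quote the classical description of the ${\mathbb F}^{X,D}$-compensator of the single-jump point process $D$ whose jump time $\zeta$ is, by \eqref{zeta.1}, the first jump of a Cox process with ${\mathbb F}^X$-intensity $k(X_\cdot)$, which yields $D_t-\int_0^{t\wedge\zeta}k(X_u)\,du=D_t-\int_0^t k(X_u)(1-D_u)\,du$ as an ${\mathbb F}^{X,D}$-martingale directly.
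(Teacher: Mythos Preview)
Your argument is correct, but it is considerably more elaborate than what the paper does. The paper's proof is a one-line reference: it simply cites Section~5.1.4 of Bielecki--Rutkowski and Lemma~7.3.4.3/Comment~7.3.4.4 of Jeanblanc--Yor--Chesney, which record the standard fact that the default indicator of a Cox time with ${\mathbb F}^X$-intensity $k(X_\cdot)$ has predictable compensator $\int_0^{t\wedge\zeta}k(X_u)\,du$ in the progressively enlarged filtration. This is precisely the ``alternative'' you sketch in your last sentence.

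Your main route is genuinely different: you manufacture the martingale from Theorem~\ref{th1.1}(iii) by taking $f(x,d)=d$ and then localize to cope with the fact that $v\equiv 1$ lies outside ${\rm Dom}(\A^1)$. The computation $\A f_n(x,d)=d\,\A^0 v_n(x)+(1-d)k(x)v_n(x)$ indeed collapses to $(1-d)k(x)$ on $[1/n,n]$ (since $v_n\equiv 1$ there forces $\A^0 v_n=0$), and your passage from local to true martingale via $\sup_n|M_{t\wedge\tau_n}|\le 1+A_t$ with $\E_x[A_t]\le 1$ is clean. This buys you a self-contained derivation internal to the paper's own Theorem~\ref{th1.1}, at the cost of a localization step the paper avoids by appealing to an external, but entirely standard, result.
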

\begin{proof} See Section 5.1.4 in \cite{bielecki-rutkowski-1} (also, Lemma 7.3.4.3 and Comment 7.3.4.4 in \cite{jeanblanc-yor-chesney-1}).\end{proof}


\textbf{The L\'{e}vy subordinator $T$.}
A \emph{L\'{e}vy subordinator} $T=\{T_t,t\geq0\}$ is a L\'{e}vy process with positive jumps and non-negative drift.  For a standard reference on subordinators we refer the reader to \cite{bertoin-4}.  We require that $T$ be independent of $X$ and $\Ec$ and satisfy $T_0=0$.  Every L\'{e}vy subordinator has the following \emph{It\^{o}-L\'{e}vy decomposition}
\begin{align}
dT_t
		&=		\gamma \, dt + \int_{(0,\infty)} s \, dN_t(ds) ,
\end{align}
where $\gamma \geq 0$ is the \emph{drift} of the subordinator and $N$ is a \emph{Poisson random measure} with the property that, for any Borel set $A \in \B((0,\infty))$ we have $\E \[ dN_t(A) \]=\nu(A) dt$ for some $\sig$-finite measure $\nu$ on $\B((0,\infty))$.  The measure $\nu$, which must satisfy $\int_0^\infty \( s \wedge 1 \) \nu(ds) < \infty$, is referred to as the \emph{L\'{e}vy measure}.  The Laplace transform of a L\'{e}vy subordinator $T$ is given by
\begin{align}
\E\[ e^{\lam T_t} \]
		&=	\int_0^\infty e^{\lam s}\pi_t(ds)=		e^{- t \phi(-\lam)} , &
\lam
		&\in	\left\{ \lam \in \R : \int_{[1,\infty)} e^{\lam s}\nu(ds)<\infty \right\} =: \Ic ,		\label{eq:set}
\end{align}
where $\pi_t(ds)$ is the transition function of the subordinator $T$ and $\phi(\lam)$ is the \emph{Laplace exponent} of $T$, which can be computed explicitly from the \emph{L\'{e}vy-Kintchine formula}
\begin{align}
\phi(\lam)
		&=		\gamma \, \lam + \int_{(0,\infty)} \(1 - e^{-\lam s} \) \nu(ds) . \label{eq:LevyKintchine}
\end{align}
Note that the Laplace exponent $\phi(\lam)$ is concave and increasing and satisfies $\phi(0)=0$ (see \cite{bertoin-1}, page 73).  For any Borel set $A \in \B((0,\infty))$ the process $\{ N_t(A), t \geq 0\}$ is a Poisson process with intensity $\nu(A) = \int_A \nu(ds)$.  If the arrival rate of all jumps is finite $\alpha:=\nu((0,\infty))<\infty$ then $\int_0^T\int_0^\infty s \, dN_t(ds)$  is a compound Poisson process.  In this case, the distribution of jumps is $\alpha^{-1}\nu$.  If $\nu((0,\infty))=\infty$ then the subordinator is said to be an \emph{infinite activity} subordinator.
\par
We define $L=\{L_t,t\geq 0\}$ the {\em first passage process} or {\em right inverse process} of $T$ as $L_t:=\inf\{s:  T_s > t\}$.  Recall that $T$ is assumed to be {\em independent} of $\F^{X,D}_\infty$.   Therefore, $L$ is independent of $\F^{X,D}_\infty$ as well. We let ${\mathbb F}^L= \{\F^L_t,t\geq0\}$ be the natural filtration of the process $L$.


{\bf The subordinate filtration ${\mathbb F}$, the auxiliary subordinate bivariate process $(X^\phi,D^\phi)$, and the default  time $\zeta^\phi$.}  Recall from our above discussion that ${\mathbb F}^X$, ${\mathbb F}^D$ and ${\mathbb F}^L$ correspond to the filtrations generated by the processes $X$, $D$, and $L$, respectively, and that the filtration ${\mathbb F}^{X,D}$ is the smallest filtration that contains ${\mathbb F}^X$ and in which $\zeta$ is an ${\mathbb F}^{X,D}$-stopping time. Similarly, we can define the filtration ${\mathbb F}^G=\{\F^G_t,t\geq0\}$, with $\F^G_t =\F^{X,D}_t\vee\F^L_t$, to be the smallest filtration that contains ${\mathbb F}^{X,D}$ and in which $T$ is an increasing family of ${\mathbb F}^G$-stopping times. Then the {\em subordinate filtration} ${\mathbb F}=\{\F_t,t\geq0\}$ is constructed by time-changing the filtration ${\mathbb F}^G$ with the L\'{e}vy subordinator $T$, i.e., $\F_t=\F^G_{T_t}$. Observe that since $T$ is an ${\mathbb F}^G$-stopping time, then ${\mathbb F}$ is the filtration containing all of the information of the bivariate process $(X,D)$ prior to the stopping time $T$. Consequently, one can define the {\em subordinate bivariate process} $(X^\phi,D^\phi):=\{(X_{T_t},D_{T_t}),t \geq 0 \}$ by time-changing the bivariate process $(X,D)$ with the subordinator $T$. This transformation is called {\em Bochner's subordination} due to work on subordination of semigroups and their generators by \cite{bochner-1}.
The subordinate process $(X^\phi,D^\phi)$  is a Feller ${\mathbb F}$-semimartingale (see \cite{linetsky-mendoza-arriaga-3}).
\par
From \eqref{eq:D} we observe that, before subordination, the indicator process $D$ satisfies $D_t=1$ for all $t\geq\zeta$. On the other hand, after subordination, the subordinate indicator process $D^\phi$ satisfies $D^\phi_t=1$ for all $t \geq 0$ for which $T_t \geq \zeta$ (i.e., $D^\phi_t=\I_{\{T_t\geq\zeta\}}$). Therefore, in the credit-equity context one can define the {\em default time}  $\zeta^\phi$ by
\begin{align}
\zeta^\phi&= L_{\zeta-},&L_{a-}&=\inf\{t: T_t \geq a\}. \label{deftime1}
\end{align}
Certainly, $\zeta^\phi$ is the first passage time process of $T$ across the level $\zeta$ and the identity $\{\zeta^\phi\leq t\}=\{T_t\geq\zeta\}$ holds (see Section 5.1.1 in \cite{jeanblanc-yor-chesney-1}). Hence, $D^\phi_t=\I_{\{T_t\geq\zeta\}}=\I_{\{\zeta^\phi\leq t\}}$ is the {\em default indicator process}. The characterization of the subordinate process $(X^\phi,D^\phi)$ is provided below in Section~\ref{sem.gens.ito}.


\textbf{The stock price $S$ and the scaling constant $\rho$.} From Eq.~\eqref{eq:S} we observe that the dynamics of the stock price $S$ can be described by means of the subordinate bivariate process $(X^\phi,D^\phi)$. Indeed, the stock price $S$ can be seen as a function $f(t,X^\phi_t,D^\phi_t)\in C([0,\infty)\times\tilde{E})$ which is decomposed according to~\eqref{func.f.1} with the payoff function $u(t,x)=e^{\rho t}x$ if no default occurs by time $t\geq0$, and zero recovery $v(t,x)=0$ if the firm defaults prior to time $t\geq0$.  That is,  

\begin{align}
S_t &= e^{\rho t} X^\phi_t(1-D_t^\phi). \label{eq:S2}
\end{align}
The scaling constant $\rho$ is introduced to ensure that the asset $S$ is an ${\mathbb F}$-martingale.  As shown in \cite{mendoza-carr-linetsky-1}, $S$ will be a martingale if and only if $\rho = \phi(-\mu)$ and $\mu \in \Ic$, where the set $\Ic$ is defined in Eq.~\eqref{eq:set}. That is, assuming zero interest rates, $\E[S_t]<\infty$ for every $t\geq0$, and $\E[S_{t_2}|\F_{t_1}]=S_{t_1}$ for every $t_1<t_2$. From the previous condition in $\rho$ we are free to choose any value of $\mu$ as long as $\mu\in\Ic$. Hence, from this point onward we assume that $\mu\in\Ic$. Observe that the underlying assumption $v(x)=0$ is equivalent to modeling the stock price $S$ under {\em absolute priority}, which means that the stock holder has zero recovery in the event of default.

%
%


\section{Markovian and Semimartingale Characterization of the Subordinate Process $(X^\phi,D^\phi)$}
\label{sem.gens.ito}
Before proceeding with the calculation of the quadratic variation of the $\log$ price process $\log S$ it is essential to describe the characteristics of the underlying stock process $S$ in terms of the subordinate bivariate process $(X^\phi,D^\phi)$. For convenience, we summarize some of the key results of Section 3 in \cite{linetsky-mendoza-arriaga-3} who give the Markovian and semimartingale characterization of the process $(X^\phi,D^\phi)$. We refer the reader to \cite{linetsky-mendoza-arriaga-3} for the corresponding proofs.

We begin by recalling some key results about subordination (in the sense of \cite{bochner-1}) of semigroups of operators in Banach spaces. The expression for the generator is due to \cite{phillips-1}.
\begin{theorem}\label{phillips.theo}{\bf (\cite{phillips-1})} Let $T $ be a subordinator with L\'{e}vy measure $\nu$, drift $\gamma\geq0$, Laplace exponent $\phi(\lambda)$, and transition function $\pi_t(ds)$.  Let $\Pc$ be a strongly continuous contraction semigroup of linear operators on a Banach space ${\mathfrak B}$ with infinitesimal generator $\A$. 

$(i)$ Define 
\begin{align}\Pc^\phi_tf(x)&=\int_{[0,\infty)} \Pc_s f(x)\pi_t(ds),\quad t\geq0,\,\, f\in{\mathfrak B}.\label{subsem.1}\end{align}
Then $\Pc^\phi=\{\Pc^\phi_t,t\geq0\}$ is a strongly continuous contraction semigroup of linear operators on ${\mathfrak B}$ called {subordinate semigroup of $\Pc$ with respect to the subordinator} $T $.

$(ii)$ Denote the infinitesimal generator of $\Pc^\phi$  by $\A^\phi$. Then the domain of $\A$ is a core of $\A^\phi$ and
\begin{align} \A^\phi f& = \gamma \A f + \int_{(0,\infty)} (\Pc_s f-f)\nu(ds), \quad f\in {\rm Dom}(\A).
\label{subgen.1}\end{align}

$(iii)$ Moreover, if $\Pc$ is a Feller semigroup on $C([0,\infty])$, then the subordinate semigroup $\Pc^\phi$ is also a Feller semigroup on $C([0,\infty])$.
\end{theorem}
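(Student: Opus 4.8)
The plan is to establish the three claims in order, treating (i) and (iii) as routine bookkeeping and concentrating effort on (ii). For (i) I would first check that the Bochner integral in \eqref{subsem.1} makes sense: strong continuity of $\Pc$ makes $s\mapsto\Pc_sf$ bounded and norm-continuous, hence strongly $\pi_t$-integrable, and since $\pi_t$ is a probability measure and $\|\Pc_s\|\le1$ the estimate $\|\Pc^\phi_tf\|\le\|f\|$ is immediate, so each $\Pc^\phi_t$ is a contraction. The semigroup law $\Pc^\phi_t\Pc^\phi_u=\Pc^\phi_{t+u}$ I would deduce from the two structural identities $\Pc_s\Pc_r=\Pc_{s+r}$ and $\pi_{t+u}=\pi_t*\pi_u$ (the latter being exactly the statement that $T$ has stationary independent increments) together with a Fubini interchange. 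Strong continuity at $0$ follows by splitting $\|\Pc^\phi_tf-f\|\le\int_{[0,\infty)}\|\Pc_sf-f\|\,\pi_t(ds)$ into a neighbourhood of $0$, where $\|\Pc_sf-f\|$ is uniformly small by strong continuity of $\Pc$, and its complement, whose $\pi_t$-mass tends to $0$ because $T_t\to0$ as $t\downarrow0$; the general case $t>0$ then follows from the semigroup property.

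For (ii) I would first record that for $f\in{\rm Dom}(\A)$ the fundamental theorem of calculus for semigroups gives $\Pc_sf-f=\int_0^s\Pc_r\A f\,dr$, hence $\|\Pc_sf-f\|\le\min\{2\|f\|,\,s\|\A f\|\}\le C(s\wedge1)$ with $C:=\max\{2\|f\|,\|\A f\|\}$; combined with $\int_{(0,\infty)}(s\wedge1)\,\nu(ds)<\infty$ this shows the right-hand side of \eqref{subgen.1} is an absolutely convergent Bochner integral defining an element of ${\mathfrak B}$. I would then write $\tfrac1t(\Pc^\phi_tf-f)=\tfrac1t\int_{[0,\infty)}(\Pc_sf-f)\,\pi_t(ds)$ and pass to the limit $t\downarrow0$ using the convergence lemma described in the next paragraph, which with $\Phi(s)=\Pc_sf-f$ (so that $\Phi'(0^+)=\A f$) yields ${\rm Dom}(\A)\subseteq{\rm Dom}(\A^\phi)$ together with \eqref{subgen.1}. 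To upgrade ``$\subseteq$'' to ``core'', I would note that ${\rm Dom}(\A)$ is dense in ${\mathfrak B}$ and $\Pc^\phi_t$-invariant: since $\A$ is closed and commutes with each $\Pc_s$, and both $s\mapsto\Pc_sf$ and $s\mapsto\Pc_s\A f$ are Bochner $\pi_t$-integrable, the closedness of $\A$ gives $\Pc^\phi_tf\in{\rm Dom}(\A)$ with $\A\Pc^\phi_tf=\Pc^\phi_t\A f$; and a dense, semigroup-invariant subspace of ${\rm Dom}(\A^\phi)$ is automatically a core.

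The step I expect to be the main obstacle is the convergence lemma: for every bounded, norm-continuous $\Phi:[0,\infty)\to{\mathfrak B}$ with $\Phi(0)=0$, right-differentiable at $0$, and $\|\Phi(s)\|\le C(s\wedge1)$, one should have $\tfrac1t\int_{[0,\infty)}\Phi(s)\,\pi_t(ds)\to\gam\,\Phi'(0^+)+\int_{(0,\infty)}\Phi(s)\,\nu(ds)$ as $t\downarrow0$. Conceptually this is the generator of the subordinator $T$ (as a Feller process on $[0,\infty)$ with Laplace exponent $\phi$) evaluated at the origin. I would prove it by first checking the scalar test functions $\Phi(s)=e^{-\lam s}-1$, $\lam>0$, where it reduces to differentiating $\int_{[0,\infty)}e^{-\lam s}\pi_t(ds)=e^{-t\phi(\lam)}$ at $t=0$ and matching $-\phi(\lam)$ against \eqref{eq:LevyKintchine} (here $\Phi'(0^+)=-\lam$); then extending by linearity to finite combinations of such functions; and finally extending to the stated class by a localisation argument---near $0$ using $\Phi(s)=\Phi'(0^+)s+o(s)$ together with $t^{-1}\E[T_t;\,T_t<\del]\to\gam+\int_{(0,\del)}s\,\nu(ds)$ to produce the drift term $\gam\Phi'(0^+)$ as $\del\downarrow0$, on a bounded middle range using the vague convergence $t^{-1}\P(T_t\in\cdot)\to\nu$ on $(0,\infty)$, and on the tail using $\|\Phi(s)\|\le C$ with $\nu([1,\infty))<\infty$. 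The ${\mathfrak B}$-valued statement I would reduce to the scalar one by composing with functionals $\ell\in{\mathfrak B}^*$ and invoking the uniform bound $C(s\wedge1)$.

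Finally, for (iii), with ${\mathfrak B}=C([0,\infty])$ and $\Pc$ Feller: strong continuity and contractivity of $\Pc^\phi$ are already furnished by (i); positivity is inherited, since $\Pc_sf\ge0$ whenever $f\ge0$ and $\pi_t$ is a positive measure; and $\Pc^\phi_t$ maps $C([0,\infty])$ into itself because it is a Bochner integral of the $C([0,\infty])$-valued map $s\mapsto\Pc_sf$ against $\pi_t$ and $C([0,\infty])$ is norm-closed. Hence $\Pc^\phi$ is a Feller semigroup on $C([0,\infty])$.
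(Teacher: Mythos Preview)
The paper does not supply its own proof of this theorem: it is stated as a classical result due to \cite{phillips-1} (see also \cite{sato}, Theorem~32.1, and \cite{schilling-song-vondraceck-1}, Chapter~12), and is invoked only as background. So there is no ``paper's proof'' to compare against; I can only assess your argument on its own merits.

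Your sketch is essentially the standard proof and is correct in outline. Parts (i) and (iii) are handled correctly. For (ii), the integrability estimate $\|\Pc_sf-f\|\le C(s\wedge1)$ together with $\int(s\wedge1)\,\nu(ds)<\infty$ is the right observation, and the core argument via $\Pc^\phi_t$-invariance of ${\rm Dom}(\A)$ is exactly how one proceeds. The convergence lemma and its three-zone localisation (drift from the $s\to0$ behaviour via $t^{-1}\E[T_t;T_t<\del]$, vague convergence $t^{-1}\pi_t\to\nu$ on compacts of $(0,\infty)$, uniform tail control) is also the standard route.

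One small point to tighten: your final remark that ``the ${\mathfrak B}$-valued statement I would reduce to the scalar one by composing with functionals $\ell\in{\mathfrak B}^*$'' only yields weak convergence, and weak convergence together with a uniform bound does not by itself give norm convergence. In fact you do not need this detour: the three-zone localisation already works directly in norm, since on $[\del,M]$ the map $s\mapsto\Pc_sf-f$ is uniformly norm-continuous and can be uniformly approximated by $\mathfrak B$-valued simple functions, to which the scalar vague convergence $t^{-1}\pi_t\to\nu$ applies set-by-set. So either drop the functional-reduction remark, or make explicit that it is only being used to identify the limit once norm convergence has been established by other means.
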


Next, recall that if $0$ is not an absorbing boundary for the process $X$ with diffusion coefficient $a(x)$, drift $b(x)$ and killing rate $k(x)$, then the transition kernels $P^\beta(x,dy)$, $\beta=0,1$, of the semigroups $\Pc^\beta$ have densities with respect to the Lebesgue measure, $P^\beta_t(x,dy)=p^\beta(t,x,y)dy,$
where $p^\beta(t,x,y)$ are jointly continuous in $t,x,y$. This follows from the fact that any one-dimensional diffusion has a density with respect to the speed measure that is jointly continuous in $t,x,y$ (cf. \cite{mckean-1} or \cite{borodin-salminen} p.13). Under our assumptions, the speed measure is absolutely continuous with respect to the Lebesgue measure (cf. \cite{borodin-salminen}, p.17) and, hence, the semigroups have densities with respect to the Lebesgue measure. For $\beta=0$, the transition kernel is conservative, i.e., $P_t^0(x,E)=\int_{E}p^0(t,x,y)dy=1$. For $\beta=1$ the kernel is generally defective, i.e., $P_t^1(x,E)=\int_{E}p^1(t,x,y)dy\leq 1$. While our diffusion is non-negative, for future convenience we extend the transition densities from $E$ to ${\mathbb R}$ by setting $p^\beta(t,x,y)\equiv0$ for all $y\notin E$ and for all $x\in E$ and $t\geq 0$. Then, the Markovian characterization of the subordinate bivariate process $(X^\phi,D^\phi)$ can be obtained from Theorem~\ref{phillips.theo} as follows.

\begin{theorem}\label{mkvian.sub.biv} {\bf (Markovian characterization of $(X^\phi,D^\phi)$)} 
$(i)$ The bi-variate process $(X^\phi,D^\phi)$ is a Feller process with the Feller semigroup $\{\Pc^\phi_t,t\geq 0\}$ acting on $f\in C([0,\infty]\times \{0,1\})$ by:
\begin{align}
\Pc_t^\phi f(x,d)&=\Pc^{\phi,0}_tv(x)+(1-d)\Pc^{\phi,1}_t(u-v)(x),
\label{semi.sub.bi.1}\end{align}
where $u(x)=f(x,0)\in C([0,\infty])$, $v(x)=f(x,1)\in C([0,\infty])$, and $\{\Pc_t^{\phi,0},t\geq 0\}$ and 
$\{\Pc_t^{\phi,1},t\geq 0\}$ are Feller semigroups obtained by subordination in the sense of Bochner from Feller semigroups 
$\{\Pc_t^{0},t\geq 0\}$ and 
$\{\Pc_t^{1},t\geq 0\}$.

$(ii)$ The infinitesimal generator $\A^\phi$ of the Feller semigroup $\{\Pc_t^\phi,t\geq 0\}$ has the following representation:
\begin{align}\A^\phi f(x,d)&=\A^{\phi,0} v(x)+(1-d)\A^{\phi,1} (u-v)(x),\quad u,v \in{\rm Dom}(\A^1),\label{subdeltsemi.3}\end{align}
where $\A^{\phi,\beta}$, $\beta\in\{0,1\}$, are generators of $\{\Pc_t^{\phi,\beta},t\geq 0\}$.

$(iii)$ 
The generator $\A^{\phi,\beta}$ has the following L\'{e}vy-Khintchine-type representations with state-dependent coefficients:
\begin{align}
\A^{\phi,\beta} f(x)&=\frac{1}{2}\gamma a^2(x)f^{\prime\prime}(x)+b^{\phi,\beta}(x)f^\prime(x)-\beta k^{\phi}(x)f(x)\label{eq13}\\
&+\int_{{\mathbb R}}\Big(f(x+y)-f(x)-\I_{\{|y|\leq1\}}y f^\prime(x)\Big)\pi^{\phi,\beta}(x,y)dy,\quad f\in {\rm Dom}(\A^\beta),\label{subgen.a.1}\end{align}
where 
the state-dependent L\'{e}vy density $\pi^{\phi,\beta}(x,y)$ is defined for all $y\not= x$ by
\begin{align}
\pi^{\phi,\beta}(x,y)&=\int_{(0,\infty)}p^\beta(s,x,x+y)\nu(ds),
\label{sublevmeas.1}\end{align}
and satisfies the integrability condition $\int_{{\mathbb R}}(|y|^2\wedge1)\pi^{\phi,\beta}(x,y)dy<\infty$ for each $x\in E$ (recall that we extended $p^\beta(t,x,y)$ to ${\mathbb R}$ by setting $p^\beta(t,x,y)\equiv0$ for $y\notin E$),
the drift with respect to the truncation function $h^{X^\phi}(x)=x\I_{\{|x|\leq 1\}}$ is given by, 
\begin{align}b^{\phi,\beta}(x)&=\gamma b(x) + \int_{(0,\infty)}\Big(\int_{\{|y|\leq1\}} y p^\beta(s,x,x+y)dy\Big)\nu(ds),\label{subdrift.1}\end{align}
and the killing rate is given by 
\begin{align}k^{\phi}(x)&=\gamma k(x) + \int_{(0,\infty)}\Big(1-P^1_s(x,E)\Big)\nu(ds),\label{subkill.1} \end{align}
where $P^1_s(x,E)=\int_{E} p^1(s,x,y)dy$.
\par
$(iv)$ If $f(x,d)\in {\rm Dom}(\A)$ (i.e., $f$ is of the form \eqref{func.f.1} with $u,v\in {\rm Dom}(\A^1)$) and $(X^\phi,D^\phi)$ starts with $X^\phi_0=x>0$ and $D^\phi_0=d\in \{0,1\}$, then the process 
\be
M^f_t:=f(X^\phi_t,D^\phi_t)-f(x,d)-\int_0^t \A^\phi f(X^\phi_s,D^\phi_s)ds
\label{eqr.9}\end{equation}
is an ${\mathbb F}$-martingale.
\end{theorem}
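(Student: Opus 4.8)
The plan is to deduce the four assertions from the pre-subordination structure in Theorem~\ref{th1.1} together with Phillips' subordination theorem (Theorem~\ref{phillips.theo}); with these in hand, parts $(i)$ and $(ii)$ are essentially algebraic, part $(iii)$ carries the analytic content, and part $(iv)$ is a Dynkin-type corollary. (A complete argument along these lines is carried out in \cite{linetsky-mendoza-arriaga-3}.) For $(i)$, I would first verify that the time-changed process $(X_{T_t},D_{T_t})$ is $\mathbb F$-Markov with transition semigroup $\Pc^\phi_t f=\int_{[0,\infty)}\Pc_s f\,\pi_t(ds)$ as in \eqref{subsem.1}: conditioning on the $\sigma$-algebra generated by $T$, independence of $T$ from $\F^{X,D}_\infty$ and the Markov property of $(X,D)$ give $\E[f(X_{T_t},D_{T_t})\mid T]=\Pc_{T_t}f(x,d)$, and integrating against the law $\pi_t$ of $T_t$ yields $\Pc^\phi_t f(x,d)$; the full $\mathbb F$-Markov property uses in addition the independent-increments property of the L\'evy subordinator $T$. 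Substituting the decomposition \eqref{semi.bi.1} of $\Pc_s$ into this Bochner integral and interchanging the positive integral over $s$ with the finite linear combination produces \eqref{semi.sub.bi.1}, with $\Pc^{\phi,\beta}_t$ the Bochner subordinate of $\Pc^\beta$; the Feller property of each $\Pc^{\phi,\beta}$ and of $\Pc^\phi$ is then Theorem~\ref{phillips.theo}$(iii)$.

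For $(ii)$ I would apply the generator formula \eqref{subgen.1} to $\Pc$, namely $\A^\phi f=\gamma\A f+\int_{(0,\infty)}(\Pc_s f-f)\,\nu(ds)$ for $f\in{\rm Dom}(\A)$, and insert the decompositions \eqref{semigdelt.1} of $\A f$ and \eqref{semi.bi.1} of $\Pc_s f$. Collecting the $d$-independent terms and those carrying the factor $(1-d)$, each group is of the form $\gamma\A^\beta w+\int_{(0,\infty)}(\Pc^\beta_s w-w)\,\nu(ds)=\A^{\phi,\beta}w$ (with $w=v$ for $\beta=0$ and $w=u-v$ for $\beta=1$) by Phillips applied to $\Pc^\beta$, which is \eqref{subdeltsemi.3}; the domain statement follows from the core assertion in Theorem~\ref{phillips.theo}$(ii)$.

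Part $(iii)$ is the crux. Starting from $\A^{\phi,\beta}f=\gamma\A^\beta f+\int_{(0,\infty)}(\Pc^\beta_s f-f)\,\nu(ds)$ with $\A^\beta f=\tfrac12 a^2 f''+bf'-\beta k f$, the local term $\gamma\A^\beta f$ already supplies the $\tfrac12\gamma a^2 f''$ coefficient, the $\gamma b(x)$ part of the drift, and the $\gamma\beta k(x)$ part of the killing rate in \eqref{eq13}. For the nonlocal term I would represent $\Pc^\beta_s f(x)=\int_{\mathbb R}p^\beta(s,x,y)f(y)\,dy$ using the jointly continuous transition densities (extended by zero off $E$), write $f(x)=\int_{\mathbb R}p^\beta(s,x,y)f(x)\,dy+(1-P^\beta_s(x,E))f(x)$ to expose the defect, change variables $y=x+z$, and add and subtract the compensator $\I_{\{|z|\le1\}}z f'(x)$; a Tonelli/Fubini interchange of the $\nu(ds)$- and $dz$-integrals then produces the L\'evy density \eqref{sublevmeas.1}, the additional drift integral in \eqref{subdrift.1}, and --- for $\beta=1$ only, since $P^0_s(x,E)=1$ --- the additional killing integral in \eqref{subkill.1}. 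The hard part will be the integrability bookkeeping needed to license this manipulation on ${\rm Dom}(\A^\beta)$: one must show $\int_{\mathbb R}(|z|^2\wedge1)\pi^{\phi,\beta}(x,z)\,dz<\infty$ and finiteness of the extra drift integral, and justify interchanging integration and differentiation. I expect this to follow from small-time moment estimates for the diffusion $X$ --- the bounds $\int_{\mathbb R}(|z|^2\wedge1)p^\beta(s,x,x+z)\,dz\le\E_x[(X_s-x)^2\wedge1]=O(s)$ and $|\int_{|z|\le1}z\,p^\beta(s,x,x+z)\,dz|=O(s)$ as $s\downarrow0$ (using boundedness of the volatility and local boundedness of the coefficients away from $0$) --- combined with the standard L\'evy-measure integrability $\int_{(0,\infty)}(s\wedge1)\nu(ds)<\infty$ together with $\nu([1,\infty))<\infty$, with dominated convergence handling the interchanges. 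The resulting operator then coincides with $\A^{\phi,\beta}$ on ${\rm Dom}(\A^\beta)$, establishing \eqref{subgen.a.1}.

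Finally, for $(iv)$: once $(i)$--$(iii)$ exhibit $(X^\phi,D^\phi)$ as a Feller process on $C([0,\infty]\times\{0,1\})$ with generator $\A^\phi$ and with ${\rm Dom}(\A)\subseteq{\rm Dom}(\A^\phi)$ (in fact a core, by Theorem~\ref{phillips.theo}$(ii)$), the assertion that $M^f_t$ in \eqref{eqr.9} is an $\mathbb F$-martingale is just the Dynkin martingale formula: for every $f\in{\rm Dom}(\A^\phi)$ the process $f(X^\phi_t,D^\phi_t)-f(x,d)-\int_0^t\A^\phi f(X^\phi_s,D^\phi_s)\,ds$ is a martingale for the natural filtration of the Feller process, which here is $\mathbb F$ since $(X^\phi,D^\phi)$ is $\mathbb F$-adapted and $\mathbb F$-Markov; specializing to $f$ of the form \eqref{func.f.1} with $u,v\in{\rm Dom}(\A^1)$ gives the statement.
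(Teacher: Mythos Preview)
Your proposal is correct and matches the paper's treatment: the paper does not supply its own proof of this theorem but explicitly refers the reader to \cite{linetsky-mendoza-arriaga-3} for the argument, and your sketch --- Phillips' theorem applied to the decomposition of Theorem~\ref{th1.1}, the L\'evy--Khintchine rearrangement of the nonlocal part of \eqref{subgen.1} via the transition densities, and Dynkin's formula for the martingale assertion --- is precisely the route taken there. Your identification of the integrability bookkeeping in part~$(iii)$ as the only place requiring genuine care is also accurate.
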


Now we turn our attention to the semimartingale characterization of the process $(X^\phi,D^\phi)$ (see \cite{jacod-shiryaev-1}, p.76, for the definition of predictable characteristics of a semimartingale).
\begin{theorem}\label{semchar.t.1} {\bf (Semimartingale characterization of $(X^\phi,D^\phi)$)} 
$(i)$ The bi-variate ${\mathbb F}$-semimartingale $(X^\phi,D^\phi)$ has the following predictable characteristics.
The predictable quadratic variation of the continuous local martingale component $X^{\phi,c}$ is:
\begin{align}
C^{X^\phi X^\phi}_t&=\gamma\int_0^t  a^2(X_s^\phi)ds
\label{eqr.16}\end{align}
($C^{D^\phi D^\phi}_t=0$ and $C^{X^\phi D^\phi}_t=0$ since $D^\phi$ is purely discontinuous).
The predictable process of finite variation associated with the truncation function $(h^{X^\phi}(x)=x\I_{\{|x|\leq 1\}},h^{D^\phi}(x,d)=d)$ is:
\begin{align}
B_t^{X^\phi}&=\int_0^t b^{\phi,0}(X_s^\phi)ds,& B_t^{D^\phi}=\int_0^t (1-D_s^\phi)k^{\phi}(X_s^\phi)ds,
\label{eqr.17}\end{align}
where the function $b^{\phi,0}(x)$ is defined in Eq.~\eqref{subdrift.1}  and $k^{\phi}(x)$ is defined in Eq.~\eqref{subkill.1}.
The compensator of the random measure $\mu^{X^\phi,D^\phi}(\omega;dt, dy\, dz)=\sum_u \I_{\{\Delta (X_u^\phi,D_u^\phi)(\omega)\not=0\}}\delta_{(u,\Delta X_u^\phi(\omega),1)}(ds,dy\,dz)$ associated to the jumps of $(X^\phi,D^\phi)$ is a predictable random measure on ${\mathbb R}_+ \times ({\mathbb R}^2\backslash \{(0,0)\})$:
\begin{align}
\nu^{X^\phi,D^\phi}(\omega;dt, dy\, dz) &=\left[\pi^{\phi,0}(x,y)-(1-d)(\pi^{\phi,0}(x,y)-\pi^{\phi,1}(x,y))\right]dy\,\delta_0(dz)
\label{teq.4}\\
&+(1-d)\gamma k(x)\delta_0(dy)\delta_1(dz)+(1-d)(\pi^{\phi,0}(x,y)-\pi^{\phi,1}(x,y))dy\,\delta_1(dz),
\label{eqr.18}\end{align}
where $\pi^{\phi,\beta}(x,y)$ are the L\'{e}vy densities defined in Eq.~\eqref{sublevmeas.1} with $\beta=0,1$, and $\delta_a$ is the Dirac measure charging $a$.

$(ii)$ The L\'{e}vy-It\^{o} canonical representation of $X^\phi$ with respect to the truncation function $h^{X^\phi}(x)=x \I_{\{|x|\leq 1\}}$ is:
\begin{align}
X^\phi_t&=x+B_t^{X^\phi}+X_t^{\phi,c}+\int_0^t \int_{{\mathbb R}} y \I_{\{|y|\leq 1\}} \left(\mu^{X^\phi}(ds,dy)-\nu^{X^\phi}(ds,dy)\right)
+\int_0^t \int_{{\mathbb R}} y \I_{\{|y|>1 \}} \mu^{X^\phi}(ds,dy),
\label{eqr.21}\end{align}
where the compensator of the random measure $\mu^{X^\phi}(\omega;dt, dy)=\sum_u \I_{\{\Delta X_u^\phi(\omega)\not=0\}}\delta_{(u,\Delta X_u^\phi(\omega))}(ds,dy)$ associated to the jumps of $X^\phi$ is a predictable random measure on ${\mathbb R}_+ \times ({\mathbb R}\backslash \{0\})$:
\begin{align}
\nu^{X^\phi}(\omega;dt,dy)&=\pi^{\phi,0}(X^\phi_{t-},y)dy\, dt.
\label{eqr.22}\end{align}

$(iii)$ The Doob-Meyer decomposition of $D^\phi_t$ is:
\begin{align}
D^\phi_t&=B_t^{D^\phi} + M^\phi_t
\label{eqr.23}\end{align}
with the martingale $M^\phi_t=D_t^\phi-B_t^{D^\phi}$ and the predictable compensator $B_t^{D^\phi}$ given in Eq.~\eqref{eqr.17}, so that the ${\mathbb F}$-intensity is $\lambda_t^{\phi}=(1-D_t^\phi)k^{\phi}(X_t^\phi)$.
\end{theorem}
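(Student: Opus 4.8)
The plan is to extract the predictable characteristics directly from the Markovian data already assembled in Section~\ref{sem.gens.ito}. Since $(X^\phi,D^\phi)$ is, as recalled in Section~\ref{sec:model}, already known to be a Feller $\mathbb F$-semimartingale, the only task is to identify its characteristics $(B,C,\nu)$ relative to the truncation function $\big(h^{X^\phi}(x),h^{D^\phi}(x,d)\big)=\big(x\I_{\{|x|\le1\}},d\big)$. The tool is Theorem~\ref{mkvian.sub.biv}(iv), which asserts that $M^f$ in \eqref{eqr.9} is an $\mathbb F$-martingale for every $f\in\mathrm{Dom}(\A^\phi)$ — i.e.\ $(X^\phi,D^\phi)$ solves the martingale problem for $\A^\phi$ — together with the L\'{e}vy--Khintchine-type form of $\A^\phi$ displayed in Theorem~\ref{mkvian.sub.biv}(iii). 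Once $\A^\phi$ is written as an It\^{o}-type operator in the bivariate variable $(x,d)$, the classical dictionary between martingale problems for such operators and semimartingale characteristics (see, e.g., \cite{jacod-shiryaev-1}) identifies $(B,C,\nu)$ as the operator's coefficients evaluated along the path, which is exactly the content of \eqref{eqr.16}--\eqref{eqr.18}.

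Concretely, I would carry this out in four steps. \emph{(1)} Recast $\A^\phi$ in canonical bivariate form: inserting the decomposition \eqref{func.f.1} into \eqref{eq13}--\eqref{subgen.a.1} and cancelling, one finds a diffusion part with coefficient $\gamma a^2(x)$, a first-order part $b^{\phi,0}(x)\,\partial_x f$ in the $x$-direction, and an integral part over $(y,z)\in\mathbb R^2\setminus\{(0,0)\}$ against exactly the kernel \eqref{teq.4}--\eqref{eqr.18}, whose three summands are a jump of $X^\phi$ with no default ($z=0$), a default during the drift part of $T$ with no jump of $X^\phi$ ($y=0,z=1$, rate $\gamma k$), and a default caused by an over-shooting jump of $T$ ($y\ne0,z=1$, density the defect $\pi^{\phi,0}-\pi^{\phi,1}=\int_{(0,\infty)}\big(p^0(s,x,x+y)-p^1(s,x,x+y)\big)\nu(ds)\ge0$), so that $\int z\,\nu^{X^\phi,D^\phi}$ reproduces $(1-d)k^{\phi}(x)$ via \eqref{subkill.1}; the condition $\int(|y|^2\wedge1)\pi^{\phi,\beta}(x,y)\,dy<\infty$ makes this a legitimate L\'{e}vy-type kernel. \emph{(2)} Apply the martingale property of \eqref{eqr.9} to a localizing family $f_n\in\mathrm{Dom}(\A^\phi)$ agreeing with $x$, with $x^2$, with $d$, and with spatially localized products on $\{x\le n\}$; stopping at $\tau_n=\inf\{t:X^\phi_t\ge n\}$ and using uniqueness of the canonical decomposition of the resulting special semimartingales yields $C^{X^\phi X^\phi}_t=\gamma\int_0^t a^2(X^\phi_s)\,ds$ (from $x^2$), $B^{X^\phi}_t=\int_0^t b^{\phi,0}(X^\phi_s)\,ds$ (from $x$, after subtracting the large-jump compensator), $B^{D^\phi}_t=\int_0^t(1-D^\phi_s)k^{\phi}(X^\phi_s)\,ds$ (from $d$), and the compensator $\nu^{X^\phi,D^\phi}$ (from the localized products); passing $n\to\infty$ is legitimate because $X$ does not explode and $T_t<\infty$ a.s., so $\tau_n\uparrow\infty$. \emph{(3)} Part (ii) is then the L\'{e}vy--It\^{o} canonical representation \eqref{eqr.21} of $X^\phi$ with respect to these characteristics, with $\nu^{X^\phi}(dt,dy)=\pi^{\phi,0}(X^\phi_{t-},y)\,dy\,dt$ the restriction to $\{y\ne0\}$ of the $y$-marginal of $\nu^{X^\phi,D^\phi}$. \emph{(4)} Part (iii) is immediate: $D^\phi$ is $\{0,1\}$-valued and nondecreasing, so its canonical decomposition is $D^\phi=B^{D^\phi}+M^\phi$ with $B^{D^\phi}$ as in (i), whence the $\mathbb F$-intensity is $\lambda^\phi_t=(1-D^\phi_t)k^{\phi}(X^\phi_t)$.

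The main obstacle is the localization in step (2) together with the large-jump part of $X^\phi$: since $\pi^{\phi,0}(x,\cdot)$ may have infinite total mass and, a priori, heavy tails, one must verify both that $\int_0^t\!\int_{|y|>1}y\,\mu^{X^\phi}(ds,dy)$ is a genuine finite-variation process (it is, being an a.s.\ finite sum, since $\int_{|y|>1}\pi^{\phi,\beta}(x,y)\,dy\le\int(|y|^2\wedge1)\pi^{\phi,\beta}(x,y)\,dy<\infty$) and that passing to the limit does not move mass between the truncated drift $b^{\phi,0}$ and the large-jump term, i.e.\ that the truncation function is treated consistently on the generator side and the semimartingale side; designing the localizing sequence $f_n$ so that this holds is the fiddly point. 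A secondary, bookkeeping-heavy point is the rigorous verification that the measure \eqref{teq.4}--\eqref{eqr.18} is predictable and is the compensator of $\mu^{X^\phi,D^\phi}$, obtained by testing against functions of $(y,z)$ times spatial cutoffs that vanish near a point and matching against the integral term of $\A^\phi$ through \eqref{eqr.9}. A complementary, more probabilistic reading — which explains the formulas but requires the same care to make rigorous — writes $T_t=\gamma t+T^{\mathrm{jump}}_t$ and analyzes $X^\phi$ pathwise: on the drift part $X^\phi$ runs like $X$ at speed $\gamma$, producing the continuous martingale component with bracket $\gamma\int a^2(X^\phi_s)\,ds$ and contributing $\gamma k$ to the default intensity, while a jump $\Delta T^{\mathrm{jump}}=s$ at time $t$ creates a fresh jump of $X^\phi$ with conditional law $p^0(s,X^\phi_{t-},\cdot)$ (or $p^1$ conditional on no killing over that stretch), so that its L\'{e}vy density is $\int_{(0,\infty)}p^\beta(s,x,x+y)\nu(ds)$ as in \eqref{sublevmeas.1}; the obstruction to turning this into a direct proof is that $T$ is itself discontinuous, so the standard semimartingale time-change theorems (which need a finite, continuous time-change) do not apply and one is pushed back to the generator-based argument above. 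Full details are in \cite{linetsky-mendoza-arriaga-3}.
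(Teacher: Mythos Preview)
The paper does not supply its own proof of this theorem: as announced at the start of Section~\ref{sem.gens.ito}, the semimartingale characterization is quoted from \cite{linetsky-mendoza-arriaga-3} and the reader is referred there for the argument. Your sketch --- read the characteristics off the L\'evy--Khintchine form of $\A^\phi$ via the martingale problem \eqref{eqr.9}, localize, and invoke the standard dictionary in \cite{jacod-shiryaev-1} between It\^o-type generators and semimartingale characteristics --- is the expected route and is consistent with that reference, which you also cite; there is nothing to compare against in the present paper.
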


Lastly,  we formulate the It\^{o} formula for functions of the bi-variate process in a form convenient for our application. {Observe also that the continuous local martingale part can be represented in terms of the Brownian motion $W$ as $X_t^{\phi,c}= \sqrt{\gamma}\int_0^t a(X_s^\phi)dW_s$, which will be useful for our representation of the stock price process $S$.} Since for  $\rho=\phi(-\mu)$ the stock price process $S$ is a martingale (hence, a special semimartingale) then it suffices to present the It\^o formula for the case in which the process $f(t,X_t^\phi,D_t^\phi)$ is a special semimartingale (see.~\cite{jacod-shiryaev-1}, Definition 4.21, p.43). 

\begin{theorem}\label{cor.xD.1} {\bf (It\^{o} Formula for $(X^\phi,D^\phi)$)}
Suppose $(X^\phi,D^\phi)$ starts from $X^\phi_0=x>0$ and $D_0^\phi=d\in \{0,1\}$. 
For any function $f(t,x,d)=v(t,x)+(1-d)(u(t,x)-v(t,x))$ with $u(t,x)$ and $v(t,x)\in C^{1,2}({\mathbb R}_+ \times (0,\infty))$  (recall that zero is an unattainable boundary for the diffusion process $X$ starting at $x>0$), if $f(t,X_t^\phi,D_t^\phi)$ is a special semimartingale, the
It\^{o} formula can be written in the following form:
\begin{align}
f(t,X^\phi_t,D^\phi_t)&=f(0,x,d)+\int_0^t \big(\partial_s+\A^{\phi}\big)f(s,X^\phi_s,D^\phi_t)ds
+\sqrt{\gamma}\int_0^t\partial_x f(s,X^\phi_s,D^\phi_t)a(X_s^\phi)dW_s
\label{eqm.18.f1}\\
&+\int_0^t \int_{{\mathbb R}}\left(v(s,X^\phi_{s-}+y)-v(s,X^\phi_{s-})\right)(\mu^{X^\phi}(ds,dy)-\nu^{X^\phi}(ds,dy)).
\label{eqr.19.f1}\\
&+\int_0^t \int_{{\mathbb R}}\left((u-v)(s,X^\phi_{s-}+y)-(u-v)(s,X^\phi_{s-})\right)(1-D^\phi_{s-})(\hat{\mu}(ds,dy)-\hat{\nu}(ds,dy)).
\label{eqr.19.f2}\\
&-\int_0^t (1-D^\phi_{s-})(u-v)(X^\phi_{s-})dM^\phi_s.
\label{eqr.19.f2.b}\end{align}
where we introduced a random measure associated  to those jumps of $X^\phi$ that do {\em not} coincide with jump of $D^\phi$,
\begin{align}
\hat{\mu}(\omega;ds,dy)&=\sum_u \I_{\{\Delta X_u^\phi(\omega)\not=0\}} \I_{\{\Delta D_u^\phi(\omega)=0\}}\delta_{(u,\Delta X_u^\phi(\omega))}(ds,dy),
\label{teq.14}\end{align}
and its compensator measure
\begin{align}
\hat{\nu}(\omega;ds,dy)&=\left[\pi^{0,\phi}(X_{s-}^\phi,y)-(1-D^\phi_{s-})(\pi^{0,\phi}(X_{s-}^\phi,y)-\pi^{1,\phi}(X_{s-}^\phi,y))\right]dyds.
\label{teq.15}\end{align}
Here, $\mu^{X^\phi}$ is the random measure associated to the jumps of $X^\phi$ and $\nu^{X^\phi}$ is its compensator measure~\eqref{eqr.22}. The generator $\A^\phi$ is given by Eq.~\eqref{subdeltsemi.3}.
\end{theorem}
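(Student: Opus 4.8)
The plan is to derive \eqref{eqm.18.f1}--\eqref{eqr.19.f2.b} from the classical It\^{o} formula for $C^{1,2}$ functions of a semimartingale, combined with an integration by parts that exploits the affine dependence of $f$ on its last argument. Setting $g:=u-v$, the representation \eqref{func.f.1} gives $f(t,X^\phi_t,D^\phi_t)=v(t,X^\phi_t)+(1-D^\phi_t)\,g(t,X^\phi_t)$, and I would treat the two summands separately. For the first, I would apply It\^{o} to the one-dimensional ${\mathbb F}$-semimartingale $X^\phi$ and $v(t,\cdot)\in C^{1,2}$, using the canonical decomposition of Theorem~\ref{semchar.t.1}(ii): continuous martingale part $X^{\phi,c}_t=\sqrt{\gamma}\int_0^t a(X^\phi_s)\,dW_s$ with $\langle X^{\phi,c}\rangle_t=\gamma\int_0^t a^2(X^\phi_s)\,ds$, finite-variation part with density $b^{\phi,0}$, and jump measure $\mu^{X^\phi}$ with compensator $\nu^{X^\phi}(ds,dy)=\pi^{\phi,0}(X^\phi_{s-},y)\,dy\,ds$. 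After grouping the $ds$-terms and recognizing the L\'{e}vy--Khintchine form \eqref{eq13}--\eqref{subgen.a.1} with $\beta=0$, the drift collapses to $(\partial_s+\A^{\phi,0})v$, the Brownian term is $\sqrt{\gamma}\int_0^t\partial_x v\,a(X^\phi_s)\,dW_s$, and the remaining (compensated) jump contributions assemble into line~\eqref{eqr.19.f1}. The identical computation applied to $g$ produces the It\^{o} representation of $G_t:=g(t,X^\phi_t)$.

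For the second summand I would apply the semimartingale product rule to $Z_tG_t$, with $Z_t:=1-D^\phi_t$ of finite variation: $Z_tG_t=Z_0G_0+\int_0^tZ_{s-}\,dG_s+\int_0^tG_{s-}\,dZ_s+[Z,G]_t$. Into $\int_0^tZ_{s-}\,dG_s$ I would substitute the It\^{o} expansion of $G$; for $\int_0^tG_{s-}\,dZ_s$ I would use the Doob--Meyer decomposition $D^\phi=B^{D^\phi}+M^\phi$ of Theorem~\ref{semchar.t.1}(iii) and the fact that $D^\phi_{s-}=0$ on the support of $dD^\phi$, obtaining $-\int_0^t(1-D^\phi_{s-})g(s,X^\phi_{s-})\,dM^\phi_s-\int_0^t(1-D^\phi_s)k^\phi(X^\phi_s)g(s,X^\phi_s)\,ds$; and I would identify the covariation as $[Z,G]_t=-\int_0^t\!\int_{\mathbb R}(1-D^\phi_{s-})\big(g(s,X^\phi_{s-}+y)-g(s,X^\phi_{s-})\big)(\mu^{X^\phi}-\hat\mu)(ds,dy)$, since $\mu^{X^\phi}-\hat\mu$ is precisely the part of the jump measure of $X^\phi$ carried by the default jump of $D^\phi$ (where $D^\phi_{s-}=0$ in any case). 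Summing the two jump integrals turns $(\mu^{X^\phi}-\nu^{X^\phi})$ into $(\hat\mu-\nu^{X^\phi})=(\hat\mu-\hat\nu)+(\hat\nu-\nu^{X^\phi})$, and from \eqref{eqr.22} and \eqref{teq.15} one reads off $(1-D^\phi_{s-})(\hat\nu-\nu^{X^\phi})(ds,dy)=-(1-D^\phi_{s-})\big(\pi^{\phi,0}-\pi^{\phi,1}\big)(X^\phi_{s-},y)\,dy\,ds$, so the last discrepancy is absorbed into the drift.

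The final step is reassembly. All $ds$-contributions from the $g$-summand combine into $(1-D^\phi_s)\big[(\partial_s+\A^{\phi,0})g-k^\phi g-\int_{\mathbb R}\big(g(s,X^\phi_s+y)-g(s,X^\phi_s)\big)\big(\pi^{\phi,0}-\pi^{\phi,1}\big)(X^\phi_s,y)\,dy\big]$; invoking \eqref{eq13}--\eqref{subgen.a.1} for $\beta=0,1$ together with the identity $b^{\phi,0}(x)-b^{\phi,1}(x)=\int_{\{|y|\le1\}}y\,\big(\pi^{\phi,0}(x,y)-\pi^{\phi,1}(x,y)\big)\,dy$, which follows by Fubini from \eqref{subdrift.1} and \eqref{sublevmeas.1}, this collapses to $(1-D^\phi_s)(\partial_s+\A^{\phi,1})g$. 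Adding back the $v$-summand, the $ds$-drift becomes $\partial_sf+\A^{\phi,0}v+(1-D^\phi_s)\A^{\phi,1}(u-v)=(\partial_s+\A^\phi)f$ by \eqref{subdeltsemi.3}; the Brownian terms combine to $\sqrt{\gamma}\int_0^t\big(\partial_xv+(1-D^\phi_{s-})\partial_x(u-v)\big)a(X^\phi_s)\,dW_s=\sqrt{\gamma}\int_0^t\partial_xf\,a(X^\phi_s)\,dW_s$; and the surviving remainders are exactly the $(\hat\mu-\hat\nu)$-integral \eqref{eqr.19.f2} and the $dM^\phi$-integral \eqref{eqr.19.f2.b}.

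I expect the main difficulty to lie in the bookkeeping of the second step: the covariation $[1-D^\phi,\,g(\cdot,X^\phi)]$ (which records jumps of $X^\phi$ coinciding with the default jump of $D^\phi$), together with the passage from $\nu^{X^\phi}$ to the thinned compensator $\hat\nu$, is exactly what promotes the non-killed generator $\A^{\phi,0}$ to the killed generator $\A^{\phi,1}$ on the $(u-v)$-component, and the cancellation of the resulting drift corrections hinges on the identity for $b^{\phi,0}-b^{\phi,1}$ above. A secondary point is that the special-semimartingale hypothesis on $f(t,X^\phi_t,D^\phi_t)$ is precisely what makes the large-jump integrals $\int\!\int_{\{|y|>1\}}(\cdot)\,\mu^{X^\phi}$ compensable and $\A^\phi f$ well defined along the trajectory; if the compensated integrals require further integrability, one would first prove the identity for $u,v$ with support compact in $(0,\infty)$ and then remove the restriction by localization along stopping times exhausting $(0,\infty)$.
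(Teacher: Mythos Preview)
The paper does not give its own proof of this theorem; at the start of Section~\ref{sem.gens.ito} it explicitly states that the results of that section are summarized from \cite{linetsky-mendoza-arriaga-3} and refers the reader there for proofs. So there is no in-paper argument to compare against.

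Your proposed derivation is correct and self-contained. The decomposition $f=v+(1-D^\phi)g$ with $g=u-v$, followed by the one-dimensional It\^{o} formula for $v(t,X^\phi_t)$ and the product rule for $(1-D^\phi_t)\,g(t,X^\phi_t)$, is the natural route, and you have identified the genuinely delicate point: the covariation $[1-D^\phi,\,g(\cdot,X^\phi)]$ picks up precisely the jumps of $X^\phi$ that coincide with the default jump (encoded by $\mu^{X^\phi}-\hat\mu$), and together with the thinning $\hat\nu-\nu^{X^\phi}=-(1-D^\phi_{s-})(\pi^{\phi,0}-\pi^{\phi,1})\,dy\,ds$ this produces the correction that converts $\A^{\phi,0}$ into $\A^{\phi,1}$ on the $(u-v)$-component. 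Your verification of that conversion via the drift identity $b^{\phi,0}-b^{\phi,1}=\int_{\{|y|\le1\}}y\,(\pi^{\phi,0}-\pi^{\phi,1})\,dy$ (a Fubini consequence of \eqref{subdrift.1} and \eqref{sublevmeas.1}) is exactly right. The remark that the special-semimartingale hypothesis is what makes the large-jump integrals compensable, and the suggested localization for the general $C^{1,2}$ case, are also appropriate.
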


\section{Variance Swap Computation}
\label{sec:VS}
Due to the semimartingale characterization of the process $(X^\phi,D^\phi)$ of Section~\ref{sem.gens.ito} we are now in position to provide the characterization of the stock price process $S$ and to explicitly compute the value of variance swap rate $K_{var}=\E([\log S]_t)$. First, let us recall from  Eq.~\eqref{eq:S2} that the stock price $S$ can be seen as a function $f(t,X^\phi_t,D^\phi_t)\in C([0,\infty)\times\tilde{E})$ which is decomposed according to~\eqref{func.f.1} with the payoff function $u(t,x)=e^{\rho t}x$ if no default occurs by time $t\geq0$, and zero recovery $v(t,x)=0$ if the firm defaults prior to time $t\geq0$. The following theorem formally characterizes the stock price process.

\begin{theorem}\label{itostockcor} {\bf (Stock Price Process $S$)} Let the stock price process $S$ be specified in terms of the bivariate process $(X^\phi,D^\phi)$ by the prescription $S_t=e^{\rho t}(1-D^\phi_t)X^\phi_t$, where $X^\phi_0=x>0$ and $D_0^\phi=d\in \{0,1\}$. Moreover, assume that the scaling factor $\rho$ satisfies $\rho=\phi(-\mu)$ where $\phi(u)$ is the Laplace exponent~\eqref{eq:LevyKintchine} of the subordinator $T$, and where $\mu\in\Ic$ is the constant drift of the background process $X$ \eqref{eq:dX}--\eqref{eq:drift,vol} (the set $\Ic$ is defined in Eq.~\eqref{eq:set}). Then, the stock price process $S$ is a martingale with canonical representation
\begin{align}
S_t&={S_0+\sqrt{\gamma}\int_0^t\sigma(X_u^\phi)S_{u}dW_u+\int_0^t\int_{{\mathbb R}}e^{\rho s}(1-D^\phi_{s-})y\left(\hat{\mu}(ds,dy)-\pi^{\phi,1}(X^\phi_{s-},y)dyds\right)-\int_0^tS_{u-}dM_u^\phi,}
\label{stock.xd.1}
\end{align}
where $\gamma\geq0$ is the drift of the L\'evy subordinator $T$. The random measure $\hat\mu$ corresponds to those jumps of $X^\phi$ that do {\em not} coincide with jump of the default indicator $D^\phi$ (see Eq.~\eqref{teq.14}). The L\'evy density $\pi^{\phi,1}(x,y)$ is defined in Eq.~\eqref{sublevmeas.1}.
$W$ is a Brownian motion and $M^\phi$ is the martingale~\eqref{eqr.23} associated to $D^\phi$.
\end{theorem}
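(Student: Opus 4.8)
The plan is to apply the It\^{o} formula of Theorem~\ref{cor.xD.1} to the function $f(t,x,d) = e^{\rho t}x(1-d)$, which fits the required form with $u(t,x) = e^{\rho t}x$ and $v(t,x) = 0$, and then to use the martingale condition $\rho = \phi(-\mu)$ to cancel the finite-variation terms. First I would record the relevant partial derivatives: $\partial_t f = \rho e^{\rho t}x(1-d)$, $\partial_x f = e^{\rho t}(1-d)$, $\partial_{xx} f = 0$. Plugging into \eqref{eqm.18.f1}--\eqref{eqr.19.f2.b}, the continuous martingale term becomes $\sqrt{\gamma}\int_0^t e^{\rho s}(1-D^\phi_s)a(X^\phi_s)\,dW_s = \sqrt{\gamma}\int_0^t \sigma(X^\phi_s)S_s\,dW_s$ after using $a(x)=\sigma(x)x$ and $S_s = e^{\rho s}(1-D^\phi_s)X^\phi_s$; the term \eqref{eqr.19.f1} vanishes because $v\equiv 0$; the term \eqref{eqr.19.f2} produces $\int_0^t\int_{\mathbb R} e^{\rho s}(1-D^\phi_{s-})y(\hat\mu(ds,dy)-\hat\nu(ds,dy))$; and the term \eqref{eqr.19.f2.b} gives $-\int_0^t e^{\rho s}(1-D^\phi_{s-})X^\phi_{s-}\,dM^\phi_s = -\int_0^t S_{s-}\,dM^\phi_s$.

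The key remaining work is to show that the drift term $\int_0^t(\partial_s + \A^\phi)f(s,X^\phi_s,D^\phi_s)\,ds$ vanishes, and to reconcile the compensator $\hat\nu$ in \eqref{eqr.19.f2} with the claimed $\pi^{\phi,1}(X^\phi_{s-},y)\,dy\,ds$. For the drift: using \eqref{subdeltsemi.3} with $u(t,x)=e^{\rho t}x$, $v=0$ one computes $\A^\phi f(s,x,d) = (1-d)e^{\rho s}\A^{\phi,1}(\mathrm{id})(x)$, where $\mathrm{id}(x)=x$. By the L\'evy-Khintchine representation \eqref{eq13}--\eqref{subgen.a.1} for $\A^{\phi,1}$ applied to the identity function, together with the explicit forms \eqref{subdrift.1} of $b^{\phi,1}$ and \eqref{subkill.1} of $k^\phi$ and the definition \eqref{sublevmeas.1} of $\pi^{\phi,1}$, the quantity $\A^{\phi,1}(\mathrm{id})(x)$ should reduce to $-\phi(-\mu)\,x = -\rho x$: this is precisely the algebraic identity underlying the martingale property established in \cite{mendoza-carr-linetsky-1}, which here amounts to checking that the subordinated generator applied to $x\mapsto x$ reproduces $-\phi(-\mu)x$ by collecting the drift, killing, and integral pieces and invoking $\int_{(0,\infty)}(1-e^{-\mu})\,\dots$ type identities from the L\'evy-Khintchine formula \eqref{eq:LevyKintchine}. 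Then $(\partial_s+\A^\phi)f = (1-d)e^{\rho s}(\rho x - \rho x) = 0$, as needed. For the jump compensator, substituting $v\equiv 0$ into \eqref{teq.15} collapses $\hat\nu$ to $(1-D^\phi_{s-})\pi^{\phi,1}(X^\phi_{s-},y)\,dy\,ds$ on the support where $1-D^\phi_{s-}=1$, so the integrand $e^{\rho s}(1-D^\phi_{s-})y$ against $\hat\nu$ is exactly $e^{\rho s}(1-D^\phi_{s-})y\,\pi^{\phi,1}(X^\phi_{s-},y)\,dy\,ds$, matching \eqref{stock.xd.1}.

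After assembling these pieces one obtains \eqref{stock.xd.1} directly, with $S_0 = f(0,x,d) = x(1-d)$. The martingale property then follows: the $dW$ term is a continuous local martingale, the term against $\hat\mu-\hat\nu$ is a purely discontinuous local martingale by construction of the compensator, and $\int_0^\cdot S_{u-}\,dM^\phi_u$ is a local martingale since $M^\phi$ is the compensated default martingale of \eqref{eqr.23}; integrability (true martingale property) is inherited from $\E[S_t]<\infty$ for all $t$, which holds under $\mu\in\Ic$ as cited from \cite{mendoza-carr-linetsky-1} just before the theorem. The main obstacle I anticipate is the bookkeeping in verifying $\A^{\phi,1}(\mathrm{id})(x) = -\rho x$: one must be careful with the truncation function $h^{X^\phi}(x)=x\I_{\{|x|\le 1\}}$ in \eqref{subgen.a.1}, since the identity function is unbounded and the naive application of the integral term requires combining the compensating drift in $b^{\phi,1}$ with the large-jump part of the integral; the cleanest route is probably to verify the identity first at the level of the unsubordinated semigroup (where $\Pc^1_t\,\mathrm{id}$ has a known exponential-in-$t$ behavior governed by $\mu$) and then push it through Phillips' formula \eqref{subgen.1}, rather than manipulating \eqref{eq13}--\eqref{subgen.a.1} termwise.
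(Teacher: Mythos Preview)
Your proposal is correct and follows essentially the same route as the paper: apply the It\^{o} formula of Theorem~\ref{cor.xD.1} with $u(t,x)=e^{\rho t}x$, $v\equiv0$, then verify that the drift $(\partial_s+\A^{\phi,1})u$ vanishes via $\rho=\phi(-\mu)$, and observe that $(1-D^\phi_{s-})\hat\nu(ds,dy)=(1-D^\phi_{s-})\pi^{\phi,1}(X^\phi_{s-},y)\,dy\,ds$. The paper carries out the drift cancellation exactly along the ``cleanest route'' you identify at the end---it writes the generator acting on $u$ explicitly and uses the identity $\int_E z\,p^1(w,x,z)\,dz=\E_x[e^{-\int_0^w k(X_u)du}X_w]=xe^{\mu w}$ (i.e., $\Pc^1_w\,\mathrm{id}(x)=xe^{\mu w}$) together with the L\'evy--Khintchine formula to collapse everything to $e^{\rho s}X^\phi_s(1-D^\phi_s)(\rho-\phi(-\mu))=0$.
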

\begin{proof} From the restrictions on $\mu$ and $\rho$, the stock price $S_t$ is a discounted martingale (see, \cite{mendoza-carr-linetsky-1}, Section 4). In the presence of a constant interest rate $r\geq0$ and dividend yield $q\geq0$, the stock price can be decomposed as $S_t=S_0+A_t+\int_0^tS_{u-}dM_u$, where $M_t$ is a martingale and where $A_t=\int_0^t(r-q)S_{u-}du$ is predictable, and hence, it is a special semimartingale. Then, the canonical representation of $S$ follows from the It\^o formula of Theorem~\ref{cor.xD.1} applied to the function $f(t,x,d)$ defined by $u(t,x)=e^{\rho t}x$ and $v(t,x)=0$. We further observe that the drift vanishes since
\begin{align}  \big(\partial_s+\A^{\phi,1}\big)u(s,X^\phi_s,D^\phi_s)&=e^{\rho s}X^\phi_s(1-D^\phi_s) \label{drift.van.1}\\
&\times\Big[\rho+\gamma [\mu +k(X_s^\phi)]+(1/X^\phi_s)\int_{(0,\infty)}\Big(\int_{E} (z-X_s^\phi) p^1(w,X_s^\phi,z)dz\Big)\nu(dw)\label{drift.van.2}\\
&-\gamma k(X_s^\phi) - \int_{(0,\infty)}\Big(1-\int_{E}p^1(w,X_s^\phi,z)dz\Big)\nu(dw)\Big]\label{drift.van.3}\\
&=e^{\rho s}X^\phi_s(1-D^\phi_s)\Big[\rho+\gamma \mu+\int_{(0,\infty)}\Big((1/X^\phi_s)\int_{E} z p^1(w,X_s^\phi,z)dz-1\Big)\nu(dw)\Big]\label{drift.van.4}\\
&=e^{\rho s}X^\phi_s(1-D^\phi_s)\Big[\rho+\gamma \mu+\int_{(0,\infty)}\big(e^{\mu w}-1\big)\nu(dw)\Big]\label{drift.van.5}\\
&=e^{\rho s}X^\phi_s(1-D^\phi_s)\big(\rho-\phi(-\mu)\big)=0.\label{drift.van.6}\end{align}
In the third equality we used the fact that $\int_{E} z p^1(w,x,z)dz=\E[e^{-\int_0^t k(X_u)du}X_t]=x e^{\mu t}$ (cf.~\cite{linetsky-1}, Proposition 2.1). The fourth equality follows from the definition~\eqref{eq:LevyKintchine} of the Laplace exponent, which cancels from the condition $\rho=\phi(-\mu)$.
\end{proof}
This canonical representation decomposes the stock price process $S$ into a purely discontinuous  martingale of jumps prior to default with the compensator measure $(1-D_{u-}^\phi)\pi^{1,\phi}(X_{u-}^\phi,y)dydu$ (observe from Eq.~\eqref{teq.15} that $(1-D_{u-}^\phi)\hat{\nu}(du,dy)=(1-D_{u-}^\phi)\pi^{1,\phi}(X_{u-}^\phi,y)dydu$),
a continuous martingale component represented in terms of a Brownian motion,  and a final jump to zero (the default term $-\int_0^t S_{u-} dM_u^\phi$). Clearly, the process $S$ is a jump-diffusion process whenever $\gamma>0$, and a  purely discontinuous process for $\gamma=0$.

Next, we note that if the firm underlying $S$ were to default at some time $\zeta^\phi$ in the interval $[0,t]$, then the payoff of a traditional variance swap contract \eqref{eq:VS} would be infinite.  To account for this possibility, we modify the floating leg of the VS so that it only accumulates quadratic variation \emph{prior} to the default time $\zeta^\phi$.  That is, the long side of a VS, under our modified definition, has a payoff of
\begin{align}
[\log S]_{t \wedge \zeta^\phi-} - K_{var}
		&=		\int_0^t (1-D_u^\phi) d[\log S]_u - K_{var}	. \label{eq:VS1}
\end{align}
Notice that, for an asset that cannot default, our modified
definition of a VS coincides with the traditional definition of a VS.
Meanwhile, for an asset that \emph{can} default, our modified
definition of a VS is guaranteed to have a finite payoff, since the
floating leg of the modified VS only accumulates quadratic variation up the
time just prior to default.
\par
Using definition \eqref{eq:VS1}, the fair value of $K_{var}$ is the risk-neutral expectation of the floating leg  
\begin{align}
K_{var}
		&=		\E_x \[ \int_0^t (1-D_u^\phi) d[\log S]_u \] . \label{eq:Kvar}
\end{align}

An explicit expression for the right-hand side of \eqref{eq:Kvar} is given in the following theorem.
\begin{theorem}
\label{thm.QV}
Let $S$ be given by $S_t=e^{\rho t}X_t^\phi(1-D_t^\phi)$.  Then the right-hand side of \eqref{eq:Kvar} is given by
\begin{align}
K_{var}
		&=		\int_0^t \E_x \[ \I_{\{\zeta^\phi>u\}} \gamma \sig^2(X_u^\phi) \] du 
					+ \int_0^t  \E_x \[ \I_{\{\zeta^\phi>u\}}\int_{\mathbb R}\log^2\Big(1+\frac{y}{X^\phi_{u-}}\Big)\pi^{\phi,1}(X_{u-}^\phi,y)dy \] du .
					\label{eq:Kvar2}
\end{align}
\end{theorem}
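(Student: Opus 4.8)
The plan is to localize the computation to the stochastic interval $[0,\zeta^\phi)$, on which $S$ stays strictly positive, and then read the quadratic variation of $\log S$ off the semimartingale characterization of $X^\phi$ in Theorem~\ref{semchar.t.1}. Since $1-D_u^\phi=\I_{\{\zeta^\phi>u\}}$ vanishes for $u\ge\zeta^\phi$, the integral in \eqref{eq:Kvar} only sees $[\log S]$ on $[0,\zeta^\phi)$, and on that interval $D_u^\phi=0$, so $\log S_u=\rho u+\log X_u^\phi$. As $u\mapsto\rho u$ is continuous and of finite variation, it contributes neither to $[\log S]$ nor to any cross term, hence $d[\log S]_u=d[\log X^\phi]_u$ on $[0,\zeta^\phi)$ and it suffices to compute $\E_x\bigl[\int_0^t\I_{\{\zeta^\phi>u\}}\,d[\log X^\phi]_u\bigr]$. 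Here I use the standard decomposition $[\log X^\phi]_u=[(\log X^\phi)^c]_u+\sum_{v\le u}(\Delta\log X_v^\phi)^2$. Since $X^\phi=X\circ T$ is $(0,\infty)$-valued, It\^o's formula identifies the continuous martingale part of $\log X^\phi$ as $\int_0^{\cdot}(X_v^\phi)^{-1}\,dX_v^{\phi,c}$, so Theorem~\ref{semchar.t.1}$(i)$ gives
\begin{align*}
[(\log X^\phi)^c]_u=\int_0^u(X_v^\phi)^{-2}\,dC_v^{X^\phi X^\phi}=\gamma\int_0^u\frac{a^2(X_v^\phi)}{(X_v^\phi)^2}\,dv=\gamma\int_0^u\sigma^2(X_v^\phi)\,dv ,
\end{align*}
the last step being $a(x)=\sigma(x)x$. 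Multiplying by $\I_{\{\zeta^\phi>u\}}$, integrating over $[0,t]$, taking $\E_x$ and invoking Tonelli (all integrands nonnegative) yields the first term on the right of \eqref{eq:Kvar2}.

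The jump contribution is the delicate point. Writing $\Delta\log X_v^\phi=\log(1+\Delta X_v^\phi/X_{v-}^\phi)$, the claim is that
\begin{align*}
\sum_{v\le t}\I_{\{\zeta^\phi>v\}}\log^2\!\Bigl(1+\frac{\Delta X_v^\phi}{X_{v-}^\phi}\Bigr)=\int_0^t\!\!\int_{\R}(1-D_{v-}^\phi)\log^2\!\Bigl(1+\frac{y}{X_{v-}^\phi}\Bigr)\hat\mu(dv,dy),
\end{align*}
where $\hat\mu$ is the random measure of jumps of $X^\phi$ not coinciding with a jump of $D^\phi$, defined in \eqref{teq.14}: a jump of the subordinator $T$ across the level $\zeta$ is exactly what triggers $\zeta^\phi$, so $X^\phi$ and $D^\phi$ jump simultaneously at $v=\zeta^\phi$; hence $\hat\mu$ already discards the jump at $v=\zeta^\phi$, while the predictable factor $1-D_{v-}^\phi$ discards the jumps at times $v>\zeta^\phi$, and together they retain precisely the jumps with $v<\zeta^\phi$. (It is the factor $1-D_u^\phi$, with $u$ rather than $u-$, in \eqref{eq:Kvar} that removes the default atom of $d[\log S]$ in the first place.) Now apply the compensation formula to $\hat\mu$, whose predictable compensator $\hat\nu$ is given in \eqref{teq.15}, together with the simplification $(1-D_{v-}^\phi)\hat\nu(dv,dy)=(1-D_{v-}^\phi)\pi^{\phi,1}(X_{v-}^\phi,y)\,dy\,dv$ (the bracket in \eqref{teq.15} collapses to $\pi^{\phi,1}$ once multiplied by $1-D_{v-}^\phi$, since $(1-D_{v-}^\phi)^2=1-D_{v-}^\phi$). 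This gives $\E_x\bigl[\int_0^t\!\int_{\R}(1-D_{v-}^\phi)\log^2(1+y/X_{v-}^\phi)\,\pi^{\phi,1}(X_{v-}^\phi,y)\,dy\,dv\bigr]$; another application of Tonelli, together with the fact that $1-D_{v-}^\phi$ and $\I_{\{\zeta^\phi>v\}}$ agree for Lebesgue-a.e.\ $v$, reproduces the second term of \eqref{eq:Kvar2}. The same two terms can alternatively be read off the canonical representation \eqref{stock.xd.1} of $S$ proved in Theorem~\ref{itostockcor}: its continuous martingale part contributes $\gamma\int_0^t\sigma^2(X_u^\phi)\,du$ to $[\log S]$, and its $\hat\mu$-driven jump part already carries the compensator $(1-D_{v-}^\phi)\pi^{\phi,1}(X_{v-}^\phi,y)\,dy\,dv$.

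All interchanges of summation, integration and expectation are legitimate by Tonelli since every integrand is nonnegative, so \eqref{eq:Kvar2} is an identity in $[0,\infty]$; the integrability condition $\int_{\R}(|y|^2\wedge1)\pi^{\phi,1}(x,y)\,dy<\infty$ and the fact that $1+y/X_{v-}^\phi>0$ on the support of $\pi^{\phi,1}(X_{v-}^\phi,\cdot)$ control the small-jump part of the second term. I expect the main obstacle not to be any analytic estimate but the bookkeeping around the default time: making precise that ``quadratic variation accumulated strictly before $\zeta^\phi$'' is exactly what $\hat\mu$ together with the factor $1-D_{v-}^\phi$ records, and that the associated compensator collapses from $\pi^{\phi,0}$ to $\pi^{\phi,1}$ on $\{D_{v-}^\phi=0\}$.
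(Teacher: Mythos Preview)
Your argument is correct and follows essentially the same route as the paper: both compute the quadratic variation of $\log S$ on $\{\zeta^\phi>u\}$ via its continuous-plus-jump decomposition, identify the relevant jump compensator as $(1-D_{v-}^\phi)\pi^{\phi,1}(X_{v-}^\phi,y)\,dy\,dv$, and use $(1-D_u^\phi)\,dD_u^\phi=0$ to discard the default atom. The only presentational difference is that the paper packages the bookkeeping by applying the bivariate It\^o formula of Theorem~\ref{cor.xD.1} to $f(t,x,d)=(1-d)\log(e^{\rho t}x)$ and then reads off $d[\log S]$, whereas you work directly with the semimartingale characteristics of $X^\phi$ from Theorem~\ref{semchar.t.1} and handle the $\hat\mu$ versus $\mu^{X^\phi}$ distinction by hand; both lead to the same two terms.
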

\begin{proof}
In view of Eq.~\eqref{eq:Kvar}, it suffices to first calculate the L\'evy-It\^o canonical representation of the function $f(t,x,d)=(1-d)\log (e^{\rho t}x)$, which corresponds  to a zero--recovery  $\log$--contract on the stock price. That is, a contract that pays $u(t,X^\phi_t)=\log (e^{\rho t}X^\phi_t)=\log S_t$ if no default occurs by time $t\geq0$, and zero otherwise (i.e., we set $v(t,X^\phi_t)=0$). Hence, the canonical representation of the pre-default $\log$--contract of $S$ can be obtained by means of an application of the It\^o formula of Theorem~\ref{cor.xD.1} to the function $f(t,x,d)=(1-d)\log (e^{\rho t}x)$,
\begin{align}
d(\log S_t)=&\big[(1-D_{t-}^\phi)\big(\partial_t+\A^{\phi,1}\big)\log (e^{\rho t}X^\phi_{t-})\big] \, dt+ (1-D_{t-}^\phi)\, \sqrt{\gamma}\, \sig(X_{t-}^\phi) dW_t\label{eq:QV.temp.0} \\
&+(1-D_{t-}^\phi)\int_{\mathbb R}\log\Big(1+\frac{y}{X^\phi_{t-}}\Big)\left(\hat{\mu}(dt,dy)-\pi^{\phi,1}(X_{t-}^\phi,y)dydt\right)\\ & -(1-D_{t-}^\phi)\log (e^{\rho t}X^\phi_{t-})dM^\phi_{t}.
\label{eq:QV.temp} \end{align}
Observe that due to the default term $(1-D_{t-}^\phi)\log (e^{\rho t}X^\phi_{t-})dM^\phi_{t}=(1-D_{t-}^\phi)\log (S_{t-})dM^\phi_{t}$ the process jumps to zero at default, which is consistent with our selection of the function $f(t,x,d)=(1-d)\log (e^{\rho t}x)$ that has zero-recovery in case of default. Consequently, it describes the pre-default dynamics of $\log S_t$ and prevents it from exploding at default time. From \eqref{eq:QV.temp} it is straightforward to compute the differential $d[\log S]_t$ of the pre-default dynamics of $\log S_t$,
\begin{align}
d[\log S ]_t
		&=		(1-D_{t-}^\phi)\( \gamma\sig^2(X_t^\phi) +\int_{\mathbb R}\log^2\Big(1+\frac{y}{X^\phi_{t-}}\Big)\pi^{\phi,1}(X_{t-}^\phi,y)dy\) dt	\\ & \qquad
					+ (1-D_{t-}^\phi)\int_{\mathbb R}\log^2\Big(1+\frac{y}{X^\phi_{t-}}\Big)\left(\hat{\mu}(dt,dy)-\pi^{\phi,1}(X_{t-}^\phi,y)dydt\right) \\ &\qquad
					+ (1-D_{t-}^\phi)\log^2 (e^{\rho t}X^\phi_{t-})dD^\phi_{t} . \label{eq:QV.temp2}
\end{align}
Finally, multiplying \eqref{eq:QV.temp2} by $1-D_t^\phi = \I_{\{t<\zeta^\phi\}}$, observing that $(1-D^\phi_{t-})(1-D^\phi_{t})=(1-D^\phi_{t})$ and $(1-D^\phi_{t})dD^\phi_t=0$ a.s., integrating over the interval $[0,t]$, taking an expectation, and using the fact that the random measure $(1-D^\phi_{t-})\big(\hat{\mu}(dt,dy)-\pi^{\phi,1}(X_{t-}^\phi,y)dydt\big)$ is a  martingale measure, one arrives at \eqref{eq:Kvar2}. 
\end{proof}
Next, we give an alternative formulation of the value of $K_{var}$ in terms of Feynman-Kac semigroups.
\begin{proposition}\label{prop.semi.g.1} Let $S$ be given by $S_t=e^{\rho t}X_t^\phi(1-D_t^\phi)$ with $X_0=x>0$ and $D^\phi_0=d\in\{0,1\}$. Also, let $\Pc_t^{1}$ (resp., $\Pc_t^{\phi,1}$) be the (resp., the subordinate) Feynman-Kac semigroup defined in Eq.~\eqref{fk.1} (resp., Theorem~\ref{mkvian.sub.biv}). Then,  $K_{var}$  can be represented as follows, 
\begin{align}
K_{var}&=\I_{\{\zeta^\phi>0\}}\,\gamma\,\int_0^t (\Pc^{\phi,1}_u\sigma^2(x))du+\I_{\{\zeta^\phi>0\}}\int_0^t\Big(\int_{(0,\infty)}\Pc^{\phi,1}_uf(s,\cdot)(x)\nu(ds)\Big)du,
\label{semi.arrange.1}
\end{align}
with
\begin{align}
f(s,y)&=\Pc^{1}_s\log^2(y)-2\log(y)\Pc^{1}_s\log(y)+\log^2(y)\Pc^{1}_s1,\quad y=X^\phi_{u}.
\label{semi.arrange.2}
\end{align}
\end{proposition}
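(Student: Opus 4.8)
The plan is to start from the explicit formula \eqref{eq:Kvar2} of Theorem \ref{thm.QV} and to recognize each of its two summands as an integral in time of a subordinate Feynman--Kac semigroup applied to an explicit function, using the semigroup representation \eqref{semi.sub.bi.1} of Theorem \ref{mkvian.sub.biv} together with the definition \eqref{sublevmeas.1} of the state-dependent L\'evy density $\pi^{\phi,1}$. Throughout I write $d=D_0^\phi$ and note $\I_{\{\zeta^\phi>0\}}=1-d$; if $d=1$ both sides of \eqref{semi.arrange.1} are zero, so it suffices to treat $d=0$. The basic identity I will use repeatedly is that, since $\{\zeta^\phi>u\}=\{D_u^\phi=0\}$, taking $v\equiv0$ and $u=g$ in \eqref{semi.sub.bi.1} gives $\E_{(x,d)}\big[\I_{\{\zeta^\phi>u\}}g(X_u^\phi)\big]=\E_{(x,d)}\big[(1-D_u^\phi)g(X_u^\phi)\big]=(1-d)\,\Pc_u^{\phi,1}g(x)$ for any $g\in C([0,\infty])$.

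Applying this identity with $g=\sig^2$ to the first summand of \eqref{eq:Kvar2} immediately yields $\int_0^t\E_x[\I_{\{\zeta^\phi>u\}}\gamma\sig^2(X_u^\phi)]\,du=(1-d)\,\gamma\int_0^t\Pc_u^{\phi,1}\sig^2(x)\,du$, which is the first summand of \eqref{semi.arrange.1}. For the second summand, I would first work on the inner expression $\int_\R\log^2\!\big(1+\tfrac{y}{x}\big)\pi^{\phi,1}(x,y)\,dy$ with $x=X_{u-}^\phi$: insert $\pi^{\phi,1}(x,y)=\int_{(0,\infty)}p^1(s,x,x+y)\nu(ds)$ from \eqref{sublevmeas.1}, interchange the $dy$- and $\nu(ds)$-integrals, and change variables $z=x+y$. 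Because $p^1(s,x,\cdot)\equiv0$ off $E$, the $y$-integral over $\R$ becomes a $z$-integral over $E$, and $\log^2\!\big(1+\tfrac{y}{x}\big)=\log^2(z/x)=\log^2 z-2\log x\,\log z+\log^2 x$. Recalling $\Pc_s^1 g(x)=\int_E g(z)p^1(s,x,z)\,dz$, the three resulting $z$-integrals are precisely $\Pc_s^1\log^2$, $\Pc_s^1\log$ and $\Pc_s^1 1$ evaluated at $x$, so the inner expression equals $\int_{(0,\infty)}f(s,x)\nu(ds)$ with $f$ as in \eqref{semi.arrange.2}. Next, since $X^\phi$ and $D^\phi$ have a.s.\ only countably many jumps, replacing $X_{u-}^\phi$ by $X_u^\phi$ and $\{\zeta^\phi>u\}=\{D_{u-}^\phi=0\}$ by $\{D_u^\phi=0\}$ alters the integrand only on a Lebesgue-null set of $u$; moving $\E_x$ inside the $du$-integral, then interchanging $\E_x$ with $\nu(ds)$, and finally applying the basic identity above with $g=f(s,\cdot)$, gives $\int_0^t\E_x[\,\cdots]\,du=(1-d)\int_0^t\big(\int_{(0,\infty)}\Pc_u^{\phi,1}f(s,\cdot)(x)\nu(ds)\big)du$, the second summand of \eqref{semi.arrange.1}.

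The only point requiring care is the repeated interchange of order of integration; the main obstacle is purely that bookkeeping rather than anything deep. It is handled cleanly by observing that every integrand in sight is nonnegative ($\sig^2\ge0$, $\log^2\ge0$, and hence $f(s,x)=\int_E\log^2(z/x)p^1(s,x,z)\,dz\ge0$), so all the interchanges — of $dy$ with $\nu(ds)$, of $\E_x$ with $du$, and of $\E_x$ with $\nu(ds)$ — are legitimate by Tonelli's theorem, while finiteness of the final expression is already guaranteed by Theorem \ref{thm.QV}. Collecting the two summands with the prefactor $1-d=\I_{\{\zeta^\phi>0\}}$ gives exactly \eqref{semi.arrange.1}.
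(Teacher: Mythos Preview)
Your proof is correct and follows the same overall architecture as the paper's: identify $\E_x[\I_{\{\zeta^\phi>u\}}g(X_u^\phi)]=(1-d)\Pc_u^{\phi,1}g(x)$ via \eqref{semi.sub.bi.1}, substitute the definition \eqref{sublevmeas.1} of $\pi^{\phi,1}$ into the jump term, and rewrite the resulting $z$-integral as a combination of $\Pc_s^1\log^2$, $\Pc_s^1\log$, $\Pc_s^1 1$.

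The one genuine difference is how you justify the interchange of the $dy$-integral with the $\nu(ds)$-integral. The paper invokes Proposition 32.5(iii) of Sato, which requires verifying the regularity condition $\|\Pc_t^1 f - f\|=O(t)$ as $t\downarrow 0$ for $f(y)=\log^2(y/x)$; this in turn is reduced to the estimate $\int_E(y-x)^2 p^1(t,x,y)\,dy=O(t)$, established via a result of McKean. You instead exploit the nonnegativity of $\log^2(z/x)\,p^1(s,x,z)$ and apply Tonelli directly, which is both shorter and more elementary here. The paper's route has the merit of being the standard device for general (possibly signed) integrands against an infinite-activity L\'evy measure, but for this particular integrand your Tonelli argument is entirely sufficient and avoids the auxiliary semigroup estimate.
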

\begin{proof} From Theorem~\ref{mkvian.sub.biv} we observe that  $\E[(1-D^\phi_t)f(X^\phi_t)]=(1-d)\Pc_t^{\phi,1}f(x)$. Therefore, the first term of Eq.~\eqref{semi.arrange.1} follows immediately. From Proposition 32.5$(iii)$ in~\cite{sato}, p.215, we know that if $\|\Pc_t f(x)-f(x)\|= O(t)$ as $t\downarrow0$, then $\int_{\mathbb R} f(y)\pi^\phi(x,y)dy=\int_{(0,\infty)}(\Pc_sf(x)-f(x))\nu(ds)$. Since $f(y)=\log^2(y/x)=O(|x-y|^2)$ as $y\rightarrow x$, then to prove $\|\Pc^1_t f(x)-f(x)\|= O(t)$ it suffices to show that $\int_E(y-x)^2p^1(t,x,y)dy=O(t)$ as $t\downarrow0$. Indeed, the latter holds true since  for an arbitrary $\epsilon>0$, we have $\int_E \I_{\{|x-y|<\epsilon\}} (y-x)^2p^1(t,x,y)dy\leq C t$ as $t\downarrow0$ (cf., \cite{mckean-1}, Theorem 4.5). Therefore, 
\begin{align}
&\int_{\mathbb R}\log^2\Big(1+\frac{y}{X^\phi_{u-}}\Big)\pi^{\phi,1}(X_{u-}^\phi,y)dy \\
		&=\int_{(0,\infty)}\Big(\int_{E\setminus\{x\}}\log^2\Big(\frac{y}{X^\phi_{u-}}\Big)p^1(s,X_{u-}^\phi,y)dy\Big)\nu(ds)\\
		&=\int_{(0,\infty)}\big(\Pc^1_s\log^2(X_{u-})-2\log(X_{u-}) \Pc^1_s\log(X_{u-})+\log^2(X_{u-}) \Pc^1_s1\big)\nu(ds).
\end{align}
The rest follows from observing that $\E[(1-D^\phi_u)f(s,X^\phi_u)]=(1-d)\Pc^{\phi,1}_uf(s,x)$.
\end{proof}

%
%

\section{Spectral Expansions}
\label{sec:spectral0}
In order for the results of Sections \ref{sem.gens.ito} and \ref{sec:VS} to be useful, we need a practical way to construct the FK semigroups $\{\Pc^1_t,t\geq 0\}$ and $\{\Pc_t^{\phi,1},t\geq 0\}$ as well as the associated transition densities $p^1(t,x,y)$ and $p^{\phi,1}(t,x,y)$ (recall that we had dropped the super-index $\alpha$, since we only need the case of $\alpha=1$).  Spectral theory, or more specifically, the theory of eigenfunction expansions, provides a straightforward method of constructing these operators and functions.  Below, we review some useful results relating to eigenfunction expansions.  A detailed description of the spectral theorem for self-adjoint operators in a Hilbert space is given in Appendix \ref{sec:spectral}.
\par
Recall that the FK semigroup $\{\Pc^1_t, t \geq 0\}$ has infinitesimal generator $\A^1$ \eqref{infg.1}.  With $\A^1$ we associate a \emph{scale density} $\s$ and \emph{speed density} $\m$
\begin{align}
\s(x)
		&:=		\exp \( - \int_{x_0}^x \frac{2 \, b(y)}{a^2(y)} dy \) , &
\m(x)
		&:=		\frac{2}{a^2(x)}\exp \( \int_{x_0}^x \frac{2 \, b(y)}{a^2(y)} dy \) ,	\label{eq:m}
\end{align}
where the point $x_0$ is arbitrarily chosen in $E=(0,\infty)$.  The generator $\A^1$ \eqref{infg.1} with domain
\begin{align}
{\rm Dom}(\A^1)
		&=		\{ f \in L^2(E,\m) : \A^1 f \in L^2(E,\m) \} , 
\end{align}
is a self-adjoint operator in the Hilbert space $\H = L^2(E,\m)$. 
\footnote{$\A^1$ is dense in $\H$ implies that $\A^1$ has a unique self-adjoint extension $\overline{\A^1}$ with ${\rm Dom}(\overline{\A^1})=\H$.  We will not distinguish between $\A^1$ and its extension $\overline{\A^1}$.}  Therefore, we have spectral representations for the operators $\A^1$ and $g(\A^1)$, where $g$ is any Borel-measurable function.
\par
Let $\psi_\lam$ and $\lam$ be the generalized eigenfunctions/values of  $-\A^1$.  Note that, since $\A^1$ is the generator of a contraction semigroup $\{\Pc^1_t, t \geq 0\}$, the eigenvalues of $-\A^1$ are non-negative.  The operator $\Pc^1_t$ can be written as $\Pc^1_t = e^{t\A^1}$ (for a general Banach space, when the generator $\A^1$ is unbounded the latter is understood as a strong limit via the Yosida approximation (see \cite{pazy-1}, Corollary 3.5)).  Thus, using \eqref{eq:g.A}, for any $f \in \H$ we have
\begin{align}
\Pc^1_t f(x)
		&=		\sum_\lam e^{-\lam t} c_\lam \psi_\lam(x) , &
c_\lam
		&=		\(\psi_\lam,f\)	, &
\(\psi_\lam,f\)
		&=		\int_E \overline{\psi_\lam}(x) f(x) \m(x) dx , \label{eq:Pt.eigen}
\end{align}
where $\overline{\psi_\lam}$ indicates the complex conjugate of $\psi_\lam$.  The notation $\sum_\lam (\cdots)$ is shorthand for
\begin{align}
\sum_\lam \, (\cdots)
	&=		\sum_{\lam_n \in \sig_d(-\A^1)} (\cdots) + \int_{\lam_\om \in \sig_c(-\A^1)} (\cdots) \, d\om ,
\end{align}
where $\sig_d(-\A^1)$ and $\sig_c(-\A^1)$ are the discrete and continuous portions of the spectrum of $-\A^1$ respectively and $d\om$ is the Lebesgue measure.

Similarly, using the functional calculus of Theorem~\ref{thm:spectral}, the subordinated semigroup $\Pc_t^{\phi,1}$ defined in Eq.~\eqref{subsem.1} can be obtained as,
\begin{align}
\Pc_t^{\phi,1} f(x)
		&=		\sum_\lam e^{-\phi(\lam) t} c_\lam \psi_\lam(x) , &
c_\lam
		&=		\(\psi_\lam,f\)	, &
\(\psi_\lam,f\)
		&=		\int_E \overline{\psi_\lam}(x) f(x) \m(x) dx , \label{eq:Pt.eigen.sub}
\end{align}
where $\phi(\lam)$ is the Laplace exponent of $T$, defined in Eq.~\eqref{eq:LevyKintchine}. One should mention that the recent book of \cite{schilling-song-vondraceck-1} is an excellent reference for Bochner subordination of semigroups (for example, the last result above is obtained from their Remark 12.4, p.133).

\subsection{Uniform convergence of the discrete spectrum}\label{sec.trace.class}  
In general, when the spectrum is discrete, the spectral expansion~\eqref{eq:Pt.eigen} of the semigroup $\Pc^1$ (and hence, $\Pc^{\phi,1}$)  leads to an infinite series. When the semigroup $\Pc^1$ is of trace class, then it is possible to establish uniform convergence for the expansions as follows.  Assume that the eigenvalues of $-\A^1$ satisfy the condition
\begin{align}
\sum_\lam e^{-\lam t}
		&<		\infty , &
\forall \, t
		&>		0 , \label{eq:trace}
\end{align}
so that the FK semigroup $\{\Pc^1_t,t\geq0\}$ is \emph{trace class} (see Section 7.2 of \cite{davies-3}).  According to Theorem 7.2.5 of \cite{davies-3}, if $\{\Pc^1_t,t\geq0\}$ is trace class, then the eigenfunctions $\psi_\lam(x)$ are continuous functions with the global estimate $|\psi_\lam(x)| \leq e^{\lam t/2} \sqrt{p^1(t,x,x)/\m(x)}$ for all $t\geq0$. Setting $f(x) = \del_y(x)$ in \eqref{eq:Pt.eigen} yields the transition density $p^1(t,x,y)$ of the FK semigroup.
\begin{align}
p^1(t,x,y)
		&=	\m(y) \sum_\lam e^{- \lam t} \psi_\lam(x) \psib_\lam(y) . \label{eq:ptxy.eigen}
\end{align}
The sum on the right-hand side of \eqref{eq:ptxy.eigen} converges uniformly in $x$ and $y$ on compacts.  This ensures that, in addition to the $L^2$ convergence, the eigenfunction expansion \eqref{eq:Pt.eigen} converges uniformly in $x$ on compacts for all $f \in L^2(E,\m)$ and $t>0$.  This follows from the Cauchy-Schwarz bound for the expansion coefficients $|c_\lam|\leq \sqrt{(f,f)}$, the eigenfunction estimate, and the trace class condition \eqref{eq:trace}.
\par
In this case, the spectral expansion for the subordinated FK semigroup $\{\Pc_t^{\phi,1},t \geq 0\}$ can be obtained by conditioning on the subordinator $T_t$.  For any $f \in \H$ we have
\begin{align}
\Pc_t^{\phi,1} f(x)
		&=		\E_x \[ \Pc^1_{T_t} f(x) \]
		=			\E \[ \E_x \[\Pc^1_{T_t} f(x) | T_t \] \]
		=		\sum_\lam \E \[ e^{-\lam T_t} \] c_\lam \psi_\lam(x)
		=			\sum_\lam e^{-\phi(\lam) t} c_\lam \psi_\lam(x) , \label{eq:Pphi.eigen}
\end{align}
where $c_\lam = \(\psi_\lam,f\)$ and $\phi(\lam)$ is the L\'{e}vy exponent of the subordinator $T$.  If we assume that the Laplace exponent $\phi$ is such that
\begin{align}
\sum_\lam e^{-\phi(\lam) t}
		&<		\infty , &
\forall \, t
		&>		0 , \label{eq:trace2}
\end{align}
then the subordinated FK semigroup $\{\Pc_t^{\phi,1},t\geq0\}$ is trace class.  If we further assume that the eigenfunctions $\psi_\lam$ of the FK semigroup $\{\Pc^1_t,t\geq0\}$ have a bound independent of $\lam$ on each compact interval $K=[a,b]\subset E$ (that is, if there exist constants $C_K$, which depend on the compact interval $K$ but are {\em independent} of $\lam$, such that $|\psi_\lam(x)|\leq C_K$ for all $\lam \in \sig_d(-\A^1)$) then, in addition to the $L^2$ convergence, the eigenfunction expansion of the subordinated FK semigroup \eqref{eq:Pphi.eigen} converges uniformly in $x$ on compacts for all $f \in L^2(E,\m)$ and $t>0$.  As above, setting $f(x) = \del_y(x)$ in \eqref{eq:Pphi.eigen} yields the transition density $p^{\phi,1}(t,x,y)$ of the subordinated FK semigroup
\begin{align}
p^{\phi,1}(t,x,y)
		&=	\m(y) \sum_\lam e^{- \phi(\lam) t} \psi_\lam(x) \psib_\lam(y) . \label{eq:pphitxy}
\end{align}
The sum in \eqref{eq:pphitxy} is uniformly convergent on compacts in $x$ and $y$. Note that the semigroup corresponding to the JDCEV process of Section \ref{sec:JDCEV} is of trace class.

%
%

\section{Examples}
\label{sec:examples}
In this Section, we compute $K_{var}$ \eqref{eq:Kvar2} explicitly (up to an integral with respect to the L\'evy measure $\nu$ of the subordinator $T$), when the background Feller diffusion $X$ is modeled as (i) a geometric Brownian motion with constant killing rate and (ii) a Jump-to-default Constant Elasticity of Variance process.


\subsection{Example: geometric Brownian motion with default}
\label{sec:GBM}
Perhaps the most widely recognized non-negative diffusion in finance is the geometric Brownian motion process (GBM).  Here, we consider GBM with a constant killing rate, which is a diffusion of the form \eqref{eq:dX}-\eqref{eq:drift,vol} with constant parameters $k(x)=k\geq0$ and $\sig(x)=\sig>0$ (excuse the abuse of notation).  The generator $\A^1$ of the FK semigroup $\{\Pc^1_t,t \geq 0\}$ and the corresponding speed density $\m(x)$ are given by
\begin{align}
\A^1
		&=		\tfrac{1}{2} \sig^2 x^2 \, \d_{xx}^2 + (\mu + k)x \, \d_x - k, \label{eq:A.GBM} \\
\m(x)
		&=		\frac{2}{x\sig^2} \exp \( 2 \, \xi \frac{1}{\sig} \log x \) , &
\xi
		&=		\frac{\mu+k}{\sig}-\frac{\sig}{2} . \label{eq:m.GBM}
\end{align}
Most commonly, the FK transition density $p^1(t,x,y)$ of GBM with default is written as
\begin{align}
p^1(t,x,y)
		&=		\frac{e^{-kt}}{y\sig \sqrt{2\pi t}}\exp\(\frac{ -\( \log y - \log x -( \mu + k - \sig^2/2)t \)^2}{2 \sig^2 t}\). \label{eq:p.GBM}
\end{align}
Due to the fact that $\A^1$ \eqref{eq:A.GBM} is a self-adjoint operator on the Hilbert space $\H=L^2(E,\m)$, with $\m$ given by \eqref{eq:m.GBM}, the FK transition density $p^1(t,x,y)$ also has an (generalized) eigenfunction expansion of the form \eqref{eq:ptxy.eigen}.  The eigenfunctions and eigenvalues of $-\A^1$ are given in the following Theorem.
\begin{theorem}[GBM Eigenvalues and Eigenfunctions]
Let the operator $\A^1$ be given by \eqref{eq:A.GBM}.  The spectrum of $\A^1$  is purely continuous: $\sig(\A^1)=\sig_c(\A^1)$.  The improper eigenfunctions of $-\A^1$ and the corresponding improper eigenvalues are
\begin{align}
\psi_\om(x)
		&=		\sqrt{\frac{\sig}{4 \pi}} \exp \( ( i \om -\xi ) \frac{1}{\sig }\log x \) , &
\lam_\om
		&=		\frac{1}{2} \( \om^2 + \xi^2 \) + k, & 
\om
		&\in	(-\infty,\infty) , \label{eq:eigen.GBM}
\end{align}
where $\xi$ is given in \eqref{eq:m.GBM}.
\end{theorem}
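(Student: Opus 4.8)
The plan is to conjugate $\A^1$ into a constant shift of the free Laplacian on $\R$ and then read everything off from classical Fourier/Plancherel theory. First I would make the logarithmic change of variable $z=\sig^{-1}\log x$, which maps $E=(0,\infty)$ diffeomorphically onto $\R$. Using $x\,\d_x=\sig^{-1}\d_z$ and $x^2\,\d_{xx}=\sig^{-2}\d_{zz}-\sig^{-1}\d_z$, the operator \eqref{eq:A.GBM} turns into the constant-coefficient operator $\tfrac12\d_{zz}+\xi\,\d_z-k$, with $\xi$ as in \eqref{eq:m.GBM}, while the speed density transforms according to $\m(x)\,dx=\tfrac2\sig e^{2\xi z}\,dz$. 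Consequently $\H=L^2(E,\m)$ is isometrically isomorphic, under $x=e^{\sig z}$, to $\widetilde\H:=L^2(\R,\tfrac2\sig e^{2\xi z}\,dz)$, on which $\A^1$ is realized as $\tfrac12\d_{zz}+\xi\,\d_z-k$.

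Next I would remove the first-order term by the gauge transform $\psi=e^{-\xi z}g$. Since $\int_\R|e^{-\xi z}g|^2\,\tfrac2\sig e^{2\xi z}\,dz=\tfrac2\sig\int_\R|g|^2\,dz$, the map $g\mapsto e^{-\xi z}g$ is unitary from $L^2(\R,\tfrac2\sig\,dz)$ onto $\widetilde\H$, and a one-line computation shows it conjugates $\tfrac12\d_{zz}+\xi\,\d_z-k$ into $\tfrac12\d_{zz}-(k+\tfrac12\xi^2)$, i.e.\ into the free operator $\tfrac12\d_{zz}$ minus the constant $k+\tfrac12\xi^2$. The spectral theory of $-\tfrac12\d_{zz}$ on $L^2(\R)$ is classical: its spectrum is purely absolutely continuous, equal to $[0,\infty)$, there are no $L^2$ eigenfunctions, and the delta-normalized generalized eigenfunctions are the plane waves $(2\pi)^{-1/2}e^{i\om z}$ with eigenvalue $\tfrac12\om^2$, $\om\in\R$. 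Transporting to $L^2(\R,\tfrac2\sig\,dz)$ changes the normalization by the Jacobian: one checks $\int_\R\big(\sqrt{\sig/4\pi}\,e^{i\om z}\big)\overline{\big(\sqrt{\sig/4\pi}\,e^{i\om'z}\big)}\,\tfrac2\sig\,dz=\delta(\om-\om')$, which pins the constant to $\sqrt{\sig/4\pi}$.

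Since both the change of variable and the gauge map are unitary and the latter only shifts the operator by a constant, $-\A^1$ on $\H$ has purely (absolutely) continuous spectrum $[k+\tfrac12\xi^2,\infty)$ and no discrete part, so $\sig(\A^1)=\sig_c(\A^1)$ --- this is the ``no bound states'' assertion, consistent with $0$ and $\infty$ being natural boundaries for GBM (\cite{borodin-salminen}). Pulling the normalized plane waves back through $\psi=e^{-\xi z}g$ and then through $z=\sig^{-1}\log x$ gives $\psi_\om(x)=\sqrt{\sig/4\pi}\,\exp\big((i\om-\xi)\sig^{-1}\log x\big)$ with $\lam_\om=\tfrac12\om^2+k+\tfrac12\xi^2=\tfrac12(\om^2+\xi^2)+k$, which is exactly \eqref{eq:eigen.GBM}; equivalently the eigenfunctions can be guessed at once since $\A^1\psi=-\lam\psi$ is an Euler equation whose trial solution $\psi=x^\beta$ gives the indicial equation $\tfrac12\sig^2\beta^2+(\mu+k-\tfrac12\sig^2)\beta+\lam-k=0$, whose two roots have real part $-\xi/\sig$ and are non-real precisely when $\lam>k+\tfrac12\xi^2$. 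One may sanity-check by substituting $\psi_\om$ directly into $-\A^1\psi_\om=\lam_\om\psi_\om$, or by rewriting the Gaussian kernel \eqref{eq:p.GBM} using $(2\pi t)^{-1/2}e^{-a^2/2t}=(2\pi)^{-1}\int_\R e^{i\om a-\om^2t/2}\,d\om$ and completing the square to recover \eqref{eq:ptxy.eigen} with these $\psi_\om$ and $\lam_\om$.

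The argument has no deep step; the only point requiring care is the bookkeeping: one must track the two Jacobian factors --- from $x\mapsto z$ and from the weight $e^{2\xi z}$ --- together with the $(2\pi)^{-1}$ from Plancherel, so that the normalization constant comes out as exactly $\sqrt{\sig/4\pi}$ rather than off by a numerical factor. One should also note that although $e^{\pm\xi z}$ is an \emph{unbounded} multiplier, the gauge map is unitary between the \emph{weighted} $L^2$-spaces, so the generalized eigenfunction expansion of the free Laplacian transfers verbatim and creates no spurious $L^2$ eigenfunctions; this is what guarantees that the spectrum has no discrete part.
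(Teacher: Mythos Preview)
Your argument is correct and the computations check out: the change of variable $z=\sig^{-1}\log x$ yields $\tfrac12\d_{zz}+\xi\d_z-k$ with speed density $\tfrac{2}{\sig}e^{2\xi z}\,dz$, and the gauge $\psi=e^{-\xi z}g$ is indeed unitary between the weighted spaces and conjugates the operator into $\tfrac12\d_{zz}-(k+\tfrac12\xi^2)$. The normalization $\sqrt{\sig/4\pi}$ is computed correctly, and pulling back the plane waves gives exactly \eqref{eq:eigen.GBM}.

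Your route differs genuinely from the paper's. The paper's proof is a bare verification: it checks by direct substitution that the stated $\psi_\om$ and $\lam_\om$ satisfy $-\A^1\psi_\om=\lam_\om\psi_\om$ and the boundedness condition \eqref{eq:bound}, and confirms the delta-orthogonality via $\tfrac{1}{2\pi}\int e^{-i(\om-\nu)x}\,dx=\del(\om-\nu)$. It does not derive the eigenfunctions, and the assertion $\sig(\A^1)=\sig_c(\A^1)$ is simply stated rather than argued. Your approach, by contrast, is constructive: the unitary equivalence with $\tfrac12\d_{zz}$ (shifted) on $L^2(\R)$ \emph{produces} the eigenfunctions, automatically delivers the correct normalization, and --- more importantly --- rigorously transfers the absence of point spectrum and the completeness of the generalized eigenfunction expansion from the free Laplacian, so the claim that the spectrum is purely continuous is actually proved rather than asserted. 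The paper's verification is quicker to write down but relies on the reader trusting that nothing has been missed (no embedded eigenvalues, completeness of the family); your reduction to a model operator settles these issues cleanly at the cost of a little more bookkeeping.
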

\begin{proof}
One can check directly that the eigenfunctions and eigenvalues of \eqref{eq:eigen.GBM} satisfy the improper eigenvalue equation $-\A^1 \psi_\om = \lam_\om \psi_\om$ and the boundedness condition \eqref{eq:bound}.  The orthogonality relation $\(\psi_\om,\psi_\nu\)=\del(\om-\nu)$ follows by noting that $\tfrac{1}{2\pi}\int e^{-i(\om-\nu)x}dx=\del(\om-\nu)$.
\end{proof}
\begin{remark}
Transition density \eqref{eq:p.GBM} can be obtained by writing eigenfunction expansion \eqref{eq:ptxy.eigen} with the eigenfunctions and eigenvalues of \eqref{eq:eigen.GBM}, making a change of variables $z=\tfrac{1}{\sig}\log y$, and using the Fourier transform of a Gaussian density
\begin{align}
\frac{1}{\sqrt{2\pi a^2}}\exp\( \frac{-(z-b)^2}{2 a^2}\)
		&=		\int_{(-\infty,\infty)} \frac{1}{2 \pi} \exp\(i \( b - z \)  \om - \frac{a^2 \om^2}{2} \) d\om .
\end{align}
\end{remark}
\begin{remark}
When the underlying $S$ is given by \eqref{eq:S} and the background diffusion $X$ is modeled as a GBM with default, the at-the-money skew of the model-induced implied volatility surface is controlled by $\mu$.  For $\mu<0$ jumps in $S$ will be preferentially downward, causing a negative at-the-money skew.  For $\mu>0$, jumps in $S$ will be preferentially upward, causing a positive at-the-money skew.  As the skew for equity options is typically negative, it makes sense to choose $\mu<0$.
\end{remark}
\noindent
We are now in position to compute \eqref{eq:Kvar2} when $X$ is modeled as a GBM with default.
\begin{proposition}
\label{GBM.QV}
Let $X$ be a GBM process with default as described above.  Then we have $(i)$
\begin{align}
\int_0^t \E_x \[ \I_{\{\zeta^\phi>u\}} \gamma \sig^2(X_u^\phi) \] du 
		&=			\gamma \, \sig^2 \( \frac{1-e^{-k t}}{k} \) ,
\end{align}
and $(ii)$
\begin{align}
\int_0^t  \E_x \[ \I_{\{\zeta^\phi>u\}}\int_{\mathbb R}\log^2\Big(1+\frac{y}{X^\phi_{u-}}\Big)\pi^{\phi,1}(X_{u-}^\phi,y)dy \] du
		&=\sig^2 \( \frac{1 - e^{-\phi(k) t }}{\phi(k)} \) \int_{(0,\infty)}e^{-k s} \(  s + s^2 \xi^2 \) \nu(ds) .
\end{align}
\end{proposition}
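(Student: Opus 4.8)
The plan is to apply Proposition~\ref{prop.semi.g.1} and then evaluate the two resulting building blocks explicitly, using that for GBM with default all coefficients are constant ($\sigma(x)\equiv\sigma$, $k(x)\equiv k$). The blocks needed are $\Pc^{\phi,1}_u1(x)$ and the action of $\Pc^1_s$ on $1,\log,\log^2$. Since $k$ is constant, $\Pc^1_sg(x)=e^{-ks}\E_x[g(X_s)]$, and since under the diffusion law $\log X_s=\log x+(\mu+k-\tfrac{\sigma^2}{2})s+\sigma B_s=\log x+\sigma\xi s+\sigma B_s$ is Gaussian with variance $\sigma^2 s$ (with $\xi$ as in \eqref{eq:m.GBM}), one gets $\Pc^1_s1(x)=e^{-ks}$, $\Pc^1_s\log(x)=e^{-ks}(\log x+\sigma\xi s)$ and $\Pc^1_s\log^2(x)=e^{-ks}\bigl((\log x+\sigma\xi s)^2+\sigma^2 s\bigr)$. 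Moreover $\Pc^{\phi,1}_u1(x)=\int_{[0,\infty)}\Pc^1_s1(x)\,\pi_u(ds)=\E[e^{-kT_u}]=e^{-\phi(k)u}$ by \eqref{eq:set} (equivalently, by \eqref{subkill.1} and \eqref{eq:LevyKintchine} the subordinate killing rate is the constant $\phi(k)=\gamma k+\int_{(0,\infty)}(1-e^{-ks})\nu(ds)$), so $\int_0^t\Pc^{\phi,1}_u1(x)\,du=\bigl(1-e^{-\phi(k)t}\bigr)/\phi(k)$.

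Part $(i)$ is then immediate from the first term of \eqref{semi.arrange.1}: since $\sigma^2(x)\equiv\sigma^2$ we have $\Pc^{\phi,1}_u\sigma^2(x)=\sigma^2\Pc^{\phi,1}_u1(x)$, so that term equals $\gamma\sigma^2\int_0^t\Pc^{\phi,1}_u1(x)\,du=\gamma\sigma^2\bigl(1-e^{-\phi(k)t}\bigr)/\phi(k)$.

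For part $(ii)$ I would take $f(s,x)$ from \eqref{semi.arrange.2}, $f(s,x)=\Pc^1_s\log^2(x)-2\log x\,\Pc^1_s\log(x)+\log^2 x\,\Pc^1_s1(x)$, and substitute the three explicit expressions above. The $(\log x)^2$ and $\log x$ contributions cancel exactly, leaving $f(s,x)=\sigma^2 e^{-ks}(s+\xi^2 s^2)$, which is independent of $x$. Hence $\int_{(0,\infty)}f(s,\cdot)\,\nu(ds)$ is the deterministic constant $\sigma^2\int_{(0,\infty)}e^{-ks}(s+\xi^2 s^2)\,\nu(ds)$, and the second term of \eqref{semi.arrange.1} collapses to $\bigl(\sigma^2\int_{(0,\infty)}e^{-ks}(s+\xi^2 s^2)\,\nu(ds)\bigr)\int_0^t\Pc^{\phi,1}_u1(x)\,du$, which is the claimed expression.

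The only step worth singling out is the algebraic cancellation that reduces $f(s,x)$ to a function of $s$ alone; it is this collapse that makes the answer factor into a deterministic time-multiplier times a single integral against $\nu$ (for a general background diffusion $f(s,\cdot)$ stays genuinely $x$-dependent and $\int_0^t\Pc^{\phi,1}_uf(s,\cdot)(x)\,du$ has no such closed form). The remaining points are routine: the Gaussian moment formulas, and the Fubini interchange of the $\nu$-integral with time-integration implicit in \eqref{semi.arrange.1}, both controlled by $\int_{\R}(|y|^2\wedge1)\pi^{\phi,1}(x,y)\,dy<\infty$ together with the finite moments of $\log X_s$, i.e.\ the same $O(t)$ estimate already used in the proof of Proposition~\ref{prop.semi.g.1}.
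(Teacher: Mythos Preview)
Your route through Proposition~\ref{prop.semi.g.1} is correct and much more elementary than the paper's. For part~$(ii)$ the paper works via the generalized eigenfunction expansion of $-\A^1$: it writes $p^1(s,x,y)$ in terms of the improper eigenfunctions $\psi_\omega$, evaluates $\int_E \log^n(y)\,\overline{\psi_\omega}(y)\,\m(y)\,dy$ as derivatives of $\delta(\omega+i\xi)$, integrates by parts in $\omega$, and finally evaluates at $\omega=-i\xi$. Your argument---compute $\Pc^1_s$ on $1,\log,\log^2$ directly from the Gaussian law of $\log X_s$ and observe that the $x$-dependence of $f(s,x)$ cancels algebraically---bypasses the spectral machinery entirely and makes the factorization $\bigl(\tfrac{1-e^{-\phi(k)t}}{\phi(k)}\bigr)\cdot\sigma^2\!\int e^{-ks}(s+\xi^2 s^2)\,\nu(ds)$ transparent. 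Both approaches land on the same formula for~$(ii)$; yours is the cleaner one here, while the paper's spectral method is what generalizes to the JDCEV case where no closed-form moments of $\log X_s$ are available.

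For part~$(i)$, note that your computation gives $\gamma\sigma^2\bigl(1-e^{-\phi(k)t}\bigr)/\phi(k)$, not the $\gamma\sigma^2\bigl(1-e^{-kt}\bigr)/k$ stated in the proposition. This is not a gap in your argument: the paper's proof asserts $\E_x[\I_{\{\zeta^\phi>u\}}]=e^{-ku}$, but with constant killing rate one has $\P_x(\zeta^\phi>u)=\P(T_u<\zeta)=\E[e^{-kT_u}]=e^{-\phi(k)u}$ (equivalently, $k^\phi\equiv\phi(k)$ from \eqref{subkill.1} and \eqref{eq:LevyKintchine}, exactly as you observed). So the discrepancy is an error in the paper's statement and proof of~$(i)$, and your expression is the correct one---consistent, incidentally, with the same time-factor $(1-e^{-\phi(k)t})/\phi(k)$ that both you and the paper obtain in~$(ii)$.
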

\begin{proof}
See Appendix \ref{GBM.QV.proof}.
\end{proof}


\subsection{Example: Jump-to-Default constant elasticity of variance}
\label{sec:JDCEV}
The Constant Elasticity of Variance (CEV) model of \cite{cox-1} is a non-negative diffusion of the form \eqref{eq:dX}-\eqref{eq:drift,vol}, where $k(x)=0$ and 
\begin{align}
\sig(x) 
		&=		a x^\beta .
\end{align}
Here, $\beta < 0$ is the volatility elasticity parameter and $a>0$ is the volatility scale parameter.  The specification $\beta < 0$ is consistent with the leverage effect (volatility increases when the stock price falls).  For $\beta < 0$ the CEV process hits zero with positive probability.  In particular, for $\beta \in [-1/2,0)$, the origin is an exit boundary.  For $\beta < -1/2$ the origin is a regular boundary specified as a killing boundary.
\par
\cite{carr-linetsky-1} extend the CEV model to include a possible jump-to-default.  Their model is refereed to as jump-to-default CEV or, more succinctly, JDCEV.  In the JDCEV framework, the jump to default has a killing rate which is an affine functions of the local variance
\begin{align}
k(x)
		&=		b + c \sig^2(x)
		=			b + c a^2 x^{2 \beta} ,
\end{align}
where $b \geq 0$ and $c \geq 0$.  Although for all $c>0$ default may only occur through a jump from a positive value.  When $c\geq1/2$ the zero boundary is {\em entrance} for the JDCEV diffusion, and thus, the diffusion cannot reach zero from the interior of $E$. The majority of the expressions developed in this Section hold for all $c>0$. However, {\em the credit-equity modeling framework developed in Sections~\ref{sec:model}--\ref{sec:VS} works exclusively for the case in which $c\geq1/2$} (i.e., the case in which zero is an entrance boundary). Therefore, one should keep in mind this restriction when applying the following more general results. 

For a JDCEV diffusion, the generator $\A^1$ of the FK semigroup $\Pc^1$ and the corresponding speed density are given by
\begin{align}
\A^1
		&=		\tfrac{1}{2}a^2 x^{2\beta+2} \, \d_{xx}^2 + \( \mu + b + c a^2 x^{2 \beta} \) x \, \d_x - \( b + c a^2 x^{2 \beta} \) .
					\label{eq:A.JDCEV} \\
\m(x)
		&=		\frac{2}{a^2} x^{2c-2-2\beta} \exp \( \eps \, A \, x^{-2\beta}\) , \qquad \qquad
A
		=		\frac{|\mu+b|}{a^2|\beta|} , \qquad \qquad
\eps
		=		\text{sign}(\mu + b) . \label{eq:JDCEV.m}
\end{align}
The FK transition density $p^1(t,x,y)$ for the JDCEV diffusion was obtained by \cite{carr-linetsky-1}
\begin{align}
p^1(t,x,y)
		&=		\frac{\m(x) |\mu+b|(xy)^{\tfrac{1}{2}-c}e^{\om \nu t /2}}{1-e^{-\om t}}
					\exp \( -\eps A \frac{x^{-2\beta}+y^{-2\beta}}{1-e^{-\eps \om t}} - \lam_1 t\)
					I_\nu\(\frac{A(xy)^{-\beta}}{\sinh (\om t/2)}\) , \label{eq:p.JDCEV}
\end{align}
where $I_\nu$ is the modified Bessel function of order $\nu$, the constants $A$ and $\eps$ are given in \eqref{eq:JDCEV.m} and
\begin{align}
\nu
		&=		\frac{1 + 2c}{2 |\beta|} , &
\om
		&=		2 |\beta(\mu+b)| , &
\lam_1
		&=		\begin{cases}
					2(\mu+b)(|\beta|+c)+b, &\mu+b>0 \\
					|\mu|, &\mu+b<0
					\end{cases} . \label{eq:nu.om.lam1}
\end{align}
Due to the fact that the unique extension of $\A^1$ is a self-adjoint operator in the Hilbert space $\H = L^2(E,\m)$ with $\m(x)$ given by \eqref{eq:JDCEV.m}, the FK transition density \eqref{eq:p.JDCEV} has an eigenfunction expansion of the form \eqref{eq:ptxy.eigen}.  The eigenfunctions and eigenvalues of $-\A^1$ are given in the following theorem, which is due to \cite{linetsky-mendoza-arriaga-1}.
\begin{theorem}[JDCEV Eigenvalues and Eigenfunctions]
\label{prop.jdcev.2}
Let $\A^1$ be given by \eqref{eq:A.JDCEV}.  When $|\mu+b| \neq 0$, the spectrum of $\A^1$ is purely discrete: $\sig(\A^1)=\sig_d(\A^1)$.  The eigenfunctions of $-\A^1$ and the corresponding eigenvalues are:
\begin{align}
\psi_n(x)
		&=		A^{\nu/2} \sqrt{\frac{(n+1)!|\mu+b|}{\Gam(\nu+n)}} 
					x \exp \( -\tfrac{1}{2}(1+\eps) A x^{-2\beta}\)
					L_{n-1}^\nu (A x^{-2\beta}) , & \label{eq:eigenfunc.JDCEV} \\
\lam_n
		&=		\om (n-1) + \lam_1 , \qquad n =	1, 2, 3, \cdots , \label{eq:eigenval.JDCEV}
\end{align}
where $L_n^\nu$ are the generalized Laguerre polynomials, $A$ and $\eps$ are given in \eqref{eq:JDCEV.m} and $\nu$, $\om$ and $\lam_1$ are given in \eqref{eq:nu.om.lam1}.
\end{theorem}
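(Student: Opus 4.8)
The plan is to reduce $-\A^1$ to a classical operator whose spectral resolution is known --- the generalized Laguerre operator, equivalently the radial harmonic oscillator --- and then read off \eqref{eq:eigenfunc.JDCEV}--\eqref{eq:eigenval.JDCEV} and the discreteness of the spectrum from that identification; the detailed argument appears in \cite{linetsky-mendoza-arriaga-1}.

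First I would verify \eqref{eq:eigenfunc.JDCEV}--\eqref{eq:eigenval.JDCEV} by direct substitution. With $A$ and $\eps$ as in \eqref{eq:JDCEV.m}, set $z = A x^{-2\beta}$ and look for eigenfunctions of the form $\psi(x) = x\,\exp\!\big(-\tfrac12(1+\eps)A x^{-2\beta}\big)\,g(z)$. Plugging this ansatz into $-\A^1\psi = \lam\psi$ with $\A^1$ from \eqref{eq:A.JDCEV} and simplifying, the eigenvalue equation collapses to the generalized Laguerre ODE $z\,g''(z) + (\nu+1-z)\,g'(z) + n\,g(z) = 0$ with $\nu = (1+2c)/(2|\beta|)$, precisely when $\lam = \om(n-1)+\lam_1$ for the $\om,\lam_1$ of \eqref{eq:nu.om.lam1}; its polynomial (hence $L^2(E,\m)$-admissible) solutions are $g = L_{n-1}^\nu$, $n=1,2,3,\dots$. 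The point of taking $\eps = \text{sign}(\mu+b)$ is to pick the exponential prefactor that decays toward whichever endpoint ($x\to 0$ or $x\to\infty$) carries the weight $\m$ to zero, so that $\psi_n\in\H$. The normalization constant $A^{\nu/2}\sqrt{(n+1)!\,|\mu+b|/\Gam(\nu+n)}$ is then fixed by imposing $\|\psi_n\|_{\H}=1$, evaluating $\int_E\psi_n^2\,\m\,dx$ via the same substitution $z=Ax^{-2\beta}$ and the standard norm $\int_0^\infty L_m^\nu(z)^2 z^\nu e^{-z}\,dz = \Gam(\nu+m+1)/m!$. Orthogonality $(\psi_n,\psi_{n'}) = \del_{n,n'}$ follows the same way from Laguerre orthogonality, playing the role that the Fourier identity $\tfrac{1}{2\pi}\int e^{-i(\om-\nu)x}dx = \del(\om-\nu)$ played in the GBM case.

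Second I would establish that the spectrum is purely discrete. Put $-\A^1$ into Liouville normal form: the scale-type change of variable $\eta = \tfrac{\sqrt2}{a|\beta|}x^{-\beta}$ removes the drift and conjugates $-\A^1$ on $\H = L^2(E,\m)$ to a Schr\"odinger operator $-\partial_\eta^2 + V(\eta)$ on an interval $(0,\infty)$ in $\eta$; a short computation shows that the leading behaviour of $V$ near $\eta = \infty$ (i.e. $x\to\infty$) is governed by $b(x)^2/a^2(x)\sim (\mu+b)^2 x^{-2\beta}/a^2 \propto \eta^2$, so that when $|\mu+b|\neq 0$ the potential is a confining harmonic well, $V(\eta)\to+\infty$, and similarly $V\to+\infty$ at the endpoint corresponding to $x\to 0$ in the relevant entrance regime $c\geq\tfrac12$ (see \cite{borodin-salminen} for the boundary classification and \cite{carr-linetsky-1}). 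Since $\A^1$ admits a unique self-adjoint realization (as noted before the theorem), its resolvent is then compact by the classical criterion for one-dimensional Schr\"odinger operators with $V\to+\infty$ at both ends, whence $\sig(\A^1) = \sig_d(\A^1)$ and there is a complete orthonormal basis of eigenfunctions; consistently, $\sum_n e^{-\lam_n t} = e^{-\lam_1 t}\sum_{n\ge1}e^{-\om(n-1)t}<\infty$, so the semigroup is trace class. Because $\{L_m^\nu\}_{m\ge0}$ is a complete orthogonal system in $L^2((0,\infty),z^\nu e^{-z}dz)$, pulling back along $z = Ax^{-2\beta}$ shows $\{\psi_n\}_{n\ge1}$ is complete in $\H$, so the pairs in \eqref{eq:eigenfunc.JDCEV}--\eqref{eq:eigenval.JDCEV} exhaust the spectrum and none are missed.

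I expect the second step to be the main obstacle: converting the (delicate, $\beta$- and $c$-dependent) boundary behaviour of $X$ at $0$ and $\infty$ into the Weyl limit-point/limit-circle classification of $-\partial_\eta^2 + V$, and checking that $V$ is genuinely confining \emph{exactly} when $|\mu+b|\neq 0$ --- for $\mu+b=0$ the quadratic term drops out, the well degenerates, and a continuous spectrum reappears, which is why the hypothesis cannot be dropped. The remainder is classical Laguerre identities plus the bookkeeping of the change of variables, and matches the self-adjointness and trace-class facts already quoted in the text.
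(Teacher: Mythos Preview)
Your approach is correct and, for the part the paper actually proves, essentially identical: the paper's proof consists solely of the remark that one can verify $-\A^1\psi_n=\lam_n\psi_n$ directly and that orthonormality follows from the Laguerre orthogonality relation $\int_0^\infty x^\alpha e^{-x}L_n^\alpha(x)L_m^\alpha(x)\,dx=\Gam(n+\alpha+1)\del_{n,m}/n!$, which is exactly your first step (your substitution $z=Ax^{-2\beta}$ and ansatz merely make this verification explicit).

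Where you go further is in your second step. The paper does not prove discreteness of the spectrum or completeness of $\{\psi_n\}$ at all; it attributes the theorem to \cite{linetsky-mendoza-arriaga-1} and simply asserts those facts. Your Liouville--normal--form/Schr\"odinger argument (confining potential $V(\eta)\sim\eta^2$ when $|\mu+b|\neq0$, hence compact resolvent and pure point spectrum, with completeness pulled back from that of $\{L_m^\nu\}$ in $L^2((0,\infty),z^\nu e^{-z}dz)$) is the standard route and is what one finds in the cited reference; it also explains transparently why the hypothesis $|\mu+b|\neq0$ is needed. So your write-up is strictly more self-contained than the paper's, but not a different method.
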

\begin{proof}
One can verify directly that the eigenfunctions and eigenvalues \eqref{eq:eigenfunc.JDCEV}-\eqref{eq:eigenval.JDCEV} satisfy the (proper) eigenvalues equation $-\A^1 \psi_n = \lam_n \psi_n$.  Orthogonality of the eigenfunctions $(\psi_n,\psi_m)=\del_{n,m}$ follows from the orthogonality relations of the generalized Laguerre polynomials (see \cite{abramowitz-stegun}, pp. 775)
\begin{align}
\int_0^\infty x^\alpha e^{-x} L_n^\alpha(x)L_m^\alpha(x) dx
		&=		\frac{\Gam(n+\alpha+1)}{n!}\del_{n,m} .
\end{align}
\end{proof}
\begin{remark}
The transition density \eqref{eq:p.JDCEV} can be recovered from the eigenfunction expansion \eqref{eq:ptxy.eigen} with the eigenfunctions and eigenvalues defined in \eqref{eq:eigenfunc.JDCEV}-\eqref{eq:eigenval.JDCEV} by means of the Hille-Hardy formula (see
\cite{erdelyi-2}, p.189):
\begin{align}
\sum_{n=0}^\infty \frac{t^n n!}{\Gam(n+\nu+1)}L_n^\nu(a)L_n^\nu(b)
		&=		\frac{(abt)^{-\nu/2}}{1-t}\exp\(-\frac{(a+b)t}{1-t}\)I_\nu\(\frac{2\sqrt{abt}}{1-t}\) ,
\end{align}
which is valid for all $|t|<1$, $\nu>-1$ and $a,b>0$.
\end{remark}
\noindent
We are now equipped to compute $K_{var}$ \eqref{eq:Kvar2} when the background process $X$ is a JDCEV diffusion.  We shall focus specifically on the case $\mu+b<0$, since in this case, all of the relevant functions $f$ are in $L^2(E,\m)$ with $\m(x)$ given by \eqref{eq:JDCEV.m}.  Thus, we can compute all the necessary expectations explicitly using the eigenfunction expansion techniques of Section \ref{sec:spectral0}.
\par
Before computing the expectations in \eqref{eq:Kvar2} it will be useful to give an analytical solution for the $p$-th moment of the stock price
\begin{align}
\E_x\Big[(S_t)^p\Big]
=e^{p \rho t}\,\E_x \Big[(X_t^\phi)^p \I_{\{\zeta^\phi>t\}}\Big]
=e^{p \rho t}\,\Pc_t^{\phi,1} x^p.
\label{mom.0}
\end{align}
\begin{proposition}[$p$-th Moment]
\label{mom.p}
Let the diffusion $\{X_t,t\geq0\}$ be a JDCEV process with parameters $\beta <0$, $a >0$, $b \geq 0$, and $c > 0$.  Assume $\mu +b <0$.  Then, $(i)$ for $p>2(\beta -c )$, the expected value of the function $f(x)=x ^p$ is given by the eigenfunction expansion:
\begin{align}
\E_x \Big[(S_t )^p\Big]
= e^{p \rho t}\sum_{n=1}^{\infty}e^{-\phi (\lambda_n )t} {\tilde c}_n  \psi_n (x ) , \label{mom.1}
\end{align}
where  $\phi(\lam)$ is the Laplace exponent of the subordinator $T$. The JDCEV eigenvalues $\lambda_n$ and eigenfunctions $\psi_n$, are given in theorem \ref{prop.jdcev.2}, and the expansion coefficients are given by:
\begin{align} 
{\tilde c}_n =(x ^p,\psi_n )=\frac{A ^{\frac{\nu }{2}-\frac{p+2c }{2|\beta |}}\left(\frac{1-p}{2|\beta |}\right)_{n-1} }{\sqrt{(n-1)! |\mu +b |\Gamma(\nu +n)}}\Gamma\left(\frac{p+2c }{2|\beta |}+1\right),\quad n=1,2,\cdots,
\label{mom.2}
\end{align}
where $(z)_n=z(z-1)\cdots(z-n-1)$ is the Pochhammer symbol. Also, $(ii)$ the spectral expansion is uniformly convergent for all $t>0$, absolutely convergent at $t=0$, and uniformly convergent at $t=0$ if $p>(\beta+1)/2 -c$.
\end{proposition}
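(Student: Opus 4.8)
The plan is to read $\E_x[(S_t)^p]$ off the (generalized) eigenfunction expansion of the subordinated Feynman--Kac semigroup $\{\Pc_t^{\phi,1},t\geq0\}$, with the coefficients identified through a single explicit Laguerre integral. By \eqref{mom.0} it suffices to expand $\Pc_t^{\phi,1}x^p$, and by conditioning on the subordinator (cf. \eqref{eq:Pphi.eigen}) one has $\Pc_t^{\phi,1}x^p(x)=\E[\Pc_{T_t}^1 x^p(x)]$; thus the task splits into (a) expanding the un-subordinated $\Pc_s^1 x^p$ in the eigenpairs $(\psi_n,\lam_n)$ of Theorem~\ref{prop.jdcev.2}, and (b) carrying that series through $\E[\,\cdot\,]$, where each $e^{-\lam_n T_t}$ produces the weight $\E[e^{-\lam_n T_t}]=e^{-\phi(\lam_n)t}$ with $\phi$ the Laplace exponent \eqref{eq:LevyKintchine} of $T$.

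First I would compute the formal coefficient $\tilde c_n=(\psi_n,x^p)=\int_0^\infty\psi_n(y)\,y^p\,\m(y)\,dy$. Since $\mu+b<0$ we have $\eps=-1$, so the factor $\exp(-\tfrac12(1+\eps)Ax^{-2\beta})$ in \eqref{eq:eigenfunc.JDCEV} equals $1$ and $\psi_n(y)$ is a constant multiple of $y\,L_{n-1}^\nu(Ay^{-2\beta})$; with $\m$ as in \eqref{eq:JDCEV.m}, the change of variables $z=Ay^{-2\beta}$ turns $\tilde c_n$ into a constant times $\int_0^\infty z^{\kappa}e^{-z}L_{n-1}^\nu(z)\,dz$, where $\kappa=\tfrac{p+2c}{2|\beta|}$. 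The integrand is $O(z^\kappa)$ near $z=0$, so this integral converges if and only if $\kappa>-1$, i.e. if and only if $p>2(\beta-c)$ --- exactly the stated hypothesis. Evaluating the Laguerre integral by a classical formula (equivalently, via Gauss's summation of ${}_2F_1$ at $1$; see \citet{abramowitz-stegun,erdelyi-2}) and using $\nu-\kappa=\tfrac{1-p}{2|\beta|}$ returns precisely the Pochhammer expression \eqref{mom.2}; as a byproduct this gives finiteness of the defective moment $\Pc_t^1 x^p=\E_x[e^{-\int_0^t k(X_u)\,du}X_t^p]$ for $p>2(\beta-c)$, consistent with \citet{carr-linetsky-1}.

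The crux --- and the step I expect to be the main obstacle --- is to justify that $\Pc_t^1 x^p$ actually equals $\sum_{n\geq1}e^{-\lam_n t}\tilde c_n\psi_n$, since $x^p$ need not lie in the Hilbert space $\H=L^2(E,\m)$: a direct estimate shows $x^p\in\H$ only for $p>\tfrac12+\beta-c$, which is strictly stronger than $p>2(\beta-c)$, so the $L^2$ spectral theorem does not apply to $x^p$ itself. Instead I would integrate $x^p$ against the trace-class heat kernel: since $\mu+b\neq0$ the JDCEV semigroup is of trace class (as noted at the end of Section~\ref{sec:spectral0}), so --- the spectrum being discrete and the $\psi_n$ real --- $p^1(t,x,y)=\m(y)\sum_n e^{-\lam_n t}\psi_n(x)\psi_n(y)$ by \eqref{eq:ptxy.eigen}, and Tonelli's theorem gives
\begin{align*}
\Pc_t^1 x^p(x)=\int_0^\infty y^p\,p^1(t,x,y)\,dy
=\sum_{n\geq1}e^{-\lam_n t}\psi_n(x)\int_0^\infty y^p\psi_n(y)\m(y)\,dy
=\sum_{n\geq1}e^{-\lam_n t}\tilde c_n\psi_n(x),
\end{align*}
the interchange being licit because $\lam_n=\om(n-1)+\lam_1\uparrow\infty$ linearly (see \eqref{eq:eigenval.JDCEV}), so $e^{-\lam_n t}$ decays geometrically, while $|\psi_n(x)|$ and $\int_0^\infty|y|^p|\psi_n(y)|\m(y)\,dy$ grow only polynomially in $n$ (the latter convergent near $y=0$ for $p>2(\beta-c)$ and polynomially bounded by Cauchy--Schwarz against the Laguerre weight). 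Conditioning on $T_t$ and using Fubini --- legitimate since $\sum_n e^{-\phi(\lam_n)t}|\tilde c_n\psi_n(x)|<\infty$ for $t>0$ (shown next) --- then yields $\Pc_t^{\phi,1}x^p(x)=\sum_n\E[e^{-\lam_n T_t}]\tilde c_n\psi_n(x)=\sum_n e^{-\phi(\lam_n)t}\tilde c_n\psi_n(x)$, and multiplying by $e^{p\rho t}$ gives \eqref{mom.1} via \eqref{mom.0}.

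For part (ii) I would make these growth estimates quantitative. Stirling's formula applied to \eqref{mom.2} gives $|\tilde c_n|=O(n^{\nu/2-\kappa-1})$, and the classical uniform-on-compacts Laguerre asymptotics ($L_m^\nu(z)=O(m^{\nu/2-1/4})$ for $z$ in compact subsets of $(0,\infty)$, hence $\sup_{x\in K}|\psi_n(x)|=O(n^{-1/4})$ for every compact $K\subset(0,\infty)$) give $\sup_{x\in K}|e^{-\phi(\lam_n)t}\tilde c_n\psi_n(x)|=O(e^{-\phi(\lam_n)t}\,n^{\nu/2-\kappa-5/4})$. For $t>0$ the damping $e^{-\phi(\lam_n)t}$, with $\phi$ increasing and $\lam_n\uparrow\infty$, secures uniform convergence of \eqref{mom.1} on compacts; at $t=0$ that damping is absent and $\sum_n n^{\nu/2-\kappa-5/4}<\infty$ holds precisely when $\nu/2-\kappa-5/4<-1$, i.e. when $p>(\beta+1)/2-c$, giving absolute (hence uniform-on-compacts) convergence at $t=0$ under that sharper condition, while for $p>2(\beta-c)$ one still gets convergence at $t=0$ by summation by parts exploiting the oscillation of $L_{n-1}^\nu$. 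The only work left is the elementary asymptotic bookkeeping behind these exponents.
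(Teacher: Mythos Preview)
Your approach is essentially the paper's: for (i) you compute $\tilde c_n$ via the same Laguerre integral after the substitution $z=Ay^{-2\beta}$ (the paper defers this step to \cite{linetsky-mendoza-arriaga-1}), and your explicit handling of the possibility $x^p\notin L^2(E,\m)$ via the trace-class kernel and Tonelli is a welcome addition the paper leaves implicit; for (ii) you use the same Stirling asymptotics for $\tilde c_n$ together with the uniform Laguerre bound $|\psi_n(x)|=O(n^{-1/4})$ on compacts that the paper cites from \cite{nikiforov-uvarov-1}. The one substantive divergence concerns the claim of \emph{absolute} convergence at $t=0$ in the range $2(\beta-c)<p\le(\beta+1)/2-c$: your Abel-summation argument exploiting the oscillation of $L_{n-1}^\nu$ gives only \emph{conditional} convergence there, whereas the paper's Appendix~\ref{sec:coeffs} applies d'Alembert's ratio test to $d_n=|\tilde c_n\psi_n(x)|$ and computes $\lim_{n\to\infty}\log(d_{n+1}/d_n)=-2$, concluding absolute convergence for all $p>2(\beta-c)$. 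You should reconcile this with your own polynomial estimate $d_n=O(n^{\nu/2-\kappa-5/4})$, under which the ratio tends to $1$ and d'Alembert is inconclusive; either your asymptotic is too crude or the ratio-test calculation in Appendix~\ref{sec:coeffs} warrants a closer look.
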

\begin{proof}
The proof of part $(i)$ is obtained from Lemma 3.1 and Proposition 3.1 in~\cite{linetsky-mendoza-arriaga-1}. For part $(ii)$ we note that the semigroup is of trace class. In addition, we note that for $x\in[a,b]\subset E$, the eigenfunctions $\psi_n(x)$ satisfy the bound $|\psi_n(x)|< C/n^{1/4}<C$ for some $C<\infty$ independent of $n$ although it may depend on the range $[a,b]$ (see inequality (27a) on p.54 of \cite{nikiforov-uvarov-1}).  Moreover, since the expansion coefficients satisfy the Cauchy-Schwartz bound, $|c_n|\leq\sqrt{(f,f)}$, then for any $f\in L^2((0,\infty),\m)$ the spectral expansion of $\Pc_t^{\phi,1} f$ converges uniformly for all $t>0$. That is, $\sum_{n=1}^\infty e^{-\lambda_n t}|c_n\psi_n(x)|\leq C\sqrt{(f,f)}\sum_{n=1}^\infty e^{-\lambda_n t}$ converges uniformly for all $t>0$. In Appendix \ref{sec:coeffs} we show that the spectral expansion of $\Pc_t^{\phi,1} x^p$ also converges absolutely at $t=0$.  In addition, if $p>(\beta+1)/2 -c$, then {the spectral expansion} converges uniformly at $t=0$.
\end{proof}

\begin{proposition}
\label{shortInt}
Let $X$ be a JDCEV process with parameters $\beta <0$, $a >0$, $b \geq 0$, and $c > 0$.   Assume $\mu +b <0$. Then we have 
\begin{align}
&\gamma\int_0^t\E_x \Big[ \I_{\{\zeta^\phi>u\}}\sigma^2(X^\phi_{u})\Big]du \\
&\qquad=\gamma a^2A ^{\frac{\nu }{2}-\frac{c }{|\beta |}+1}\Gamma\left(c /|\beta |\right)\sum_{n=1}^{\infty} \frac{\left(1/(2|\beta |)+1\right)_{n-1}}{\sqrt{(n-1)! |\mu +b |\Gamma(\nu +n)}} \frac{ \big(1-e^{-\phi (\lambda_n )t}\big)\psi_n (x )}{\phi (\lambda_n )} .
\label{volssig.1}
\end{align}
\end{proposition}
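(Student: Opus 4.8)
The plan is to recognize the integrand on the left of \eqref{volssig.1} as a subordinate Feynman--Kac semigroup applied to a power function, to invoke the $p$-th moment expansion of Proposition~\ref{mom.p} with $p=2\beta$, and then to integrate in time term by term. Assume $D_0^\phi=0$ (if $D_0^\phi=1$ both sides of \eqref{volssig.1} vanish). Applying the Markovian characterization of Theorem~\ref{mkvian.sub.biv}(i) to $f(x,d)=(1-d)\sigma^2(x)$, and using $\sigma^2(x)=a^2x^{2\beta}$ together with $1-D_u^\phi=\I_{\{\zeta^\phi>u\}}$, one gets
\begin{align}
\E_x\big[\I_{\{\zeta^\phi>u\}}\,\sigma^2(X_u^\phi)\big]=\Pc^{\phi,1}_u\sigma^2(x)=a^2\,\Pc^{\phi,1}_u x^{2\beta}.
\end{align}
Since $c>0$ we have $2\beta>2(\beta-c)$, so Proposition~\ref{mom.p}(i) applies with $p=2\beta$ and yields $\Pc^{\phi,1}_u x^{2\beta}=\sum_{n\geq1}e^{-\phi(\lambda_n)u}\,\tilde c_n\,\psi_n(x)$ with $\tilde c_n=(x^{2\beta},\psi_n)$ given by \eqref{mom.2} evaluated at $p=2\beta$.

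The next step is to simplify $\tilde c_n$ using $\beta=-|\beta|$. A short computation gives $\tfrac{p+2c}{2|\beta|}=\tfrac{c}{|\beta|}-1$ and $\tfrac{1-p}{2|\beta|}=\tfrac{1}{2|\beta|}+1$ at $p=2\beta$, so $\Gamma\!\big(\tfrac{p+2c}{2|\beta|}+1\big)=\Gamma(c/|\beta|)$, the exponent of $A$ becomes $\tfrac{\nu}{2}-\tfrac{c}{|\beta|}+1$, and the rising factorial becomes $\big(\tfrac{1}{2|\beta|}+1\big)_{n-1}$. Hence
\begin{align}
\tilde c_n=A^{\frac{\nu}{2}-\frac{c}{|\beta|}+1}\,\Gamma\!\Big(\frac{c}{|\beta|}\Big)\,\frac{\big(\tfrac{1}{2|\beta|}+1\big)_{n-1}}{\sqrt{(n-1)!\,|\mu+b|\,\Gamma(\nu+n)}},
\end{align}
which is exactly the $n$-dependent factor on the right of \eqref{volssig.1}.

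It remains to integrate term by term:
\begin{align}
\gamma\int_0^t\E_x\big[\I_{\{\zeta^\phi>u\}}\,\sigma^2(X_u^\phi)\big]\,du=\gamma a^2\sum_{n\geq1}\tilde c_n\,\psi_n(x)\int_0^t e^{-\phi(\lambda_n)u}\,du=\gamma a^2\sum_{n\geq1}\tilde c_n\,\psi_n(x)\,\frac{1-e^{-\phi(\lambda_n)t}}{\phi(\lambda_n)},
\end{align}
after which substitution of the simplified $\tilde c_n$ gives \eqref{volssig.1}. The exchange of summation and integration is the delicate point, since the eigenfunction expansion is not uniformly convergent near $u=0$; it is justified by Tonelli's theorem. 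Indeed, $\phi$ is increasing with $\phi(0)=0$, and every $\lambda_n>0$ because $\lambda_n=\om(n-1)+\lambda_1$ with $\lambda_1=|\mu|>0$ (as $b\geq0$ and $\mu+b<0$ force $\mu<0$) and $\om=2|\beta(\mu+b)|>0$; hence $0\leq e^{-\phi(\lambda_n)u}\leq1$ on $[0,t]$, while $\sum_{n\geq1}|\tilde c_n\psi_n(x)|<\infty$ by the absolute convergence at $t=0$ from Proposition~\ref{mom.p}(ii), so $\sum_{n\geq1}\int_0^t|\tilde c_n\psi_n(x)|e^{-\phi(\lambda_n)u}\,du\leq t\sum_{n\geq1}|\tilde c_n\psi_n(x)|<\infty$. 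The scalar identity $\int_0^t e^{-\phi(\lambda_n)u}\,du=(1-e^{-\phi(\lambda_n)t})/\phi(\lambda_n)$ uses $\phi(\lambda_n)>0$: when $\gamma=0$ both sides of \eqref{volssig.1} are identically zero, and when $\gamma>0$ one has $\phi(\lambda_n)\geq\gamma\lambda_n>0$. Thus the only real obstacle is the Fubini/Tonelli interchange near $u=0$, which is precisely why part (ii) of Proposition~\ref{mom.p} is included.
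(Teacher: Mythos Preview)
Your proof is correct and follows essentially the same approach as the paper: rewrite the expectation as $a^2\Pc^{\phi,1}_u x^{2\beta}$, apply Proposition~\ref{mom.p} with $p=2\beta$, and integrate term by term using the absolute convergence at $u=0$ established in Proposition~\ref{mom.p}(ii). Your Tonelli argument for the interchange is slightly cleaner than the paper's (which integrates over $[\epsilon,t]$ and lets $\epsilon\downarrow 0$), and you also make the simplification of the coefficients $\tilde c_n$ at $p=2\beta$ explicit, which the paper leaves implicit.
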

\begin{proof}
The expectation in \eqref{volssig.1} can be written explicitly as
\begin{align}
&\gamma a^2\int_0^t\E_x\Big[ \I_{\{\zeta^\phi>u\}}(X^\phi_{u})^{2\beta}\Big]du\\
&\qquad=\gamma a^2A ^{\frac{\nu }{2}-\frac{c }{|\beta |}+1}\Gamma\left(c /|\beta |\right)\int_0^t\Big(\sum_{n=1}^{\infty}e^{-\phi (\lambda_n )u} \frac{\left(1/(2|\beta |)+1\right)_{n-1} }{\sqrt{(n-1)! |\mu +b |\Gamma(\nu +n)}} \psi_n (x )\Big)du ,
\end{align}
where the last equality is due to Proposition~\ref{mom.p}. One should note that if $3\beta+2c>1$ then the sum converges uniformly at $t=0$ and the integral can be done term by term. Otherwise, observe that: (a) the series inside the integral is absolutely convergent for all $t\geq0$ due to Proposition \ref{mom.p} (and continuous for all $t\geq0$), and (b) the Laplace exponent $\phi(\lam)$ is increasing. Then, we can conclude that the resulting series~\eqref{volssig.1} is also absolutely convergent, and hence the exchange of sum and integral is justified (i.e., we integrate term by term with $\int_\epsilon^t\cdot ds$ for some $\epsilon>0$ and then take the limit as $\epsilon\downarrow0$).
\end{proof}

\begin{proposition}
\label{longInt}
Let $X$ be a JDCEV process with parameters $\beta <0$, $a >0$, $b \geq 0$, and $c > 0$.  Assume $\mu +b <0$ and $2c-2\beta>1$.  Then we have 
\begin{align}
&\int_0^t  \E_x \[ \I_{\{\zeta^\phi>u\}}\int_{\mathbb R}\log^2\Big(1+\frac{y}{X^\phi_{u-}}\Big)\pi^{\phi,1}(X_{u-}^\phi,y)dy \] du \\
&\qquad=\frac{A^{\frac{1-2c}{4|\beta|}}}{4|\beta|^2}\int_{\R_+\setminus\{0\}}\sum_{n=1}^\infty\sum_{m=1}^\infty \frac{e^{-\lambda_n s}(1-e^{-\phi(\lambda_m)t})}{\phi(\lambda_m)}\psi_{m}(x)\,\times\\
&\qquad\qquad
\times\bigg\{\sqrt{\frac{(n-1)! }{ |\mu+b|\Gamma(\nu+n)}}\Theta^2_{n-1}\Big(\frac{c+|\beta|}{|\beta|}\Big) \\
&\qquad\qquad\qquad
-2\sqrt{\frac{(m-1)!  }{|\mu+b|\Gamma(\nu+m)}}\sum_{k=0}^{n-1}\frac{(1-n)_{k}\Theta^1_{m-1}(\nu+k+1)\Theta^1_{n-1}\Big(\frac{c+|\beta|}{|\beta|}\Big)}{\Gamma(\nu+k+1)k!}\\
&\qquad\qquad\qquad+\frac{\Big(\frac{1}{2|\beta |}\Big)_{n-1}\,\Gamma\Big(\frac{c }{|\beta |}+1\Big) }{(n-1)!}\sqrt{\frac{(m-1)!  }{|\mu +b |\Gamma(\nu+m)}}\sum_{k=0}^{n-1}\frac{(1-n)_{k}\Theta^2_{m-1}(\nu+k+1)}{\Gamma(\nu+k+1)k!}\bigg\}\nu(ds) ,
\label{longint.3}
\end{align}
where
\begin{align}
\Theta^1_{n}(\alpha)
		&:=		\frac{(1-\alpha+\nu)_{n}}{n!}\Gamma(\alpha)\Big[\psi(\alpha)
					-\sum_{p=1}^{n}\frac{1}{(p-\alpha+\nu)}\Big] ,	\label{prud.ln.1a}\\
\Theta^2_{n}(\alpha)
		&:=		\frac{(1-\alpha+\nu)_{n}}{n!}\Gamma(\alpha)\Big[\Big(\psi(\alpha)-\sum_{p=1}^{n}\frac{1}{p-\alpha+\nu}\Big)^2
					+\psi'(\alpha)-\sum_{p=1}^{n}\frac{1}{(p-\alpha+\nu)^2}\Big] , \label{prud.ln.2a}
\end{align}
and $\psi(\alpha)=\Gamma'(\alpha)/\Gamma(\alpha)$ (with no subscript on $\psi$) is the Polygamma function.
\end{proposition}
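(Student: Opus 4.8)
The plan is to reduce everything, via Proposition~\ref{prop.semi.g.1}, to a pair of nested eigenfunction expansions, one for the Feynman--Kac semigroup $\Pc^1_s$ and one for the subordinate semigroup $\Pc^{\phi,1}_u$, and then to evaluate the resulting special-function integrals using the closed-form moments of Proposition~\ref{mom.p}. By Proposition~\ref{prop.semi.g.1} the left-hand side of~\eqref{longint.3} equals $\int_0^t\big(\int_{(0,\infty)}\Pc^{\phi,1}_u f(s,\cdot)(x)\,\nu(ds)\big)\,du$, where $f(s,y)=\Pc^1_s\log^2(y)-2\log(y)\,\Pc^1_s\log(y)+\log^2(y)\,\Pc^1_s1$. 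The key first observation is the identity $f(s,y)=\partial_p^2\big[\,y^{-p}\,\Pc^1_s x^p(y)\,\big]\big|_{p=0}$, valid because $y^{-p}\,\Pc^1_sx^p(y)=\E_y\big[e^{-\int_0^sk(X_r)dr}(X_s/y)^p\big]$, so that two $p$-derivatives at $p=0$ bring down $\log^2(X_s/y)$ and recover $f(s,y)$. Substituting the spectral expansion $\Pc^1_sx^p(y)=\sum_{n\geq1}e^{-\lambda_ns}\tilde c_n(p)\psi_n(y)$ of Proposition~\ref{mom.p} and differentiating termwise (justified by the trace-class bounds of Section~\ref{sec:spectral0}) yields
\begin{align}
f(s,y)&=\sum_{n\geq1}e^{-\lambda_ns}\,\psi_n(y)\big[\tilde c_n''(0)-2\tilde c_n'(0)\log y+\tilde c_n(0)\log^2 y\big].
\end{align}
Moreover the $p$-dependence of $\tilde c_n(p)$ in~\eqref{mom.2} is carried only by $A^{-p/(2|\beta|)}$, $(\tfrac{1-p}{2|\beta|})_{n-1}$ and $\Gamma(\tfrac{p+2c}{2|\beta|}+1)$; with the substitution $q=(p+2c)/(2|\beta|)$ (under which $(\tfrac{1-p}{2|\beta|})_{n-1}=(\nu-q)_{n-1}$) one gets $\tilde c_n(p)=\mathrm{const}\times A^{-q}\Theta^0_{n-1}(q+1)$, where $\Theta^0_{n}(\alpha):=\frac{(1-\alpha+\nu)_{n}}{n!}\Gamma(\alpha)$, so that $\tilde c_n'(0),\tilde c_n''(0)$ are explicit combinations of $\Theta^0_{n-1},\Theta^1_{n-1},\Theta^2_{n-1}$ at $\alpha=(c+|\beta|)/|\beta|$, using $\Theta^1_n=\partial_\alpha\Theta^0_n$ and $\Theta^2_n=\partial_\alpha^2\Theta^0_n$ (immediate from~\eqref{prud.ln.1a}--\eqref{prud.ln.2a}).

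Second, I would apply $\Pc^{\phi,1}_u$ to $f(s,\cdot)$, which calls for the eigenfunction coefficients of $\psi_n$, $\log(\cdot)\psi_n$ and $\log^2(\cdot)\psi_n$. The first is just $(\psi_n,\psi_m)=\delta_{nm}$ and, paired with $\tilde c_n''(0)$, yields the $\Theta^2_{n-1}((c+|\beta|)/|\beta|)$ term $T_1$ (which is diagonal in $n,m$). For the other two I would use that in the regime $\mu+b<0$ (i.e.\ $\eps=-1$) the factor $\exp(-\tfrac12(1+\eps)Ax^{-2\beta})$ in~\eqref{eq:eigenfunc.JDCEV} is identically one; hence, after the substitution $w=Ax^{-2\beta}$ (so $\log x=\tfrac{1}{2|\beta|}(\log w-\log A)$), each inner product $(\log^j(\cdot)\psi_n,\psi_m)$ becomes $\tfrac{1}{(2|\beta|)^j}$ times a combination (with $\log A$-coefficients) of integrals $\int_0^\infty w^{\nu}e^{-w}(\log w)^i\,L_{n-1}^\nu(w)L_{m-1}^\nu(w)\,dw$ with $i\le j$. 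Expanding one Laguerre factor as $L_{n-1}^\nu(w)=\tfrac{\Gamma(n+\nu)}{(n-1)!}\sum_{k=0}^{n-1}\tfrac{(1-n)_k}{\Gamma(\nu+k+1)\,k!}\,w^k$ then reduces these to the single-Laguerre log-moments $\int_0^\infty w^{\nu+k}e^{-w}(\log w)^i L_{m-1}^\nu(w)\,dw=\partial_\alpha^i\big[\int_0^\infty w^{\alpha-1}e^{-w}L_{m-1}^\nu(w)dw\big]\big|_{\alpha=\nu+k+1}=\Theta^i_{m-1}(\nu+k+1)$, since $\int_0^\infty w^{\alpha-1}e^{-w}L_{m-1}^\nu(w)dw=\Theta^0_{m-1}(\alpha)$ is the classical Laguerre moment. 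This is the origin of the sums $\sum_{k=0}^{n-1}$ with $\Theta^1_{m-1}(\nu+k+1)$ in $T_2$ and $\Theta^2_{m-1}(\nu+k+1)$ in $T_3$, together with the surviving prefactors $\Theta^1_{n-1}((c+|\beta|)/|\beta|)$ and $(\tfrac{1}{2|\beta|})_{n-1}\Gamma(\tfrac{c}{|\beta|}+1)/(n-1)!$ inherited from $\tilde c_n'(0)$ and $\tilde c_n(0)$.

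Third, I would assemble the three contributions. Each carries $e^{-\lambda_ns}$ from $\Pc^1_s$ and, after $\Pc^{\phi,1}_u$, a factor $e^{-\phi(\lambda_m)u}\psi_m(x)$; integrating $\int_0^t(\cdot)\,du$ turns the latter into $\tfrac{1-e^{-\phi(\lambda_m)t}}{\phi(\lambda_m)}\psi_m(x)$, which gives the double series in~\eqref{longint.3}. All $\log A$-correction terms cancel across the three pieces---this is forced since $\log^2(y/z)$, hence $f(s,z)$ and the whole left-hand side, is invariant under shifting $\log$ by a constant---so only the combinations $\Theta^2_{n-1}(\cdot)$, $\Theta^1_{n-1}(\cdot)\Theta^1_{m-1}(\cdot)$ and $\Theta^0_{n-1}(\cdot)\Theta^2_{m-1}(\cdot)$ survive, and the $A$- and $|\beta|$-factors collapse to $A^{(1-2c)/(4|\beta|)}/(4|\beta|^2)$, using $\nu/2-c/|\beta|=(1-2c)/(4|\beta|)$ and the fact that every $p$- or $\alpha$-derivative carries a $1/(2|\beta|)$. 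The interchanges of $\sum_n$, $\sum_m$, $\int_0^t du$ and $\int\nu(ds)$ are justified exactly as in the proof of Proposition~\ref{shortInt}: the hypothesis $2c-2\beta>1$ puts $\log$ and $\log^2$ in $L^2(E,\m)$ near the origin, the trace-class bounds of Section~\ref{sec:spectral0} provide uniform control of the expansions in $x$ on compacts, and, since the $s$-series is absolutely convergent for every $s\ge0$ and $\phi$ is increasing, one integrates term by term over $[\epsilon,t]$ and lets $\epsilon\downarrow0$.

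The step I expect to be the main obstacle is the special-function bookkeeping in the second and third paragraphs: showing that the $\log$-weighted product-Laguerre integrals collapse to exactly the displayed combinations of $\Theta^1$ and $\Theta^2$ with the correct Pochhammer and Gamma prefactors, and verifying in detail that every $\log A$-term cancels. The conceptual content---differentiate the $p$-th moment twice, reduce to single-Laguerre log-moments through the power-series form of one Laguerre factor, then subordinate---is routine; essentially all the length is in the identities for Laguerre integrals.
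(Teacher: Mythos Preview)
Your proposal is correct and follows essentially the same route as the paper: both start from Proposition~\ref{prop.semi.g.1}, perform the substitution $z=Ay^{-2\beta}$, expand one Laguerre factor as a finite power series, identify the resulting single-Laguerre log-moments with the $\Theta^\delta$ integrals of Prudnikov, and justify the interchanges via the trace-class bounds and the condition $2c-2\beta>1$. Your packaging of the first layer through the identity $f(s,y)=\partial_p^2[y^{-p}\Pc^1_sx^p(y)]|_{p=0}$ (so that $\tilde c_n^{(j)}(0)=(\log^j,\psi_n)$) and your invariance argument for the $\log A$ cancellation are tidy but equivalent reformulations of exactly the computations the paper carries out directly.
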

\begin{proof}
We start by mentioning that the functions $\Theta^\delta_{n}$ result from the integrals
\begin{align}
\Theta^\delta_{n}(\alpha)
	&=	\int_{0}^{\infty}z^{\alpha -1} e^{-z}\log^\delta(z)L_{n}^{\nu}(z)dz,  &
\delta
	&\in\{1,2\} , &
\text{Re}(\alpha)
	&>0 , \label{eq:Theta}
\end{align}
which are available in \cite{prudnikov-2},  Eq. 2.19.6.1 and 2.19.6.3, pp.469.  The restriction $2c-2\beta>1$ is imposed such that the sums converge absolutely at $t=0$. Indeed, it is easy to show that for all $x>0$ we have $|\log(x)|\leq (1/x+x)$ and $\log^2(x)\leq (1/x+x)$. Moreover, since the expansion of $(1/x+x)$ converges absolutely at $t=0$ due to Proposition~\ref{mom.p}, then each of the series also converge absolutely at $t=0$. The rest of the proof consists of integrating term by term.  Details are found in Appendix \ref{sec:long.proof}.
\end{proof}

%
%

\section{Market Implied L\'{e}vy Subordinators}
\label{sec:impliedtime}
Note that the value of $K_{var}$ \eqref{eq:Kvar2} depends on the drift $\gamma$ and L\'{e}vy measure $\nu$ of the L\'{e}vy subordinator $T$.  One could, of course, compute the value of $K_{var}$ by choosing a specific drift $\gamma$ and L\'evy measure $\nu$.  However, this parametric approach would lead to a considerable amount of model misspecification risk, as there is no guarantee that the chosen subordinator would induce European option prices consistent with those observed on the market.  An alternative approach would be to use knowledge of (liquidly traded and efficiently priced) European call and put options to constrain one's choice of L\'evy subordinator.  In this section we will show that, when the background diffusion $X$ is fixed, the drift $\gamma$ and L\'evy measure $\nu$ of the subordinator $T$ can be obtained \emph{non-parametrically} from the implied volatility smile of $t$-expiry European options.  This approach greatly reduces model misspecification risk, as the obtained subordinator induces $t$-expiry option prices that are consistent with those observed on the market.
\par
Let $S$ be described by \eqref{eq:S}.  We assume that background diffusion $X$ is given, but that the drift $\gamma$ and L\'{e}vy measure $\nu$ of the L\'{e}vy subordinator $T$ are unknown.  Denote by $C(t,x;K)$ the price of a European call option with time to maturity $t$ and strike price $K$.  Note that the price of a call with strike price $K$ can be obtained from the price of a put with the same strike through put-call parity.  We assume the existence of European call options at all strikes $K \in (0,\infty)$.  While calls at all strikes $K \in (0,\infty)$ do not trade in practice, \cite{bondarenko-1} shows how to estimate the value of call at any strike, given the value of calls at a discrete set of strikes.
\par
Let $p_S(t,x,y)$ be the transition density of $S$ under the risk-neutral pricing measure
\begin{align}
p_S(t,x,y) dy
		&=		\P_x \[ S_t \in dy \] .
\end{align}
Note that $S_t \in dy$ if and only if $(1-D_t^\phi)e^{\rho t}X_t^\phi \in dy$.  Thus, 
\begin{align}
p_S(t,x,y)
		&=	e^{-\rho t} p^{\phi,1}(t,x,y') 
		=	e^{-\rho t} \m(y') \sum_\lam e^{- \phi(\lam) t } \psi_\lam(x) \psib_\lam(y') , &
y'
		&:=		y\,e^{-\rho t} .\label{eq:q=qphi}
\end{align}
As \cite{breeden-litzenberger-1} show, the transition density $p_S(t,x,y)$ can be implied from a semi-infinite strip of call prices.  We have
\begin{align}
C(t,x;K)
		&=		\E_x \[ (S_t - K)^+ \] 
		=			\int_E (y - K)^+ p_S(t,x,y) dy . \label{eq:C=p}
\end{align}
Differentiating both sides of \eqref{eq:C=p} twice with respect to $K$, and noting that $\d_{KK}^2 (y-K)^+ = \delta(y-K)$, one obtains
\begin{align}
\d_{KK}^2 C(t,x;K)
		&=		\int_E \d_{KK}^2 (y-K)^+ p_S(t,x,y) dy
		=			p_S(t,x,K) . \label{eq:q=dKKC}
\end{align}
Setting our the two expressions \eqref{eq:q=qphi} and \eqref{eq:q=dKKC} for $p_S$ equal to each other yields
\begin{align}
\d_{KK}^2 C(t,x;K)
		&=		e^{-\rho t} \m(K') \sum_\lam e^{-\phi(\lam) t } \psi_\lam(x) \psib_\lam(K') , &
K'
		&:=		K\,e^{-\rho t} . \label{eq:dKKC=pphi,2}
\end{align}
Multiplying both sides of \eqref{eq:dKKC=pphi,2} by $\psi_{\lam'}(K')$ and integrating with respect to $K$, we obtain
\begin{align}
\int_E \d_{KK}^2 C(t,x;K) \psi_{\lam'}(K') dK
		&=	e^{-\rho t} \sum_\lam e^{-\phi(\lam) t } \psi_\lam(x) \int_E \psib_\lam(K') \psi_{\lam'}(K') \m(K') dK \\
		&=	\sum_\lam e^{-\phi(\lam) t } \psi_\lam(x) \( \psi_{\lam} , \psi_{\lam'} \) 
		=		e^{-\phi(\lam') t } \psi_{\lam'}(x) . \label{eq:temp}
\end{align}
Note that we have used $dK' = \,e^{-\rho t} dK$ and $\( \psi_{\lam} , \psi_{\lam'} \) = \del_{\lam,\lam'}$.  We solve \eqref{eq:temp} for $\phi(\lam)$
\begin{align}
\phi(\lam)
		&=		\frac{-1}{t} \log \( \frac{\int \d_{KK}^2 C(t,x;K) \psi_\lam(K') dK}{\psi_\lam(x)} \) . \label{eq:ImpLevSub}
\end{align}
If there exists a L\'{e}vy subordinator $T$ independent of $X$, which is capable of generating the prices of call options on the market, then its Laplace exponent, evaluated at $\lam$ is given by \eqref{eq:ImpLevSub}.
\footnote{In a working paper, \cite{carr-lee-1} obtain $\E\[e^{-\lam T_t}\]$ using similar methods.  The authors do not deal with L\'{e}vy subordinators specifically.}
Thus, we refer to $T$ with Laplace exponent \eqref{eq:ImpLevSub} as the \emph{market implied L\'{e}vy subordinator}.
\begin{remark}
It is worth noting that, although one can theoretically compute $\d_{KK}^2 C(t,x;K)$ with call prices available at all strikes $K \in (0,\infty)$, a more convenient expression for the integral in \eqref{eq:ImpLevSub} can be obtained by integrating by parts twice 
\footnote{Our thanks to Marco Avellaneda for pointing this out.}
\begin{align}
\int_0^\infty \d_{KK}^2 C(t,x;K) \psi_\lam(K') dK
		&=		\d_K C(t,x;K) \psi_\lam(K') \Big|_0^\infty - C(t,x;K) \d_K \psi_\lam(K') \Big|_0^\infty \\ & \qquad
					+ \int_0^\infty C(t,x;K) \d_{KK}^2 \psi_\lam(K') dK .
\end{align}
We have the following limits
\begin{align}
\lim_{K \to \infty} \d_K C(t,x;K)
		&=	0 , &
\lim_{K \to \infty} C(t,x;K)
		&=	0 , \\
\lim_{K \to 0} \d_K C(t,x;K) 
		&=	-1 , &
\lim_{K \to 0} C(t,x;K)
		&=	x .
\end{align}
Hence, we find
\begin{align}
\int_0^\infty \d_{KK}^2 C(t,x;K) \psi_\lambda(K') dK
		&=		\psi_\lam(0) + x \, e^{-\rho t} \d_x \psi_\lam(0) 
					+ \int_0^\infty C(t,x;K) \d_{KK}^2 \psi_\lam(K') dK . \label{eq:IntByParts}
\end{align}
Note that $\d_x$ is a derivative with respect to the argument of $\psi_\lam$ whereas $\d_{KK}^2$ is a derivative with respect to $K$.  The advantage of using expression \eqref{eq:IntByParts} rather than the integral in \eqref{eq:ImpLevSub} is that the differential operators in \eqref{eq:IntByParts} act on the eigenfunction $\psi_\lam$ rather than the call price $C(t,x;K)$.  Derivatives of $\psi_\lam$ can be computed analytically, whereas derivatives of call prices $C(t,x;K)$ must be computed numerically from market data.
\end{remark}
Using \eqref{eq:ImpLevSub} one can obtain the value of $\phi(\lam)$ for all $\lam \in \sig(-\A^1)$.  This information is sufficient for constructing the FK transition density $p^{\phi,1}(t,x,y)$ and the transition density $p_S(t,x,y)$ of $S$.  However, to compute the value of $K_{var}$ we need the drift $\gamma$ and the L\'{e}vy measure $\nu$ of the subordinator $T$.  As we show in the next two subsections, $\gamma$ and $\nu(ds)$ can be obtained from limited knowledge of the map $\phi$.
\subsection{Example: the Geometric Brownian motion with default}
\label{s:Phi_GBM}
We now develop in detail how to imply $\phi(\lambda)$ from options data, when the underlying diffusion is given by (\ref{eq:m.GBM}). In order to use (\ref{eq:ImpLevSub}) we need to obtain $\phi(-\mu)$ since it appears on both sides of the equations, and the eigenfunctions of the killed diffusion, which are given by (\ref{eq:eigen.GBM}). The Laplace exponent (\ref{eq:ImpLevSub}) is given in terms of the ratio,
\begin{align}
\frac{\psi_\lambda\left(Ke^{-\phi(-\mu)t}\right)}{\psi_\lambda(x)}=\exp\left(-\frac{(i\lambda-\xi)}{\sigma}\phi(-\mu)t\right)x^{-\frac{(i\lambda-\xi)}{\sigma}}K^{\frac{(i\lambda-\xi)}{\sigma}} , \label{eq:0}
\end{align}
Note that (\ref{eq:ImpLevSub}) simplifies to,
\begin{align}
\phi(\lambda)&=&\left(\frac{i\lambda-\xi}{\sigma}\right)\phi(-\mu)-t^{-1}\left[\log\left(\int_E\partial^2_{KK}C(t,x;K)K^{\frac{i\lambda-\xi}{\sigma}}\mathrm{d}K\right)-\left(\frac{i\lambda-\xi}{\sigma}\right)\log(x)\right] , \label{eq:1}
\end{align}
setting $\lambda=-\mu$ we find,
\begin{align}
\phi(-\mu)=-\left[t\left(1+\frac{\xi+i\mu}{\sigma}\right)\right]^{-1}\left[\log\left(\int_E\partial^2_{KK}C(t,x;K)K^{-\frac{(i\mu+\xi)}{\sigma}}\mathrm{d}K\right)+\frac{i\mu+\xi}{\sigma}\log(x)\right] . \label{eq:2}
\end{align}
Now that we have obtained $\phi(-\mu)$, we can compute the whole function $\lambda\mapsto\phi(\lambda)$ with numerical integration. Moreover, we can use the same approach to derive the Laplace exponent of the L\'evy subordinator for the JDCEV dynamics. However, in this case we need to solve numerically for $\phi(-\mu)$.


\subsection{Obtaining $\gamma$ and $\nu$ from $\phi(\lam)$: the compound Poisson case}
\label{sec:poisson}
As noted in Section \ref{sec:model}, when the subordinator $T$ is of the compound Poisson type, its L\'{e}vy measure $\nu$ can be written as the product of the net jump intensity $\alpha:=\nu((0,\infty))$ times the jump distribution $F$
\begin{align}
\nu(ds)
		&=		\alpha \, F(ds) . \label{eq:nu=aF}
\end{align}
In this scenario, the L\'{e}vy-Kintchine formula \eqref{eq:LevyKintchine} can be written
\begin{align}
\phi(\lam)
		&=		\gamma \, \lam + \alpha \int_0^\infty (1 - e^{-\lam s}) F(ds) 
		=			\gamma \, \lam + \alpha \( 1 - \Fh(\lam) \) , \label{eq:phi,2}
\end{align}
where we have defined $\Fh(\lam)$, the Laplace transform of the measure $F(ds)$ 
\begin{align}
\Fh(\lam)
		&:=		\int_0^\infty e^{-\lam s} F(ds) . 
\end{align}
The drift of the subordinator $\gamma$ and the net jump intensity $\alpha$ can now be obtained from $\phi(\lam)$ by taking the following limits
\begin{align}
\lim_{\lam \to \infty} \frac{\phi(\lam)}{\lam} 
		&= 	\lim_{\lam \to \infty} \( \gamma + \alpha \int_0^\infty \frac{1 - e^{-\lam s}}{\lam} F(ds) \)
		=		\gamma , \label{eq:gamma} \\
\lim_{\lam \to \infty} \( \phi(\lam) - \gamma \, \lam \)
		&=		\lim_{\lam \to \infty} \alpha \int_0^\infty \(1 - e^{-\lam s}\) F(ds)
		=			\alpha , \label{eq:alpha}
\end{align}
where we have used $\int_0^\infty F(ds) = 1$.  After obtaining $\gamma$ and $\alpha$, we use \eqref{eq:phi,2} to solve for $\Fh$
\begin{align}
\Fh(\lam)
		&=		1 + \frac{\gamma \lam - \phi(\lam)}{\alpha} . \label{eq:Fhat}
\end{align}
Note, because $\sig(-\A^1) \subseteq \R^+$, equation \eqref{eq:Fhat} will not give us the value of $\Fh(\lam)$ for any $\lam < 0$ (we know $\Fh(0)=1$).  However, knowledge of $\Fh(\lam)$ for $\lam<0$ is not needed in order to uniquely determine $\Fh$.  To see this, we need the following theorem.
\begin{theorem}[Analyticity of the Laplace Transform]
\label{thm:laplace}
Let $\lam \in \R$.  Let $\alpha(s)$ be a real, non-negative, non-decreasing function which satisfies $\alpha(0)=0$ and is of bounded variation on $[0,R]$ for every $R>0$.  If the integral
\begin{align}
\widehat{\alpha}(\lam)
		&:=		\int_0^\infty e^{-\lam s} \alpha(ds) ,
\end{align}
converges for all $\lam > \lam_c$, $\lam_c<\infty$ then $\widehat{\alpha}(\lam)$ is analytic for all $\lam>\lam_c$, and
\begin{align}
\frac{d^n}{d\lam^n}\widehat{\alpha}(\lam)
		&:=		\int_0^\infty (-s )^n e^{-\lam s} \alpha(ds) . \label{eq:laplace.derivative}
\end{align}
\end{theorem}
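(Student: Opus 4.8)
The plan is to prove the statement in three stages: (a) an elementary integrability estimate that ``gains'' powers of $s$ inside the integral; (b) iterated differentiation under the integral sign, producing $\widehat{\alpha}\in C^\infty(\lam_c,\infty)$ together with the derivative formula; and (c) a Tonelli argument identifying $\widehat{\alpha}$ locally with its own Taylor series, which yields analyticity.

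First I would record the basic structure. Since $\alpha$ is non-decreasing with $\alpha(0)=0$, the Lebesgue--Stieltjes measure $\alpha(ds)$ is a positive measure on $[0,\infty)$, so ``converges'' may be read as ``converges absolutely'', and since $e^{-\lam' s}\le e^{-\lam s}$ for $s\ge0$ whenever $\lam'\ge\lam$, the set of $\lam$ for which $\widehat{\alpha}(\lam)$ is finite is precisely a half-line with left endpoint $\lam_c$. Now fix $\lam>\lam_c$, choose auxiliary values $\lam_c<\lam'<\lam_0<\lam$, and observe the key estimate: for every integer $n\ge0$,
\begin{equation*}
\int_0^\infty s^n e^{-\lam_0 s}\,\alpha(ds)\ \le\ C_n\int_0^\infty e^{-\lam' s}\,\alpha(ds)\ <\ \infty,\qquad C_n:=\sup_{s\ge0}s^n e^{-(\lam_0-\lam')s}=\Bigl(\tfrac{n}{e(\lam_0-\lam')}\Bigr)^{n},
\end{equation*}
which follows by writing $s^n e^{-\lam_0 s}=\bigl(s^n e^{-(\lam_0-\lam')s}\bigr)\,e^{-\lam' s}$ and using finiteness of $\widehat{\alpha}(\lam')$. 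This estimate is the one substantive ingredient; everything after it is the routine machinery of interchanging limits with integrals, and I expect that justifying those interchanges (dominated convergence in stage (b), Tonelli in stage (c)) is the only place any care is required.

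For stage (b), on $\lam\in(\lam_0,\infty)$ the integrand $e^{-\lam s}$ is differentiable in $\lam$ with $|\partial_\lam e^{-\lam s}|=se^{-\lam s}\le se^{-\lam_0 s}$, and $s\mapsto se^{-\lam_0 s}$ is $\alpha$-integrable by the $n=1$ case, so the standard theorem on differentiation under the integral sign gives $\widehat{\alpha}'(\lam)=\int_0^\infty(-s)e^{-\lam s}\alpha(ds)$ on $(\lam_0,\infty)$, hence on $(\lam_c,\infty)$ since $\lam_0\downarrow\lam_c$ is arbitrary; iterating, with the $n$-th step dominated by $s^n e^{-\lam_0 s}\in L^1(\alpha)$, gives by induction that $\widehat{\alpha}\in C^\infty(\lam_c,\infty)$ and $\widehat{\alpha}^{(n)}(\lam)=\int_0^\infty(-s)^n e^{-\lam s}\alpha(ds)$, which is \eqref{eq:laplace.derivative}. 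For stage (c), fix $\lam>\lam_c$ and $h$ with $|h|<\lam-\lam_c$. The candidate Taylor series of $\widehat{\alpha}$ about $\lam$ is $\sum_{n\ge0}\tfrac{h^n}{n!}\widehat{\alpha}^{(n)}(\lam)=\sum_{n\ge0}\tfrac{1}{n!}\int_0^\infty(-hs)^n e^{-\lam s}\alpha(ds)$, whose absolute convergence, by Tonelli, is measured by $\int_0^\infty\bigl(\sum_{n\ge0}\tfrac{(|h|s)^n}{n!}\bigr)e^{-\lam s}\alpha(ds)=\int_0^\infty e^{-(\lam-|h|)s}\alpha(ds)<\infty$, finite because $\lam-|h|>\lam_c$; having established absolute convergence I would swap the sum and the integral to obtain $\sum_{n\ge0}\tfrac{h^n}{n!}\widehat{\alpha}^{(n)}(\lam)=\int_0^\infty\bigl(\sum_{n\ge0}\tfrac{(-hs)^n}{n!}\bigr)e^{-\lam s}\alpha(ds)=\int_0^\infty e^{-(\lam+h)s}\alpha(ds)=\widehat{\alpha}(\lam+h)$. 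Thus $\widehat{\alpha}$ coincides with a convergent power series on a neighbourhood of every point of $(\lam_c,\infty)$, so it is real-analytic there, which completes the argument. (Alternatively, one may extend $\widehat{\alpha}$ to the half-plane $\{\operatorname{Re} z>\lam_c\}$ by $z\mapsto\int_0^\infty e^{-zs}\alpha(ds)$ and deduce holomorphy from Morera's theorem, the needed Fubini step again being licensed by the displayed estimate; but the Taylor-series route is more self-contained.)
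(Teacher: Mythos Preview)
Your argument is correct and complete. The paper itself does not prove this theorem at all: its ``proof'' consists of the single sentence ``See \cite{widder-1}, page 57, Theorem 5a.'' So there is no approach in the paper to compare against; you have supplied a full, self-contained proof where the authors simply invoke Widder's classical result. Your three-stage argument (integrability estimate, dominated convergence for the derivative formula, Tonelli for the Taylor series) is in fact essentially the standard proof one finds in Widder and elsewhere, so in that sense you have reconstructed what the citation points to.
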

\begin{proof}
See \cite{widder-1}, page 57, Theorem 5a.
\end{proof}
\begin{remark}
A function that is analytic in a domain $D$ is uniquely determined over $D$ by its values along a line segment in $D$.
\end{remark}
\begin{remark}
From \eqref{eq:laplace.derivative}, it is clear that $\widehat{\alpha}(\lam)$ is decreasing and convex.
\end{remark}
\noindent
Let ${\rm Dom}(\Fh)=(\lam_c,\infty)$ where $\lam_c \in \R$.  If the continuous spectrum of $-\A^1$ is non-empty $\sig_c(-\A^1) \neq \{ \emptyset \}$, equation \eqref{eq:Fhat} gives us a map of $\Fh(\lam)$ on some interval $I \subset {\rm Dom}(\Fh)$.  The analytic extension of that map is unique and well-defined in throughout ${\rm Dom}(\Fh)$.  If the continuous spectrum of $-\A^1$ is empty $\sig_c(-\A^1) = \{ \emptyset \}$, then \eqref{eq:Fhat} gives us a map of $\Fh(\lam)$ at a countably infinite number of points in ${\rm Dom}(\Fh)$ (i.e., the proper eigenvalues $\lam_n$ of $-\A^1$).  In this case, the analytic extension of $\Fh(\lam)$ is still uniquely determined if $F$ is Lipschitz (see \cite{baumer-neubrander-1}, Corollary 1.3)
\begin{align}
F(0)
		&=		0 &
		&\text{and}  &
\sup_{t,s\geq0} \frac{|F(t)-F(s)|}{|t-s|}
		&<		\infty .
\end{align}
If $F$ is not Lipschitz, the analytic continuation of $\Fh(\lam)$ is uniquely determined if we know the value of $\Fh(\lam)$ at equally spaced intervals, i.e., if, for $n=1,2,3,\cdots$, we know $\Fh(\lam_n)$ where $\lam_n = a + b \, n$ for some $a>\lam_c$ and $b > 0$ (see \citet{widder-1}, Theorem 6.2).  Note that the eigenvalues \eqref{eq:eigenval.JDCEV} of the JDCEV process are equally spaced.
\par
From a practical standpoint, one cannot evaluate \eqref{eq:Fhat} at an infinite number of $\lam$.  Thus, in order to obtain $F(ds)$ from \eqref{eq:Fhat}, one should seek to fit a positive, analytic, decreasing, convex function to a finite number of points of $\Fh(\lam)$.  Upon doing this, one can use the inverse Laplace transform (Bromwich integral) to obtain $F((0,s))$
\begin{align}
F((0,s))
		&=		\frac{1}{2 \pi i} \int_0^s \( \int_{C-i\infty}^{C+i\infty} e^{\lam u} \Fh(\lam) d\lam \) du  .
\end{align}
Here the constant $C\in\R$ is chosen so that the contour of integration lies to the right of all singularities of $\Fh(\lam)$.
Another option for obtaining $F(ds)$ is to numerically invert the Laplace transform $\Fh(\lam)$.  For a survey of numerical techniques for Laplace inversion we refer the reader to \cite{davies-martin-1} and the references therein.  Once $F(ds)$ is obtained, the L\'{e}vy measure $\nu$ is given by \eqref{eq:nu=aF}.  In figure \ref{fig:phi}, we graphically illustrate how to obtain $\gamma$, $\alpha$ and $\Fh(\lam)$ from knowledge of $\phi(\lam)$ at a discrete set of points.


\subsection{Obtaining $\gamma$ and $\nu$ from $\phi(\lam)$: the general case}
\label{sec:general}
When the subordinator $T$ is \emph{not} of the compound Poisson type, its drift $\gamma$ can still be found using
\begin{align}
\lim_{\lam \to \infty} \frac{\phi(\lam)}{\lam} 
		&= 	\lim_{\lam \to \infty} \( \gamma + \int_0^\infty \frac{1 - e^{-\lam s}}{\lam} \nu(ds) \)
		=		\gamma .
\end{align}
To obtain $\nu$ we must introduce $\om(s)$ the \emph{tail of the L\'{e}vy measure}  
\begin{align}
\om(s)
		&:= 	\nu((s,\infty))		
		=			\int_{(s,\infty)} \nu(dz) .
\end{align}
Following \cite{bertoin-4}, pp. 7, we note that
\begin{align}
\frac{\phi(\lam)}{\lam}-\gamma
		&=		\int_{(0,\infty)} \frac{1-e^{-\lam s}}{\lam}\nu(ds)
		=			\int_{(0,\infty)} e^{-\lam s} \om(s) ds 
		=:			\omh(\lam) , \label{eq:omegahat}
\end{align}
where $\omh(\lam)$ is the Laplace transform of the function $\om(s)$.  Depending on the nature of $\sig(-\A^1)$, equation \eqref{eq:omegahat} either gives us a map of $\omh(\lam)$ along a line segment $I \subset {\rm Dom}(\omh)$, in which case the analytic extension of $\omh(\lam)$ is unique, or \eqref{eq:omegahat} gives us the value of $\omh(\lam)$ at a countably infinite number of points.  In this case, the analytic continuation of $\omh(\lam)$ is uniquely determined if we know the value of $\omh(\lam)$ at equally spaced intervals (the Lipschitz condition would not be satisfied for an infinite activity L\'{e}vy process).  From a practical standpoint, one may seek to fit an analytic, decreasing, convex function to a finite number of points of $\omh(\lam)$.  Upon doing this, one can obtain $\om(s)$ from the Bromwich integral
\begin{align}
\om(s)
		&=		\frac{1}{2 \pi i} \int_{C-i\infty}^{C+i\infty} e^{\lam s} \omh(\lam) d\lam .
\end{align}
Finally, one obtains $\nu(ds)$ from
\begin{align}
\nu(ds) 
		&=		-d \om(s).
\end{align}

\section{Empirical results}
\label{sec:calibration}
In this Section, we use the framework developed in Section \ref{sec:impliedtime} to imply $\phi$ from the market using call options on Apple, Google, Microsoft, and Facebook.  We then use this information to compute Variance swap rates.  
Call option quotes were obtained on April 12, 2013 from Google Finance and have a maturity of 19 days. None of the firms payed dividends over the tenor of the option. For each individual symbol we implement the procedure described below.
\par
In order to imply $\phi$ from the market, we must first fix a background diffusion $X$.  For simplicity, as in Section \ref{sec:GBM}, we assume killed geometric Brownian motion dynamics for the background diffusion $X$.  The generator $\A_1$ of $X$ is given in \eqref{eq:A.GBM}.  In order to find the parameters ($\sig$, $\mu$, and $k$) of $\A_1$, we choose a range of parametric subordinators, and we fit the subordinated diffusion model to observed implied volatilities. Our calibration results reveal that $\mu$ is an order of magnitude smaller than $k$.  Therefore we assume $\mu=0$.   Next, to obtain the remaining parameters $(k,\sigma)$, we fit the killed GBM model (with no subordination) to observed implied volatilities.  The results of this calibration fix $k$ and $\sigma$.
\par
Having fixed a background process $X$, we use equation \eqref{eq:1} from Section \ref{s:Phi_GBM} to derive the Laplace characteristic exponent $\phi$ of the unknown market subordinator.  In order to obtain a continuum $K \mapsto C(t,K)$ of call option prices (which are needed to compute the integral in \eqref{eq:1}), we use the \emph{Stochastic Volatility Inspired} (SVI) parametric family of \cite{Gatheral-2}  and \cite{Gatheral-Jacquier-1} to interpolate the volatility smile.  Note that the Laplace exponent $\phi$ is mathematically expected to be purely real.  However, because we are projecting the density $p_S(t,x,y)$ onto an imposed basis of eigenfunctions corresponding to the infinitesimal generator $\A_1$, we find that the market implied $\phi$  has an imaginary component.   However, the imaginary component is significantly smaller than the real part of $\phi$.  Furthermore, we find that the real part of $\phi$ verifies the properties of characteristic functions.
\par
In Figures \ref{fig:MarketImpliedPhiLambda} and \ref{fig:MarketImpliedPhiLambda2}, we plot the obtained (real part) of $\phi$ for the four stocks in our data set (Apple, Google, Microsoft, and Facebook).  In the same Figures, we also plot $\phi(\lambda)/\lambda$ and $\phi(\lambda)-\gamma\lambda$, which, using equations \eqref{eq:gamma} and \eqref{eq:alpha}, enable us to identify the drift $\gamma$ and net jump-intensity $\alpha$ of the subordinator, and $\widehat{F}(\lam)$, which we obtain using \eqref{eq:Fhat}.  Note that the jump intensity appears to be finite for all four stocks (though, we emphasize that we did not assume this in our analysis).  Next, we use \eqref{eq:Fhat} to derive $\widehat{F}$.  Finally, we obtain the L\'evy measure of the subordinator $\nu(ds)=\alpha \cdot F(ds)$ using the Laplace transform inversion techniques of \cite{davies-martin-1}.  Having fully characterized the subordinator for each stock, we use Proposition \ref{GBM.QV} to compute variance swap rates.  The results are given below.
\newline
\begin{center}
\begin{tabular}{|c|| c | c | c | c |}
\hline
& AAPL & FB & GOOG & MSFT \\
\hline \hline
1M & 0.007821752 & 0.009443126 & 0.00649349 & 0.002771267\\ \hline
3M & 0.023352561 & 0.028183341 & 0.019121125 & 0.008281498\\ \hline
6M & 0.046369764 & 0.055932341 & 0.037197644 & 0.016466711\\ \hline
1Y & 0.091417487 & 0.11015411 & 0.070445294 & 0.032552754\\ \hline
\end{tabular}
\end{center}

%
%

\section{Conclusion}
\label{sec:conclusion}
In this paper we model the price process of an underlying $S$ as a Feller diffusion time changed by a  L\'{e}vy subordinator.  This class of models, first developed in \cite{mendoza-carr-linetsky-1}, allows for the underlying to experience jumps with a state-dependent L\'{e}vy measure, local stochastic volatility and a local stochastic default intensity.
\par
The contribution of this paper is two-fold.  First, we show how to compute the price of a VS contract in the general L\'{e}vy subordinated diffusion setting.  Using our general formula, we perform specific VS computations when the background diffusion is modeled as (i) a GBM with default and $(ii)$ a JDCEV process.  Second, we show that the drift and L\'{e}vy measure of the L\'{e}vy subordinator that drives the price process can be obtained directly by observation of the $t$-expiry volatility smile.  By using call and put prices to uniquely determine the L\'{e}vy subordinator that drives the underlying price process, we reduce the risk of model misspecification.

\subsection*{Thanks}
The authors would like to thank Stephan Sturm and Vadim Linetsky for their helpful comments on this work.

%
%

\appendix

%
%

\section{Proofs}



\subsection{Proof of Proposition \ref{GBM.QV}}
\label{GBM.QV.proof}
Part $(i)$ is a straightforward computation
\begin{align}
\int_0^t \E_x \[ \I_{\{\zeta^\phi>u\}} \gamma \sig^2(X_u^\phi) \] du 
		&=		\gamma \sig^2 \int_0^t \E_x \[ \I_{\{\zeta^\phi>u\}} \] du 
		=			\gamma \sig^2 \int_0^t e^{-ku} du
		=			\gamma \sig^2 \( \frac{1-e^{-k t}}{k} \) .
\end{align}
Part $(ii)$. Using Proposition~\ref{prop.semi.g.1} we have
\begin{align}
&\int_0^t  \E_x \[ \I_{\{\zeta^\phi>u\}}\int_{\mathbb R}\log^2\Big(1+\frac{y}{X^\phi_{u-}}\Big)\pi^{\phi,1}(X_{u-}^\phi,y)dy \] du \\
&\qquad=  \int_{(0,\infty)} \int_0^t \int_\R e^{-\lam_\om s} \bigg\{
		\Pc_u^{\phi,1} \( \psi_\om(x) \) \cdot \( \int_{E\setminus\{x\}} \log^2(y) \psib_\om(y) \m(y)  dy \) \\ & \qquad \qquad
		- 2 \, \Pc_u^{\phi,1} \( \psi_\om(x) \log x \) \cdot\( \int_{E\setminus\{x\}} \log (y) \psib_\om(y) \m(y)  dy \) \\ & \qquad \qquad
		+ \Pc_u^{\phi,1} \( \psi_\om(x) \log^2 x \) \cdot\( \int_{E\setminus\{x\}} \psib_\om(y) \m(y)  dy \)
		\bigg\} \, d\om \, du \, \nu(ds)	,	\label{eq:GBM.temp}
\end{align}
where
\begin{align}
\Pc_u^{\phi,1} \( \psi_\om(x) \log^n x \)
		&=		(- i \sig )^n	\d_{\om}^n \( e^{-\phi(\lam_{\om}) u } \psi_{\om}(x) \) , \label{eq:GBM.semigroup} \\
 \int_E \log^n(y) \psib_\om(y) \m(y)  dy
		&=		2 \, \sqrt{\pi} \, \sig^{n-1/2} \, i^n \, \d_\om^n \del(\om + i \xi) . \label{eq:GBM.integral}
\end{align}
Inserting \eqref{eq:GBM.semigroup} and \eqref{eq:GBM.integral} into \eqref{eq:GBM.temp} yields
\begin{align}
\eqref{eq:GBM.temp}
&= \int_{(0,\infty)} \int_0^t \int_\R e^{-\lam_\om s} \bigg\{
		\( e^{-\phi(\lam_{\om}) u } \psi_{\om}(x) \)  
			\cdot 2 \, \sqrt{\pi} \, \sig^{2-1/2} \, i^2 \, \d_\om^2 \del(\om + i \xi) \\ & \qquad \qquad
		- 2 \, (- i \sig )	\d_{\om} \( e^{-\phi(\lam_{\om}) u } \psi_{\om}(x) \) 
			\cdot 2 \, \sqrt{\pi} \, \sig^{1-1/2} \, i \, \d_\om \del(\om + i \xi) \\ & \qquad \qquad
		+ (- i \sig )^2	\d_{\om}^2 \( e^{-\phi(\lam_{\om}) u } \psi_{\om}(x) \)  
			\cdot 2 \, \sqrt{\pi} \, \sig^{-1/2} \, \del(\om + i \xi)
		\bigg\} \, d\om \, du \, \nu(ds)	\\
&= 2 \, \sqrt{\pi} \sig^{2-1/2} i^2 \int_{(0,\infty)} \int_\R \bigg\{ 
		\d_\om^2 \[ e^{-\lam_\om s} \( \frac{1 - e^{-\phi(\lam_{\om}) t }}{\phi(\lam_\om)} \psi_{\om}(x) \) \] \\&\qquad\qquad
		- 2 \,\d_\om \[ e^{-\lam_\om s} \d_\om \( \frac{1 - e^{-\phi(\lam_{\om}) t }}{\phi(\lam_\om)} \psi_{\om}(x) \) \] \\&\qquad\qquad
		+ \[ e^{-\lam_\om s} \d_\om^2 \( \frac{1 - e^{-\phi(\lam_{\om}) t }}{\phi(\lam_\om)} \psi_{\om}(x) \) \]
		\bigg\} \del(\om + i \xi) d\om \, \nu(ds) . \label{eq:GBM.temp2}
\end{align}
In the second equality we have integrated with respect to $u$ and used integration by parts to move the $\d_\om^n$ off of the Dirac delta functions.  Noting that
\begin{align}
(f g)'' - 2 (f g')' + f g''
		&=		f'' g , &
\psi_{-i\xi}(x)
		&= \sqrt{\sig/4\pi}, &
\lam_{-i\xi}
		&=	k ,
\end{align}
we find
\begin{align}
\eqref{eq:GBM.temp2}
&=  2 \, \sqrt{\pi} \sig^{2-1/2} i^2 \int_{(0,\infty)} \bigg\{ 
		\d_\om^2 \( e^{-\lam_\om s} \) \cdot \( \frac{1 - e^{-\phi(\lam_{\om}) t }}{\phi(\lam_\om)} \psi_{\om}(x) \) 
		\bigg\}_{\om = - i \nu} \nu(ds) \\
&=	\sig^2 \( \frac{1 - e^{-\phi(k) t }}{\phi(k)} \) \int_{(0,\infty)}e^{-k s} \(  s + s^2 \nu^2 \) \nu(ds) . \label{eq:nu.eps}
\end{align}


\subsection{Computation of ${\tilde c}_n =(x ^p,\psi_n )$ from Proposition \ref{mom.p}}
\label{sec:coeffs}
Part $(ii)$. Consider the series $U(x)=\sum_{k=1}^\infty u_k(x)$ for $x\in D\subset (0,\infty)$. If for all $x\in D$ the function $u_k(x)$ satisfies the inequality $|u_k(x)|\leq d_k$ for $k=1,2,\cdots,$ where the series $\sum_{k=1}^\infty d_k <\infty$, then the series $U(x)$ converges uniformly (see \cite{prudnikov-3}, Section I.3.4.3, p.751). From the inequality (27a) on p.54 of \cite{nikiforov-uvarov-1}, we find that $|\psi^1_n(x)|< C/(n-1)^{1/4}$ for some $C<\infty$ independent of $n$. Therefore, we have
\begin{align}
|c_n\psi_n(x)|\leq  \frac{C\Big|\left(\frac{1-p}{2|\beta |}\right)_{n-1}\Big| }{\sqrt{(n-1)! \Gamma(\nu +n)}(n-1)^{1/4}}=d_n,\end{align}
To show that the series $\sum_{n}^\infty d_n $ converges, it is enough to show that for a large $n$ we have $\log(d_n)/\log(n)<-1$ 
(see \cite{prudnikov-3} Section I.3.2.19, p.751). 
Therefore, observe that
\begin{align}
d_n&\approx\frac{(n-1)^{\frac{1-p}{2|\beta |}+n/2-3/2}}{ (\nu+n)^{(\nu+n)/2-1/4} } ,& n &\gg 1 .
\end{align}
Thus
\begin{align}
\frac{\log(d_n)}{\log(n)}
&\approx \frac{\left(\left(\frac{1-p}{2|\beta |}\right)+n/2-3/2\right)\log(n-1) - \left((\nu +n)/2 -1/4\right)\log(\nu+n) }{\log(n)}\\
&<\Big[\left(\left(\frac{1-p}{2|\beta |}\right)+\frac{n}{2}-\frac{3}{2}\right) - \left(\frac{\nu +n}{2} -\frac{1}{4}\right)\Big]\frac{\log(n-1) }{\log(n)} .
\end{align}
Note that $\log(n-1)/\log(n)\uparrow1$. Moreover, it can be verified that the term inside the bracket is less than $-1$ for all $ p>(\beta+1)/2 -c$ (recall $\nu=(2c+1)/(2|\beta|))$. This shows uniform convergence at $t=0$ for $ p>(\beta+1)/2 -c$. To show that the series is absolutely convergent at $t=0$ it suffices to show that  $\lim_{n\rightarrow\infty} d_{n+1}/d_{n}<1$ with $d_n=|c_n\psi_n(x)|$ (i.e., d'Alambert's test for convergence). Equivalently, $\sum_{n}^\infty d_n $ converges if $\log(d_{n+1}/d_{n})<0$ as $n\rightarrow\infty$.
Hence, analyzing the asymptotic behavior and noticing that $L_{n-1}^{\nu}(z)\approx  e^{\frac{z}{2}}z^{-(2\nu+1)/4} (n-1)^{\nu/2-1/4} \cos\{2\sqrt{(n-1)z}-\pi(2\nu+1)/4\}/\sqrt{\pi}$ for $n\gg1$, we find that
\begin{align}
d_n&=\Big|\frac{A ^{\frac{\nu }{2}-\frac{p+2c }{2|\beta |}}\left(\frac{1-p}{2|\beta |}\right)_{n-1} \Gamma\left(\frac{p+2c }{2|\beta |}+1\right) }{\sqrt{(n-1)! |\mu +b |\Gamma(\nu +n)}} \psi_n (x )\Big|<C\frac{(n-1)^{\frac{1-p}{2|\beta |}+\frac{\nu}{2}+n-\frac{7}{4}}}{(\nu+n)^{(\nu+n)-1/2}},&n&\gg1.
\end{align}
Thus, we would like to test $\lim_{n\rightarrow\infty}\log(d_{n+1}/d_n)<0$, where $n\gg1$.  First observe that
\begin{align}
\log\Big(\frac{d_{n+1}}{d_n}\Big)&=\Big(\frac{1-p}{2|\beta |}+\frac{\nu}{2}+n-\frac{3}{4}\Big)\log(n)-\Big(\frac{1-p}{2|\beta |}
+\frac{\nu}{2}+n-\frac{7}{4}\Big)\log(n-1)\\
&\qquad+\Big(\nu+n-\frac{1}{2}\Big)\log(\nu+n)-\Big(\nu+n+\frac{1}{2}\Big)\log(\nu+n+1).
\end{align}
Then, making use of the approximation $\log(n+a)\approx\log(n)+a/n-a^2/(2n^2)+\cdots$, we find
\begin{align}
\lim_{n\rightarrow\infty}\log\Big(\frac{d_{n+1}}{d_n}\Big)=\lim_{n\rightarrow\infty}-\frac{1}{n}\Big(\frac{1-p}{2|\beta |}+\frac{\nu}{2}+n-\frac{7}{4}\Big)+\Big(\nu+n-\frac{1}{2}\Big)\frac{\nu}{n}-\Big(\nu+n+\frac{1}{2}\Big)\frac{\nu+1}{n}=-2,
\end{align}
which concludes the proof.


\subsection{Proof of Proposition~\ref{longInt}}
\label{sec:long.proof}
First observe that from Proposition~\ref{prop.semi.g.1} we obtain,
\begin{align}
&\int_0^t  \E_x \[ \I_{\{\zeta^\phi>u\}}\int_{\mathbb R}\log^2\Big(1+\frac{y}{X^\phi_{u-}}\Big)\pi^{\phi,1}(X_{u-}^\phi,y)dy \] du\\
&=\int_{(0,\infty)}\int_0^t\sum_{n=1}^\infty e^{-\lambda_n s}\Big\{\Big(\int_{E\setminus\{x\}} \log^2(y)\psi_n(y){\mathfrak m}(y)dy\Big)\big[\Pc_u^{\phi,1} \psi_n(x)\big]\\
&\qquad-2\Big(\int_{E\setminus\{x\}} \log(y)\psi_n(y){\mathfrak m}(y)dy\Big)\big[\Pc_u^{\phi,1} \big(\log(x)\psi_n(x)\big)\big]\\
&\qquad+\Big(\int_{E\setminus\{x\}} \psi_n(y){\mathfrak m}(y)dy\Big)\big[\Pc_u^{\phi,1} \big(\log^2(x)\psi_n(x)\big)\big]\Big\}du\,\nu(ds),\label{longint.2}
\end{align} 
where
\begin{align}
\Pc_u^{\phi,1} \psi_n(x)
		&=\sum_{m=1}^\infty e^{-\phi(\lambda_m)u}(\psi_n,\psi_m)\psi_m(x)=e^{-\phi(\lambda_n)u}\psi_n(x),\\
\Pc_u^{\phi,1} \big(\log(x)\psi_n(x)\big)
		&=\sum_{m=1}^\infty e^{-\phi(\lambda_m)u} d^1_{n,m}\psi_m(x),\quad d^1_{n,m}=\int_{E}\log(y)\psi_n(y)\psi_m(y){\mathfrak m}(y)dy,\\
\Pc_u^{\phi,1} \big(\log^2(x)\psi_n(x)\big)
		&=\sum_{m=1}^\infty e^{-\phi(\lambda_m)u} d^2_{n,m}\psi_m(x),\quad d^2_{n,m}=\int_{E}\log^2(y)\psi_n(y)\psi_m(y){\mathfrak m}(y)dy .
\end{align}
Explicit representation of $d^1_{n,m}$ and $ d^2_{n,m}$ are found using
\begin{align}
d^1_{n,m}=2|\beta|A^{\nu+1}\sqrt{\frac{(n-1)!(m-1)!  }{\Gamma(\nu+n)\Gamma(\nu+m)}}\int_{E}y^{2c-2\beta}\log(y)e^{- Ay^{-2\beta}}L_{n-1}^{\nu}(Ay^{-2\beta})L_{m-1}^{\nu}(Ay^{-2\beta})dy,
\end{align}
and
\begin{align}
d^2_{n,m}=2|\beta|A^{\nu+1}\sqrt{\frac{(n-1)!(m-1)!  }{\Gamma(\nu+n)\Gamma(\nu+m)}}\int_{E}y^{2c-2\beta}\log^2(y)e^{- Ay^{-2\beta}}L_{n-1}^{\nu}(Ay^{-2\beta})L_{m-1}^{\nu}(Ay^{-2\beta})dy.
\end{align}
Making the change of variable $z=Ay^{-2\beta}$ we obtain
\begin{align}
d^1_{n,m}
		&=\frac{1}{2|\beta|}\sqrt{\frac{(n-1)!(m-1)!  }{\Gamma(\nu+n)\Gamma(\nu+m)}}\int_{E}z^{\nu}(\log(z)-\log(A))e^{- z}L_{n-1}^{\nu}(z)L_{m-1}^{\nu}(z)dz\\
		&=\frac{1}{2|\beta|}\sqrt{\frac{(n-1)!(m-1)!  }{\Gamma(\nu+n)\Gamma(\nu+m)}}\int_{E}z^{\nu}\log(z)e^{- z}L_{n-1}^{\nu}(z)L_{m-1}^{\nu}(z)dz\\ &\qquad
-\frac{\log(A)}{2|\beta|}\sqrt{\frac{(n-1)!(m-1)!  }{\Gamma(\nu+n)\Gamma(\nu+m)}}\int_{E}z^{\nu}e^{- z}L_{n-1}^{\nu}(z)L_{m-1}^{\nu}(z)dz,
\end{align}
and
\begin{align}
d^2_{n,m}
		&=\frac{1}{(2|\beta|)^2}\sqrt{\frac{(n-1)!(m-1)!  }{\Gamma(\nu+n)\Gamma(\nu+m)}}\int_{E}z^{\nu}(\log(z)-\log(A))^2e^{- z}L_{n-1}^{\nu}(z)L_{m-1}^{\nu}(z)dz\\
		&=\frac{1}{4|\beta|^2}\sqrt{\frac{(n-1)!(m-1)!  }{\Gamma(\nu+n)\Gamma(\nu+m)}}\int_{E}z^{\nu}\log^2(z)e^{- z}L_{n-1}^{\nu}(z)L_{m-1}^{\nu}(z)dz\\ &\qquad
-\frac{\log(A)}{2|\beta|^2}\sqrt{\frac{(n-1)!(m-1)!  }{\Gamma(\nu+n)\Gamma(\nu+m)}}\int_{E}z^{\nu}\log(z)e^{- z}L_{n-1}^{\nu}(z)L_{m-1}^{\nu}(z)dz\\ &\qquad
+\frac{\log^2(A)}{4|\beta|^2}\sqrt{\frac{(n-1)!(m-1)!  }{\Gamma(\nu+n)\Gamma(\nu+m)}}\int_{E}z^{\nu}e^{- z}L_{n-1}^{\nu}(z)L_{m-1}^{\nu}(z)dz.
\end{align}
Now, we use the identity $\int_{E}z^{\nu}e^{- z}L_{n-1}^{\nu}(z)L_{m-1}^{\nu}(z)dz=\delta_{n,m}\Gamma(\nu+n)/(n-1)!$ (where $\delta_{n,m}$ is the Kronecker delta) and the series expansion for the Generalized Laguerre polynomials ($L_{n}^\nu(z)=\Gamma(\nu+n+1)/n! \sum_{k=0}^n(-n)_kz^k /(\Gamma(\nu+k+1)k!)$) to obtain
\begin{align}
d^1_{n,m}=\frac{1}{2|\beta|}\left\{\sqrt{\frac{(m-1)!  \Gamma(\nu+n)}{(n-1)!\Gamma(\nu+m)}}\sum_{k=0}^{n-1}\frac{(1-n)_{k}}{\Gamma(\nu+k+1)k!}\int_{E}z^{(\nu+k+1)-1}\log(z)e^{- z}L_{m-1}^{\nu}(z)dz-\log(A)\right\},
\end{align}
and
\begin{align}
d^2_{n,m}
		&=\frac{1}{4|\beta|^2}\Bigg\{\sqrt{\frac{(m-1)!  \Gamma(\nu+n)}{(n-1)!\Gamma(\nu+m)}}\sum_{k=0}^{n-1}\frac{(1-n)_{k}}{\Gamma(\nu+k+1)k!}\int_{E}z^{(\nu+k+1)-1}\log^2(z)e^{- z}L_{m-1}^{\nu}(z)dz\\ &\qquad
-2\log(A)\sqrt{\frac{(m-1)!  \Gamma(\nu+n)}{(n-1)!\Gamma(\nu+m)}}\sum_{k=0}^{n-1}\frac{(1-n)_{k}}{\Gamma(\nu+k+1)k!}\int_{E}z^{(\nu+k+1)-1}\log(z)e^{- z}L_{m-1}^{\nu}(z)dz+\log^2(A)\Bigg\}.
\end{align}
Similarly, we obtain
\begin{align}
c_n=\int_E\psi_n(y){\mathfrak m}(y)dy=
\frac{A ^{\frac{1-2c }{4|\beta |}}(1/(2|\beta |))_{n-1}\,\Gamma(c /|\beta |+1) }{\sqrt{(n-1)! |\mu +b |\Gamma(\nu +n)}},
\label{mom.0.1}
\end{align}
where $c_n$ is found by setting $p=0$ in \eqref{mom.2}.  Also,
\begin{align}
\int_E \log(y)\psi_n(y){\mathfrak m}(y)dy
		&=\frac{A^{\frac{1-2c}{4|\beta|}}}{2|\beta|}\sqrt{\frac{(n-1)! }{ |\mu+b|\Gamma(\nu+n)}}\int_E z^{\frac{c+|\beta|}{|\beta|}-1}(\log(z)-\log(A))e^{- z} L_{n-1}^{\nu}(z)dz\\
		&=\frac{A^{\frac{1-2c}{4|\beta|}}}{2|\beta|}\sqrt{\frac{(n-1)! }{ |\mu+b|\Gamma(\nu+n)}}\int_E z^{\frac{c+|\beta|}{|\beta|}-1}\log(z)e^{- z} L_{n-1}^{\nu}(z)dz-\frac{\log(A)}{2|\beta|}c_n ,
\end{align}
and
\begin{align}
\int_E \log^2(y)\psi_n(y){\mathfrak m}(y)dy
		&=\frac{A^{\frac{1-2c}{4|\beta|}}}{4|\beta|^2}\sqrt{\frac{(n-1)! }{ |\mu+b|\Gamma(\nu+n)}}\int_E z^{\frac{c+|\beta|}{|\beta|}-1}(\log(z)-\log(A))^2e^{-z} L_{n-1}^{\nu}(z)dz\\
		&=\frac{1}{4|\beta|^2}\Bigg\{A^{\frac{1-2c}{4|\beta|}}\sqrt{\frac{(n-1)! }{ |\mu+b|\Gamma(\nu+n)}}\int_E z^{\frac{c+|\beta|}{|\beta|}-1}\log^2(z)e^{-z} L_{n-1}^{\nu}(z)dz\\ &\qquad
-2\log(A)A^{\frac{1-2c}{4|\beta|}}\sqrt{\frac{(n-1)! }{ |\mu+b|\Gamma(\nu+n)}}\int_E z^{\frac{c+|\beta|}{|\beta|}-1}\log(z)e^{-z} L_{n-1}^{\nu}(z)dz+\log^2(A)c_n\Bigg\} .
\end{align}
Substituting the above expressions into Eq. \eqref{longint.2} and using the relation \eqref{eq:Theta} for $\Theta^1_{n,m}$, and $\Theta^2_{n,m}$ as well as equations \eqref{prud.ln.1a} and \eqref{prud.ln.2a}, we arrive to the final expression \eqref{longint.3}.

%
%

\section{Spectral Theorem}
\label{sec:spectral}
In this Appendix we summarize the theory of self-adjoint operators acting on a Hilbert space.  A detailed exposition on this topic (including proofs) can be found in \cite{reed-simon-1} and \cite{rudin-3}.
\par
Let $\H$ be a Hilbert space with inner product $(\cdot,\cdot)$.  A \emph{linear operator}  is a pair $({\rm Dom}(\A),\A)$ where ${\rm Dom}(\A) $ is a linear subset of $\H$ and $\A$ is a linear map $\A:{\rm Dom}(\A) \to \H$.  
The \emph{adjoint} of an operator $\A$ is an operator $\A^{*}$ such that $(\A f,g)	=	(f, \A^{*} g), \forall \, f \in {\rm Dom}(\A), g \in {\rm Dom}(\A^{*})$, where
\begin{align}
{\rm Dom}(\A^{*}):=\{ g \in \H : \exists \, h \in \H \text{ such that } (\A f, g) = (f,h) \,\, \forall \, f \in {\rm Dom}(\A) \} .
\end{align}
An operator $({\rm Dom}(\A),\A)$ is said to be \emph{self-adjoint} in $\H$ if
\begin{align}
{\rm Dom}{(\A)}	&=	{\rm Dom}(\A^*) , &
(\A f,g)					&=	(f,\A g) & \forall \, f,g \in {\rm Dom}(\A).
\end{align}
Throughout this Appendix, for any self-adjoint operator $\A$, we will assume that ${\rm Dom}(\A)$ is a dense subset of $\H$.
A densely defined self-adjoint operator is closed (see \cite{rudin-3}, Theorem 13.9).
\par
Given a linear operator $\A$, the \emph{resolvent set} $\rho(\A)$ is defined as the set of $\lam \in \mathbb{C}$ such that the mapping $(\A - \text{Id} \, \lam)$ is one-to-one and $R_\lam:=(\A - \text{Id} \, \lam)^{-1}$ is continuous with ${\rm Dom}( R_\lam ) = \H$.  The operator $R_\lam:\H \to \H$ is called the \emph{resolvent}.  The \emph{spectrum} $\sig(\A)$ of an operator $\A$ is defined as $\sig(\A):= \mathbb{C}\setminus\rho(\A)$. We say that $\lam \in \sig(\A)$ is an \emph{eigenvalue} of $\A$ if there exists $\psi \in {\rm Dom}(\A)$ such that the \emph{eigenvalue equation} is satisfied
\begin{align}
\A \, \psi &= \lam \, \psi . \label{eq:EigenvalueEquation}
\end{align}
A function $\psi$ that solves \eqref{eq:EigenvalueEquation} is called an \emph{eigenfunction} of $\A$ corresponding to $\lam$.  The \emph{multiplicity} of an eigenvalue $\lam$ is the number of linearly independent eigenfunctions for which equation \eqref{eq:EigenvalueEquation} is satisfied.  The spectrum of an operator $\A$ can be decomposed into two disjoint sets called the \emph{discrete} and \emph{essential}
\footnote{  
The essential spectrum may be further decomposed into the \emph{continuous} spectrum and the \emph{residual} spectrum.  It can be shown that the residual spectrum of an ordinary differential operator is empty (see \cite{roach-1}, page 184).}
spectra:  $\sig(\A)=\sig_d(\A) \cup \sig_e(\A)$.  For a normal operator $\A$, a number $\lam \in \R$ belongs to $\sig_d(\A)$ if and only if $\lam$ is an isolated point of $\sig(\A)$ and $\lam$ is an eigenvalue of finite multiplicity (see \cite{rudin-3}, Theorem 12.29).
\par
A \emph{projection-valued measure} on the measure space $(\R,\B(\R))$ is a family of bounded linear operators $\{ E(B),B \in \B(\R) \}$ in $\H$ that satisfies:
\begin{enumerate}
\item $E(\emptyset)=0$ and $E(\R)=\text{Id}$.
\item $E(B)$ is an orthogonal projection.  That is, $E^2(B)=E(B)$ and $E(B)$ is self-adjoint: $E^*(B)=E(B)$.
\item $E(A \cap B) =  E(A)E(B)$.
\item If $B=\bigcup_{i=1}^\infty B_i$ and $B_i \cap B_j = \emptyset$ for $i \neq j$ then $E(B) = \lim_{n\to\infty}\sum_{i=1}^n E(B_j)$, where the limit is in the strong operator topology.
\item For every in $f,g \in \H$ the set function $\mu_{f,g}(B):=(f,E(B)g)$ is a complex measure on $\B(\R)$.
\end{enumerate}
\begin{theorem}[Spectral Representation Theorem]
\label{thm:spectral}
There is a one-to-one correspondence between self-adjoint operators $\A$ and projection-valued measures $E$ on $\H$, the correspondence being given by
\begin{align}
\A
		&=		\int_{\sig(\A)} \lam \, E( d\lam ) .
\end{align}
If $g(\cdot)$ is a Borel function on $\R$ then
\begin{align}
g(\A)
		&=		\int_{\sig(\A)} g(\lam) E( d\lam ) , &
{{\rm Dom}}(g(\A))
		&=		\{ f \in \H : \int_{\sig(\A)} |g(\lam)|^2 \mu_{f,f}(d\lam ) < \infty \}. \label{eq:g.A}
\end{align} 
\end{theorem}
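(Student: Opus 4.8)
The plan is to establish the correspondence in two stages — first for bounded self-adjoint operators, where a projection-valued measure can be manufactured concretely from a Borel functional calculus, and then for an arbitrary densely defined self-adjoint $\A$ by reduction to a unitary operator via the Cayley transform.

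\textbf{Bounded case.} Assume first that $\A$ is bounded and self-adjoint, so $\sig(\A)\subset[-\|\A\|,\|\A\|]\subset\R$. First I would build the \emph{continuous functional calculus}: for a polynomial $p$ define $p(\A)$ in the evident way and prove $\|p(\A)\|=\sup_{\lam\in\sig(\A)}|p(\lam)|$, which follows from the facts that $\|B\|=\sup\{|\lam|:\lam\in\sig(B)\}$ for self-adjoint $B$, that $\|B^2\|=\|B\|^2$, and from the polynomial spectral mapping theorem. Stone--Weierstrass then extends $p\mapsto p(\A)$ to an isometric $*$-homomorphism $C(\sig(\A))\to B(\H)$, $g\mapsto g(\A)$. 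Next, for each $f\in\H$ the functional $g\mapsto(f,g(\A)f)$ is positive and linear on $C(\sig(\A))$, so by the Riesz--Markov theorem it equals $\int g\,d\mu_{f,f}$ for a unique finite positive Borel measure $\mu_{f,f}$; polarization gives complex measures $\mu_{f,h}$ with $(f,g(\A)h)=\int g\,d\mu_{f,h}$. I would then extend the calculus to bounded Borel functions $g$ by noting that $(f,h)\mapsto\int g\,d\mu_{f,h}$ is a bounded sesquilinear form, hence defines $g(\A)\in B(\H)$, and checking via a monotone-class / dominated-convergence argument that $g\mapsto g(\A)$ remains multiplicative and conjugation-preserving. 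Finally put $E(B):=\chi_B(\A)$ for $B\in\B(\R)$; since $\chi_B^2=\chi_B=\overline{\chi_B}$ each $E(B)$ is an orthogonal projection, axioms (1)--(5) follow from the corresponding properties of the scalar measures $\mu_{f,h}(B)=(f,E(B)h)$ (strong countable additivity from additivity of $\mu_{f,f}$ plus the polarization identity), and by construction $\A=\int\lam\,E(d\lam)$ and $g(\A)=\int g\,dE$.

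\textbf{Unbounded case.} For general self-adjoint $\A$ with dense domain, the first step is the basic self-adjointness criterion: $\A\pm i\,\text{Id}$ map ${\rm Dom}(\A)$ bijectively onto $\H$ with bounded inverses, injectivity from $\|(\A\pm i\,\text{Id})f\|^2=\|\A f\|^2+\|f\|^2$ and surjectivity because the ranges are closed and, using ${\rm Dom}(\A)={\rm Dom}(\A^*)$, have zero orthogonal complement. Hence the Cayley transform $U:=(\A-i\,\text{Id})(\A+i\,\text{Id})^{-1}$ is unitary with $1$ not an eigenvalue. Applying the bounded spectral theorem to the commuting real and imaginary parts of $U$ (equivalently, the spectral theorem for unitaries) yields a projection-valued measure $F$ on the unit circle with $U=\int z\,F(dz)$ and $F(\{1\})=0$; pushing $F$ forward under the homeomorphism $z\mapsto i(1+z)/(1-z)$ of the punctured circle onto $\R$ produces a projection-valued measure $E$ on $\R$. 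Using $\A=i(\text{Id}+U)(\text{Id}-U)^{-1}$ on ${\rm Dom}(\A)={\rm Ran}(\text{Id}-U)$, one verifies $\A=\int\lam\,E(d\lam)$ with ${\rm Dom}(\A)=\{f:\int\lam^2\,\mu_{f,f}(d\lam)<\infty\}$, and then defines the Borel calculus $g(\A):=\int g\,dE$ with the stated domain.

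\textbf{Uniqueness and the main obstacle.} Uniqueness of $E$ I would get from the resolvent: any projection-valued measure representing $\A$ satisfies $(\A-z\,\text{Id})^{-1}=\int(\lam-z)^{-1}E(d\lam)$ for $z\notin\R$, so $(f,(\A-z\,\text{Id})^{-1}f)$ is the Cauchy--Stieltjes transform of $\mu_{f,f}$; Stieltjes inversion then recovers $\mu_{f,f}$, hence every $\mu_{f,h}$ by polarization, hence $E$. The step I expect to be the main obstacle is the domain bookkeeping in the unbounded case: verifying ${\rm Ran}(\A\pm i\,\text{Id})=\H$, and checking that the Cayley correspondence carries ${\rm Dom}(\A)$ exactly onto $\{f:\int\lam^2\,\mu_{f,f}(d\lam)<\infty\}$ while intertwining $\A$ with multiplication by $\lam$. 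The extension of the functional calculus from $C(\sig(\A))$ to bounded Borel functions, although standard, likewise needs a careful limiting argument.
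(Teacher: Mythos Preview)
Your sketch is a correct outline of the standard proof, but note that the paper does not actually prove this theorem: its entire proof reads ``See \cite{rudin-3} Theorems 12.21 and 13.33.'' So there is nothing to compare against beyond the cited reference. Your approach --- continuous functional calculus via Stone--Weierstrass, extension to Borel functions by Riesz--Markov and a monotone-class argument, then the Cayley transform to reduce the unbounded case to the unitary case --- is essentially the route Rudin takes in the cited theorems, so in that sense your proposal and the paper's (outsourced) proof coincide.
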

\begin{proof}
See \cite{rudin-3} Theorems 12.21 and 13.33.
\end{proof}
\noindent
As a practical matter, if $\A$ is a differential operator acting on a Hilbert space $L^2(I,\m(x)dx)$, where $I$ is an interval with endpoints $l<r$, then the operators defined by \eqref{eq:g.A} can be constructed by solving the \emph{proper} and \emph{improper}
\footnote{The term ``improper'' is used because the improper eigenvalues $\lam \notin \sig_d(\A)$ and the improper eigenfunctions $\psi_\lam \notin \H$ since $\(\psi_\om,\psi_\om\)=\infty$.}
eigenvalue problems
\begin{align}
\text{proper:}&&
\A \, \psi_n
		&= 		\lam_n \, \psi_n , &
\lam_n 
		&\in 	\sig_d(\A) , &
\psi_n
		&\in	\H , \label{eq:proper} \\
\text{improper:}&&
\A \, \psi_\om
		&= 		\lam_\om \, \psi_\om , &
\lam_\om 
		&\in 	\sig_e(\A) , &
\psi_\om
		&\notin	\H . \label{eq:improper}
\end{align}
For the improper eigenvalue problem one extends the domain of $\A$ to include functions all functions $f$ for which the following boundedness conditions are satisfied
\begin{align} 
\lim_{x \searrow l} |f(x)|^2 \m(x)
		&< \infty , &
\lim_{x \nearrow r} |f(x)|^2 \m(x)
		&< \infty , \label{eq:bound}
\end{align}
We will use Latin subscripts (e.g., $l,m,n$) to denote proper eigenfunctions/values and Greek subscripts (e.g., $\om,\nu,\mu$) to denote improper eigenfunctions/values.  When we do not wish to distinguish between proper and improper eigenfunctions/values we will write $\psi_\lam$ and $\lam$ with no subscript.  We refer to $\psi_\lam$ and $\lam$ as \emph{generalized} eigenfunctions/values.
\par
After normalizing, the proper and improper eigenfunctions $\A$ satisfy the following orthogonality relations
\begin{align}
\( \psi_n, \psi_m \)
		&=	\del_{n,m} , & 
\( \psi_\om, \psi_{\mu} \)
		&=	\del(\om - \mu) , &
\( \psi_n, \psi_\om \) 
		&=	0 .
\end{align}
The operator $g(\A)$ in \eqref{eq:g.A} is constructed as follows (see \cite{hanson-yakovlev-1}, Section 5.3.2)
\begin{align}
g(\A) f
		&=	\sum_\lam g(\lam) (\psi_\lam,f) \psi_\lam \\
		&:=	\sum_{\lam_n \in \sig_d(\A)}	g(\lam_n) \( \psi_n, f \) \psi_{n} +
				\int_{\lam_\om \in \sig_e(\A)} g(\lam_\om) \( \psi_\om, f \) \psi_\om d\om . \label{eq:operator}
\end{align}
It is not always easy to evaluate divergent integrals of the form $\( \psi_\lam, \psi_{\lam'} \)$ and verify that they are in fact delta functions $\del(\lam - \lam')$.  A method for directly obtaining properly normalised improper eigenfunctions can be found on page 238 of \cite{friedman-1}.

\clearpage

\begin{figure}
\centering
\begin{tabular}{|c|c|}
\hline
\includegraphics[width=.475\textwidth]{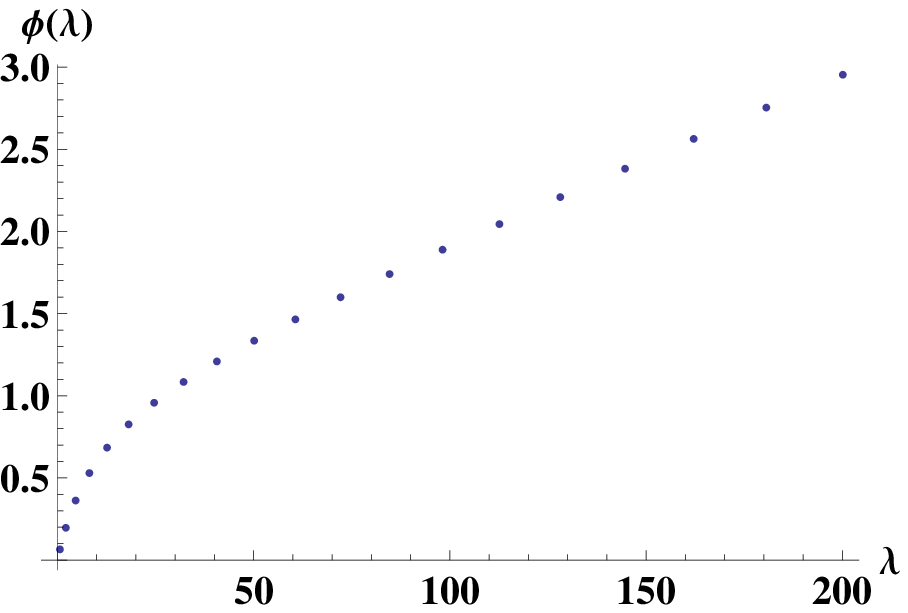} &
\includegraphics[width=.475\textwidth]{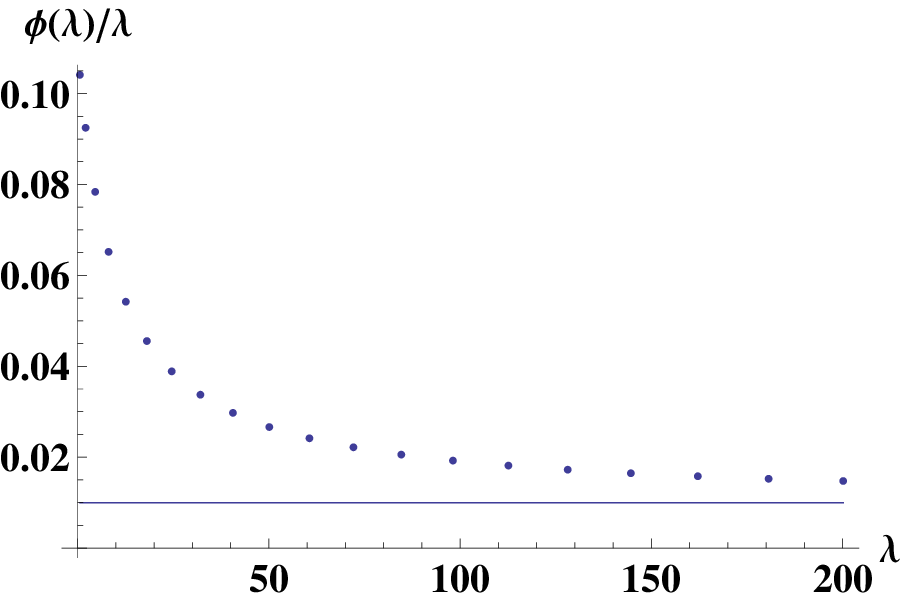} \\ \hline
\includegraphics[width=.475\textwidth]{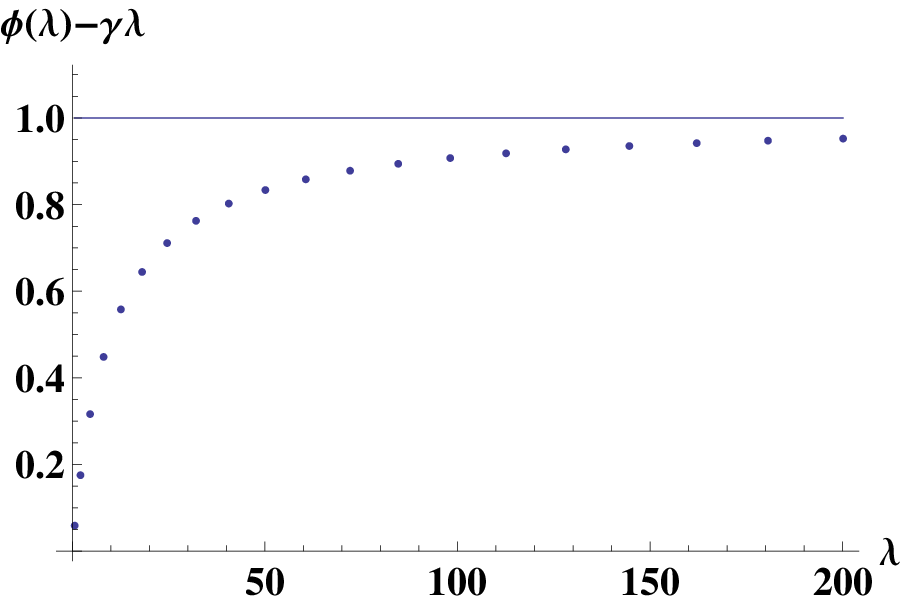} &
\includegraphics[width=.475\textwidth]{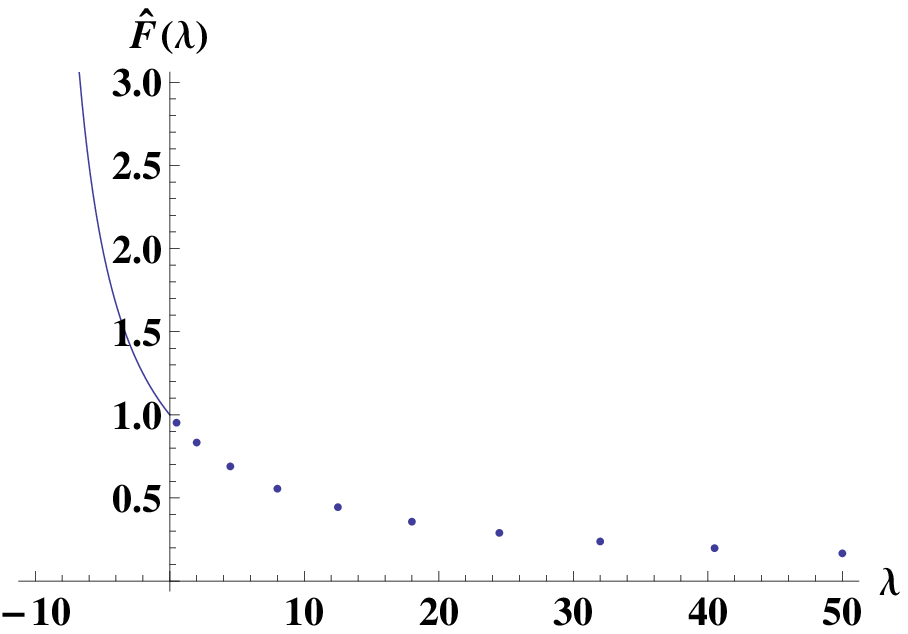} \\
\hline
\end{tabular}
\caption{A graphical illustration of how to obtain the drift $\gamma$, the jump intensity $\alpha$ and the Laplace transform $\Fh(\lam)$ of the jump distribution $F(ds)$ for a subordinator of the compound Poisson type from a discrete set of values of the subordinator's Laplace exponent $\phi(\lam)$.  Top left: a plot of $\phi(\lam)$ for a L\'{e}vy subordinator with exponentially distributed jumps $\nu(ds)=\alpha \eta e^{-\eta s} ds$.  Top right: a plot of $\phi(\lam)/\lam$ for the same subordinator.  The level of the solid line corresponds to drift of the subordinator $\gamma$.  Bottom Left: a plot of $\phi(\lam)-\gamma\lam$ for the same subordinator.  The level of the solid line corresponds to the net jump intensity $\alpha$.  Bottom right: a plot of $\Fh(\lam)$ and its analytic continuation for values of $\lam<0$.  In all four plots we use parameter values $\gamma=1/100$, $\alpha=1$ and $\eta=10$.}
\label{fig:phi}
\end{figure}

\begin{figure}
\centering
\begin{align}
\text{AAPL}
\end{align}
\begin{tabular}{ | c | c | }
\hline
$\phi(\lambda)$ & $\phi(\lambda)/\lambda$ \\
\includegraphics[width=0.37\textwidth,height=.185\textheight]{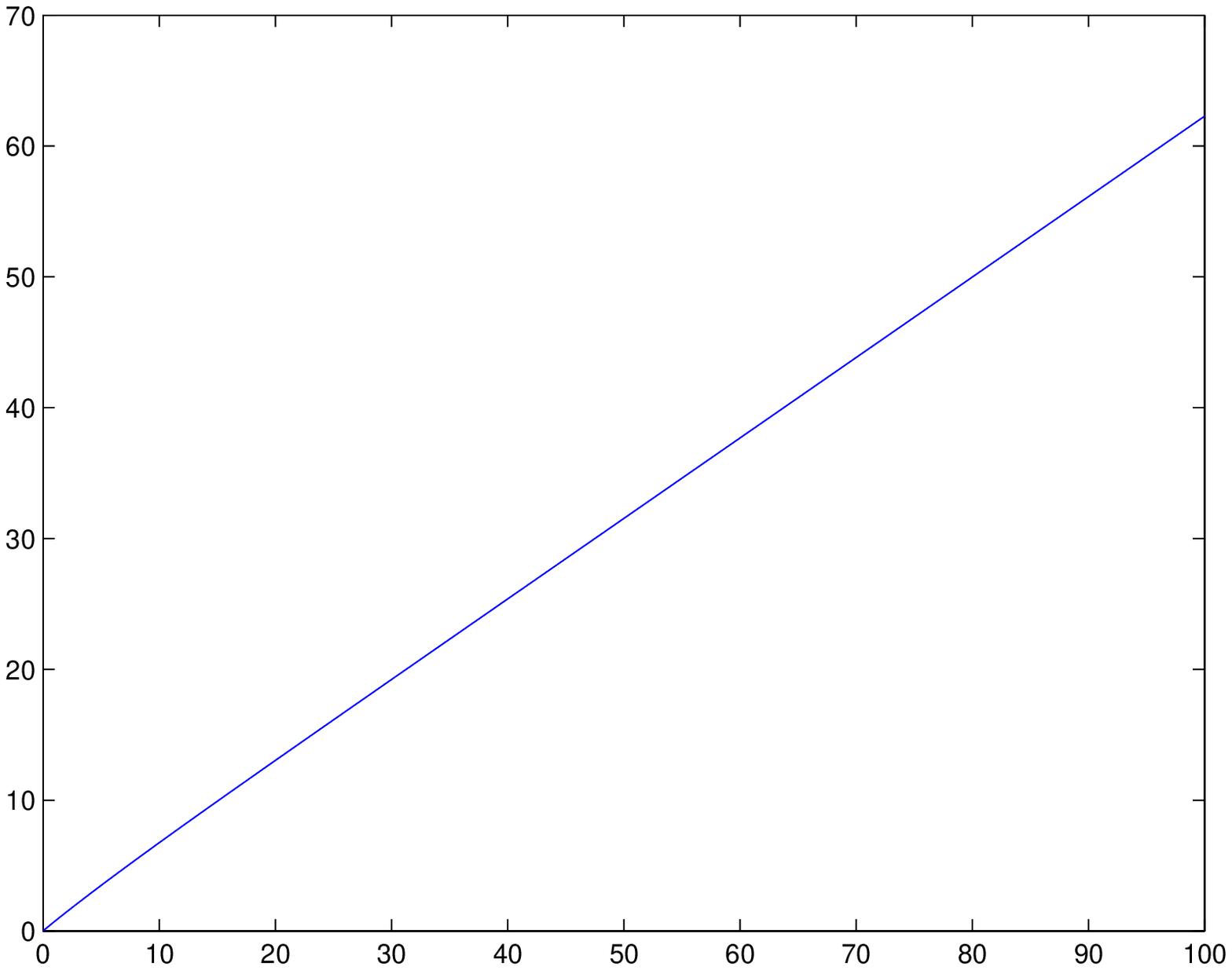} &
\includegraphics[width=0.37\textwidth,height=.185\textheight]{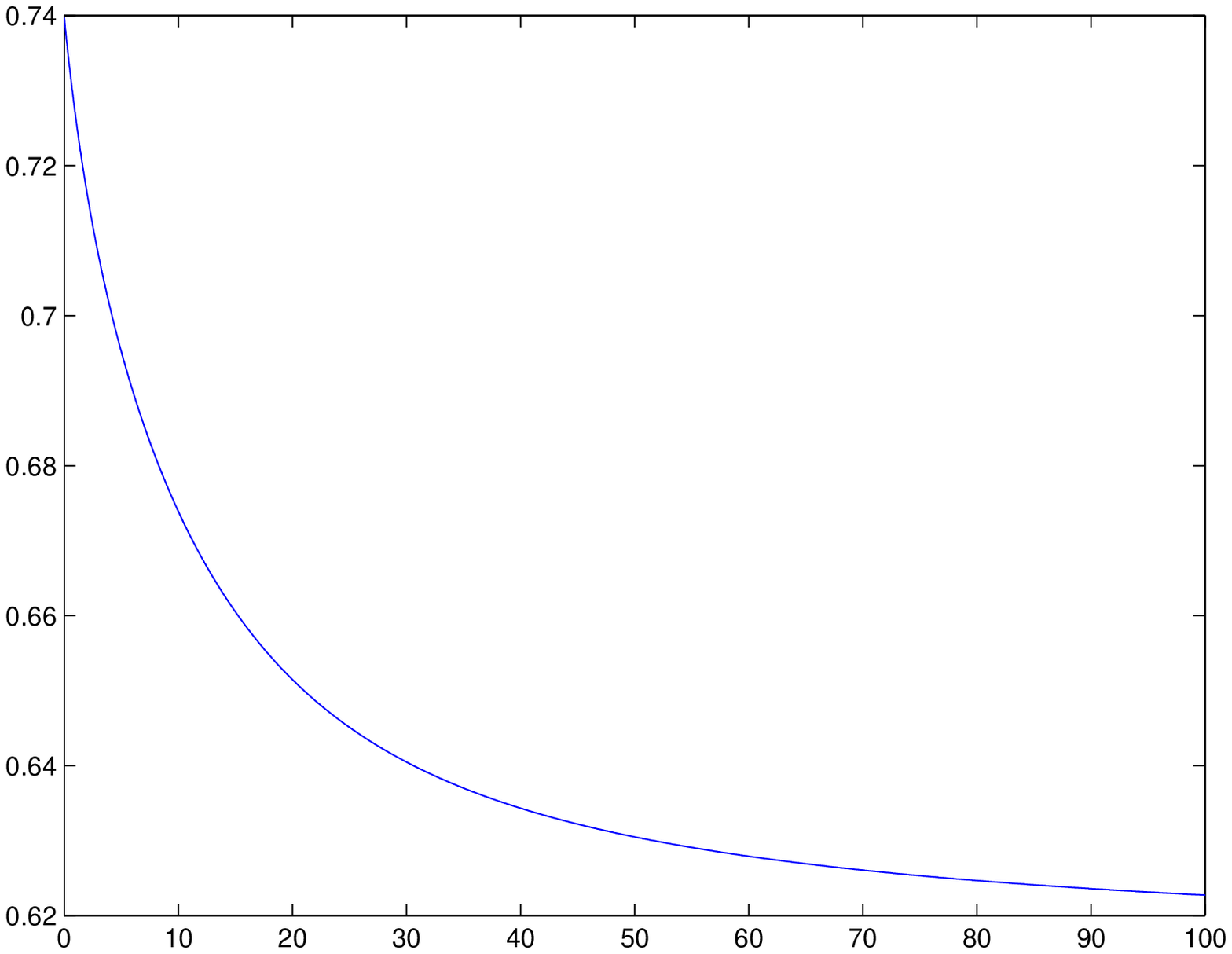} \\ \hline
$\phi(\lambda)-\gamma\lambda$ & $\widehat{F}(\lam)$ \\
\includegraphics[width=0.37\textwidth,height=.185\textheight]{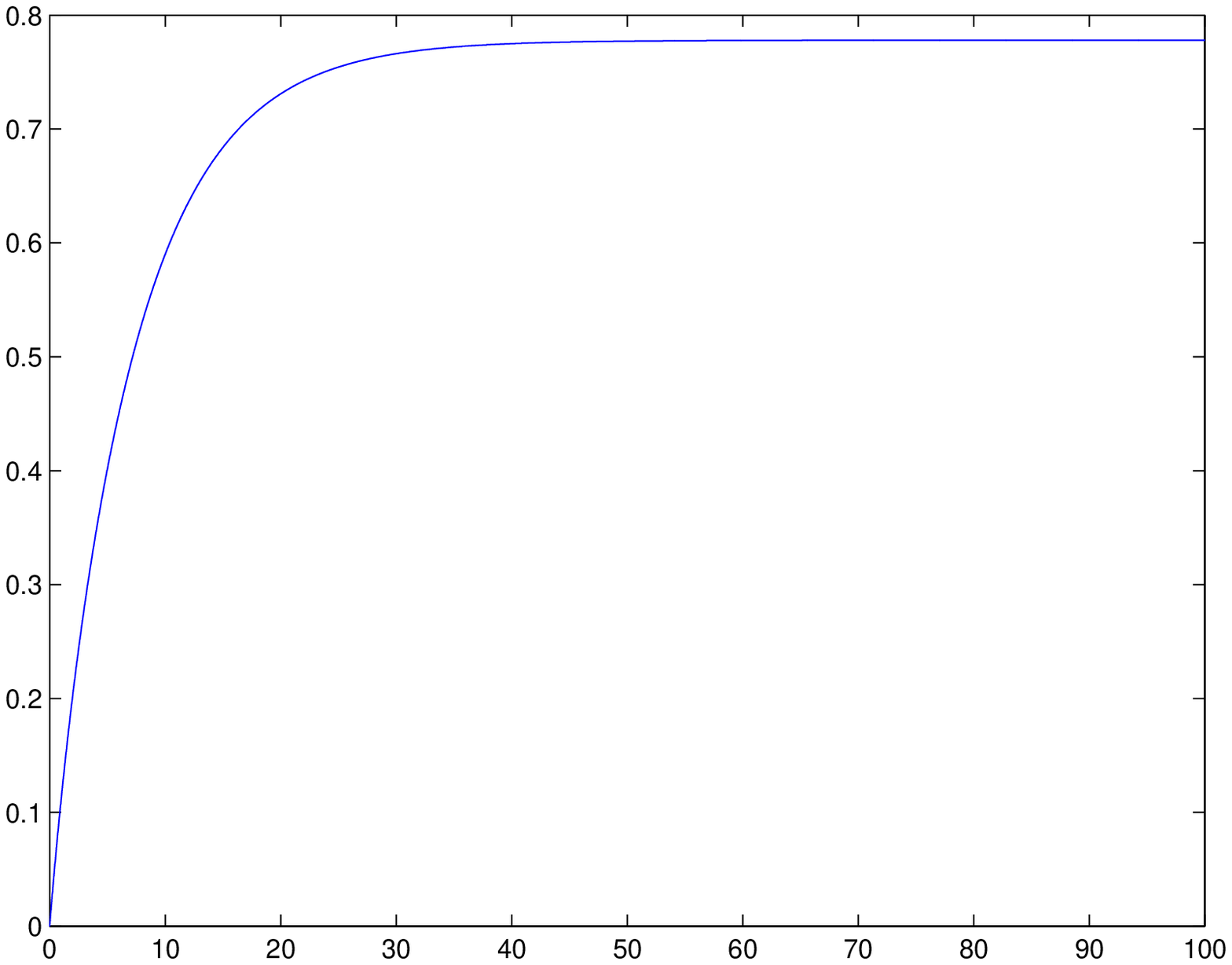} &
\includegraphics[width=0.37\textwidth,height=.185\textheight]{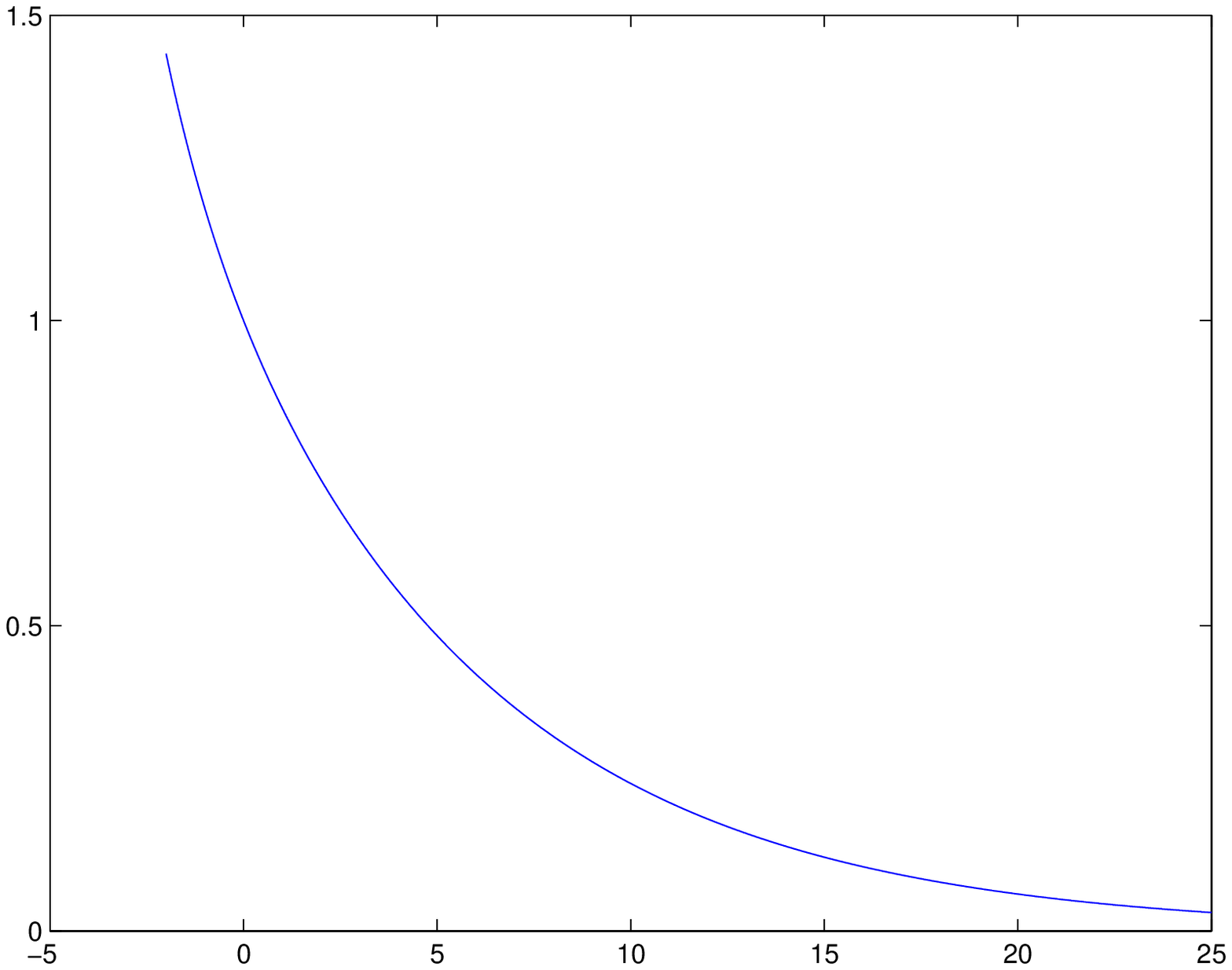} \\ \hline
\end{tabular}
\begin{align}
\text{FB}
\end{align}
\begin{tabular}{ | c | c | }
\hline
$\phi(\lambda)$ & $\phi(\lambda)/\lambda$ \\
\includegraphics[width=0.37\textwidth,height=.185\textheight]{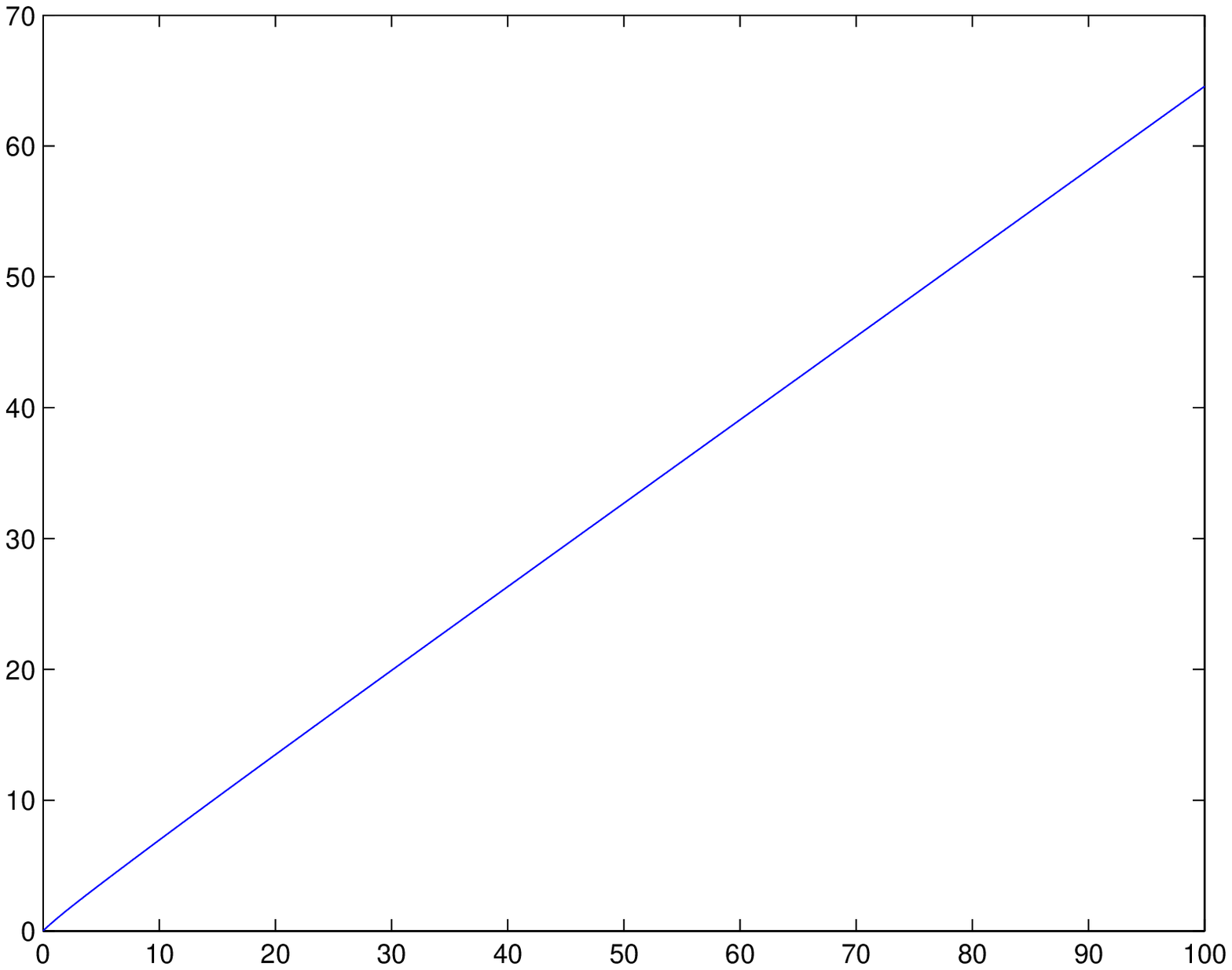} &
\includegraphics[width=0.37\textwidth,height=.185\textheight]{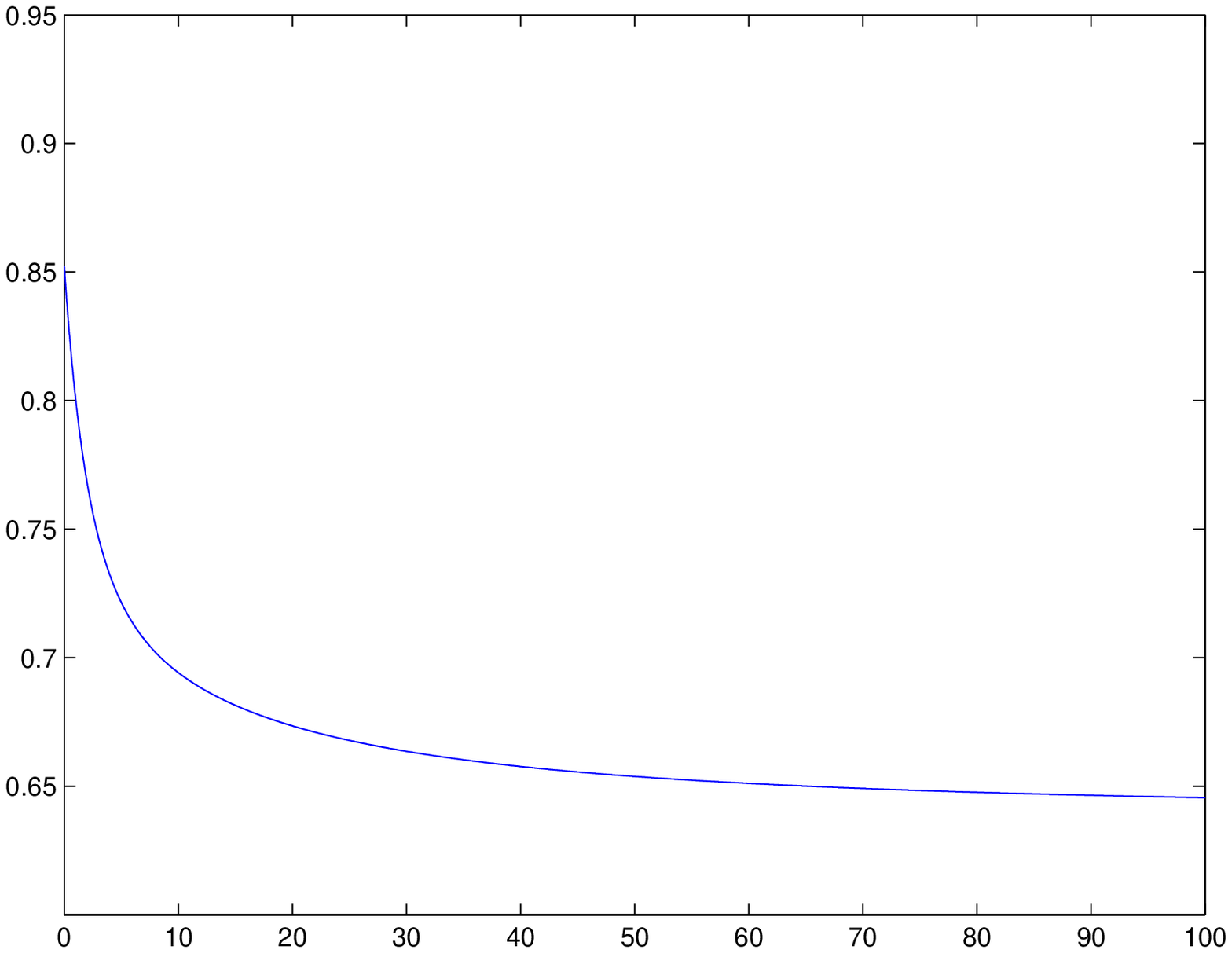} \\ \hline
$\phi(\lambda)-\gamma\lambda$ & $\widehat{F}(\lam)$ \\
\includegraphics[width=0.37\textwidth,height=.185\textheight]{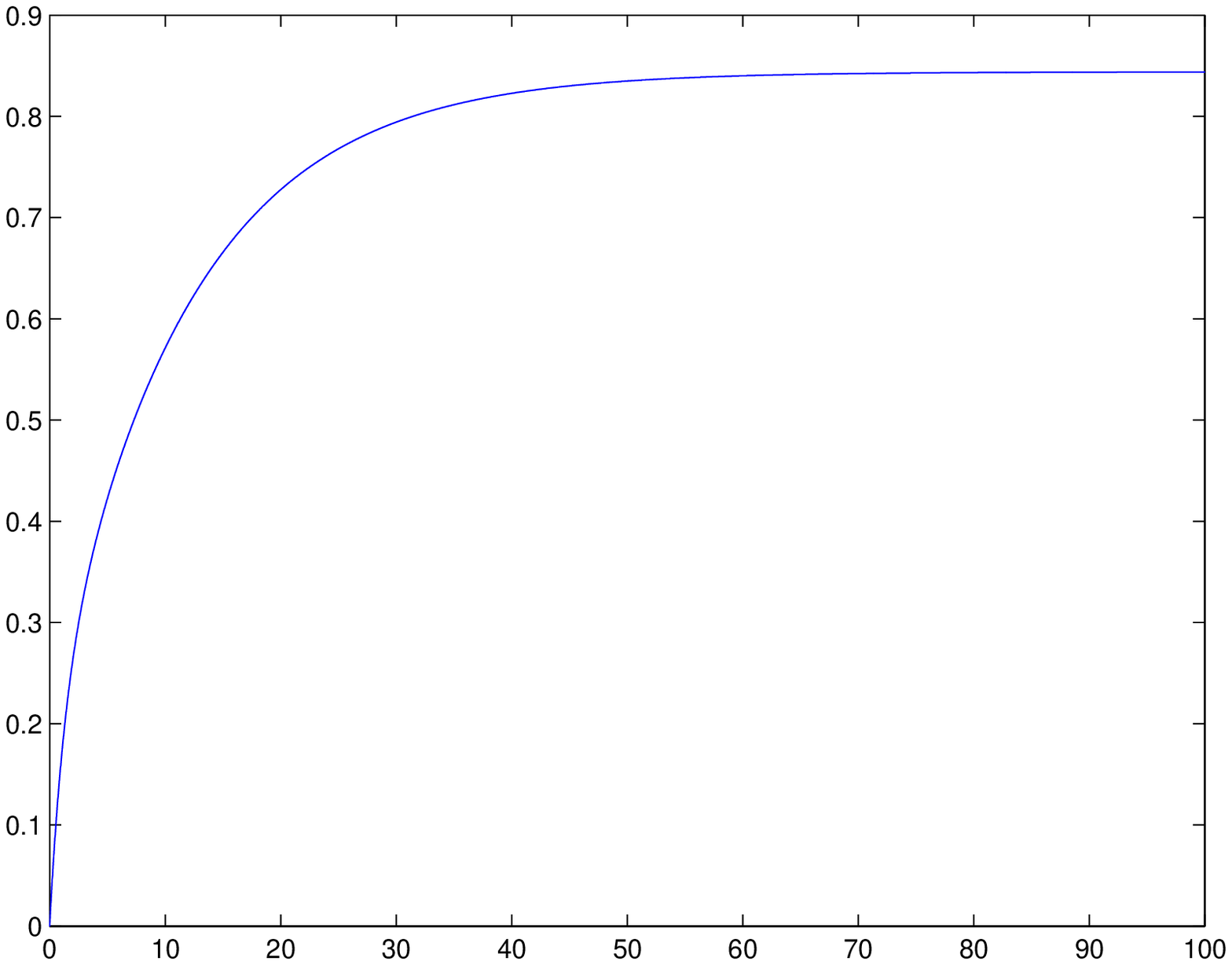} &
\includegraphics[width=0.37\textwidth,height=.185\textheight]{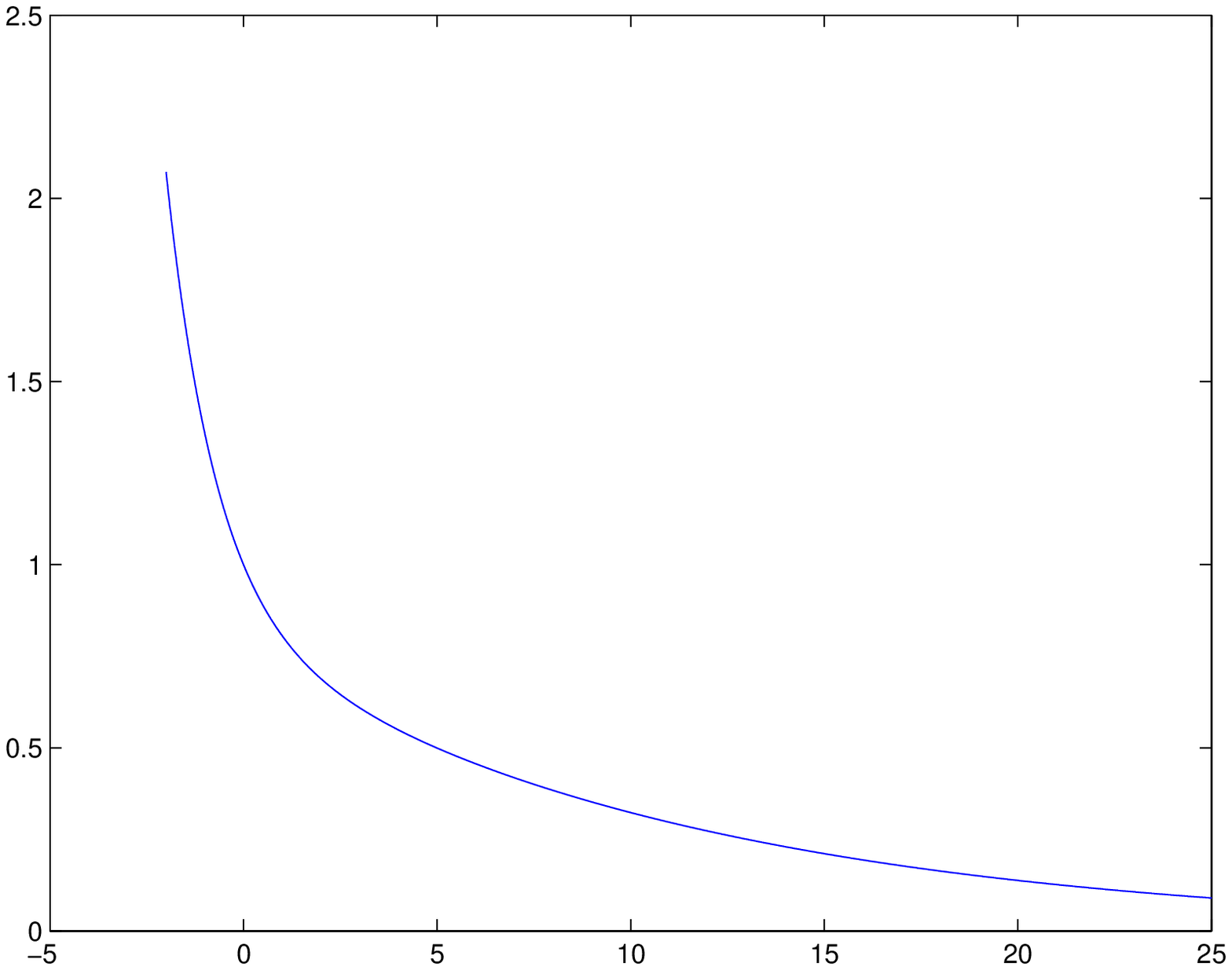} \\ \hline
\end{tabular}
\caption{Market implied $\phi(\lambda)$ from observed options in April 12, 2013 maturing in 19 days. Recall that (in the case of a finite activity subordinator) $\lim_{\lambda\rightarrow\infty}\phi(\lambda)/\lambda=\gamma$, $\lim_{\lambda\rightarrow\infty}\phi(\lambda)-\gamma\lambda=\nu([0,\infty])=\alpha$ and $\widehat{F}(\lam)=1+(\gamma \lambda - \phi(\lam))/\alpha$.}
\label{fig:MarketImpliedPhiLambda}
\end{figure}

\begin{figure}
\centering
\begin{align}
\text{GOOG}
\end{align}
\begin{tabular}{ | c | c | }
\hline
$\phi(\lambda)$ & $\phi(\lambda)/\lambda$ \\
\includegraphics[width=0.37\textwidth,height=.185\textheight]{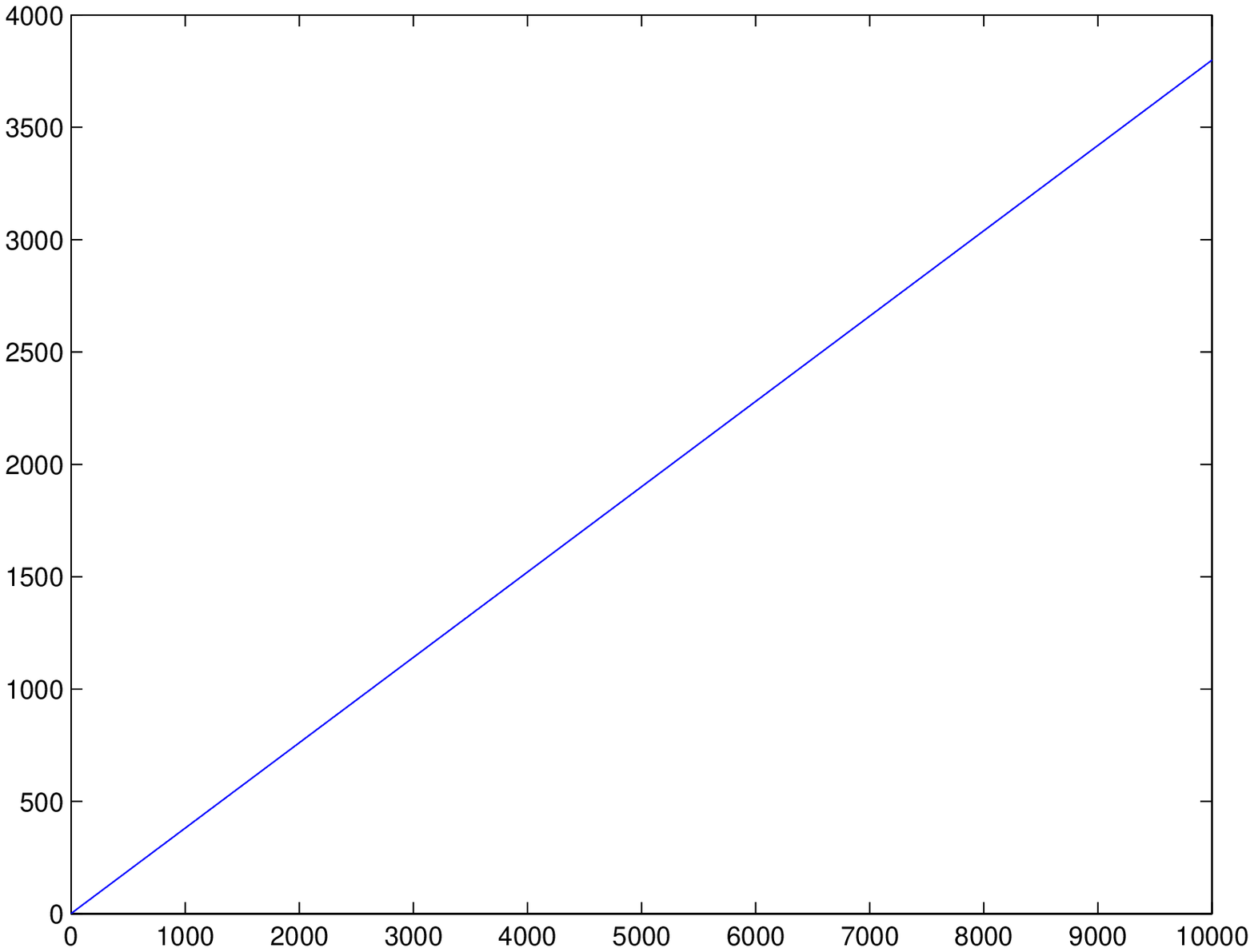} &
\includegraphics[width=0.37\textwidth,height=.185\textheight]{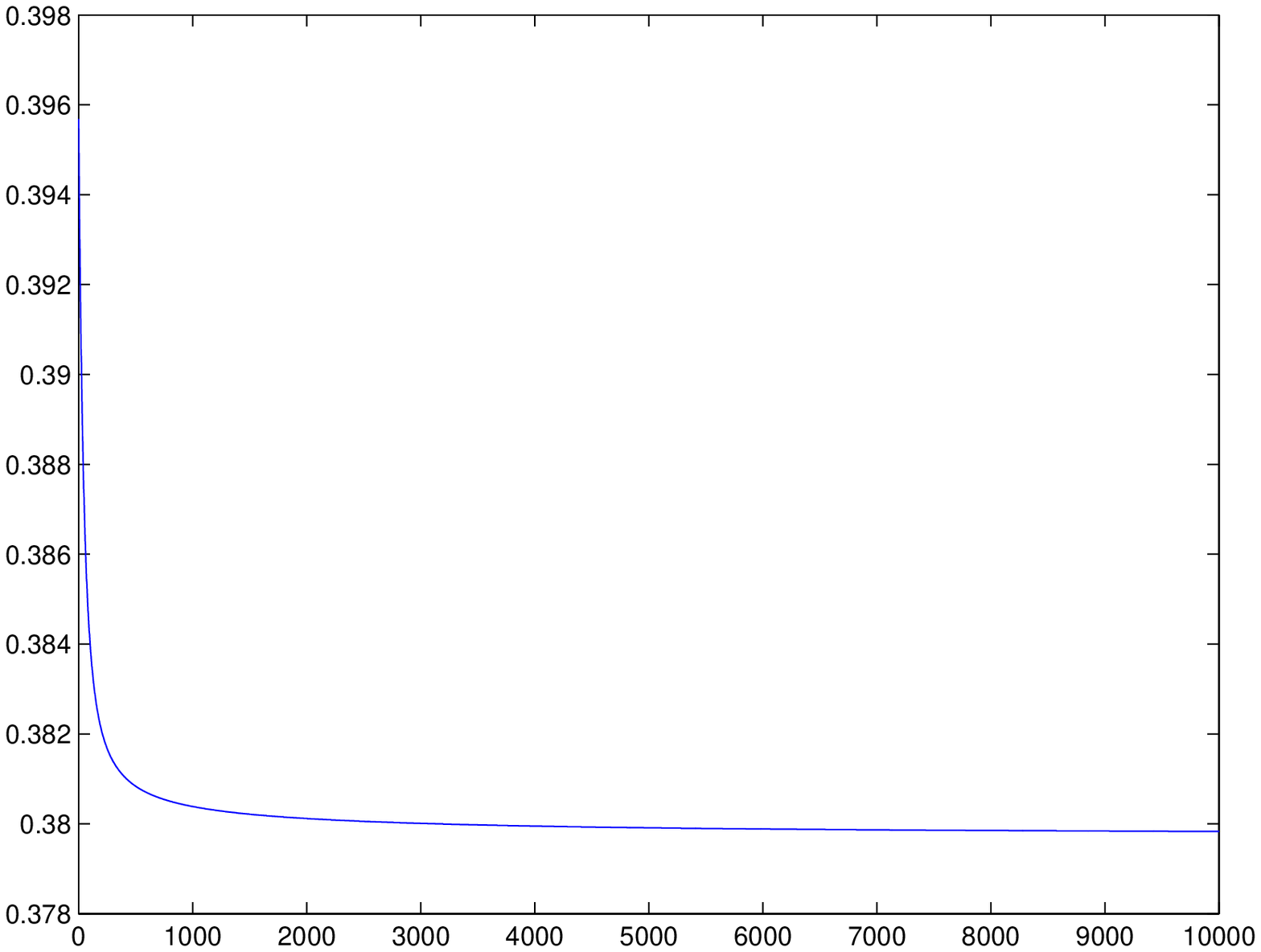} \\ \hline
$\phi(\lambda)-\gamma\lambda$ & $\widehat{F}(\lam)$ \\
\includegraphics[width=0.37\textwidth,height=.185\textheight]{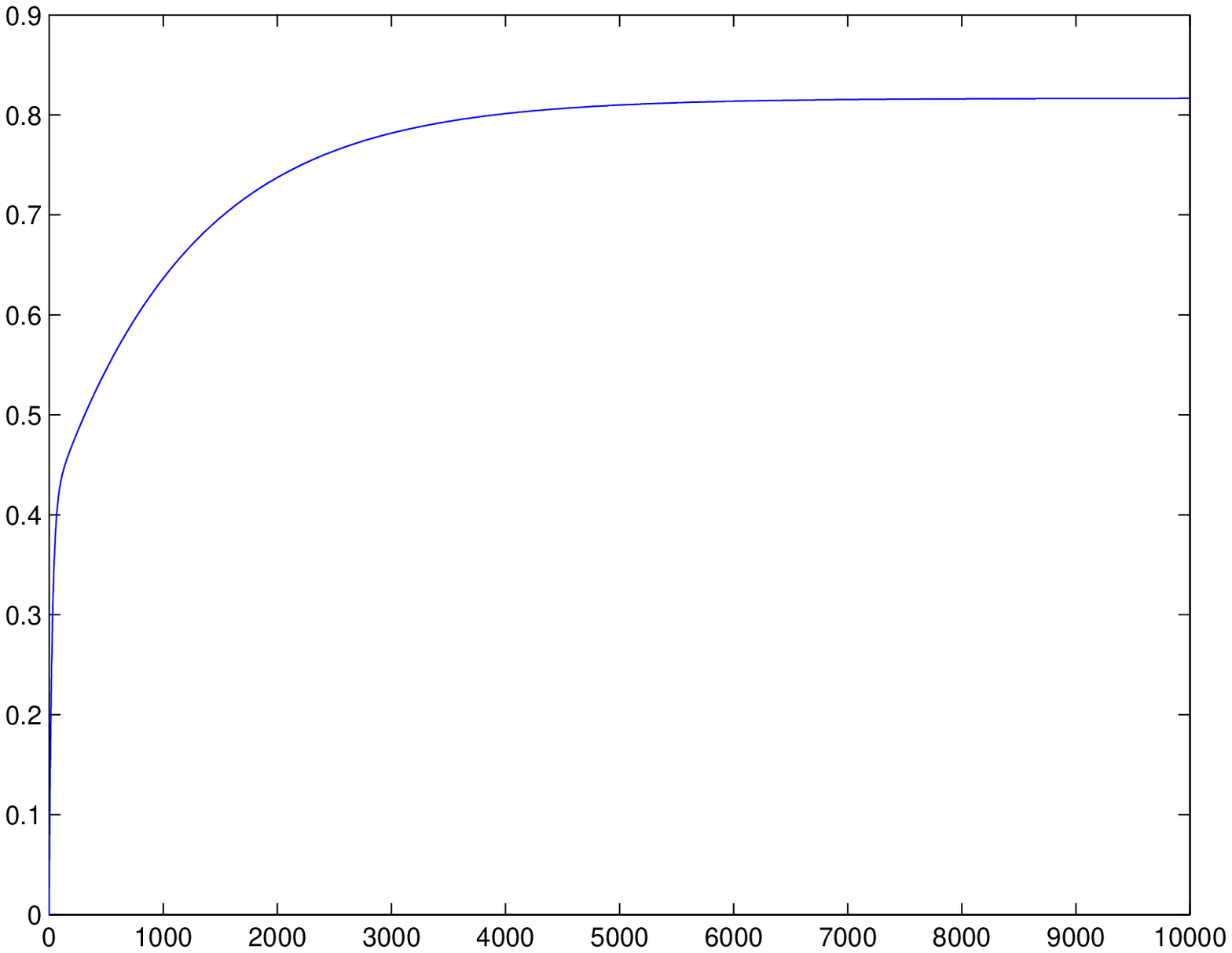} &
\includegraphics[width=0.37\textwidth,height=.185\textheight]{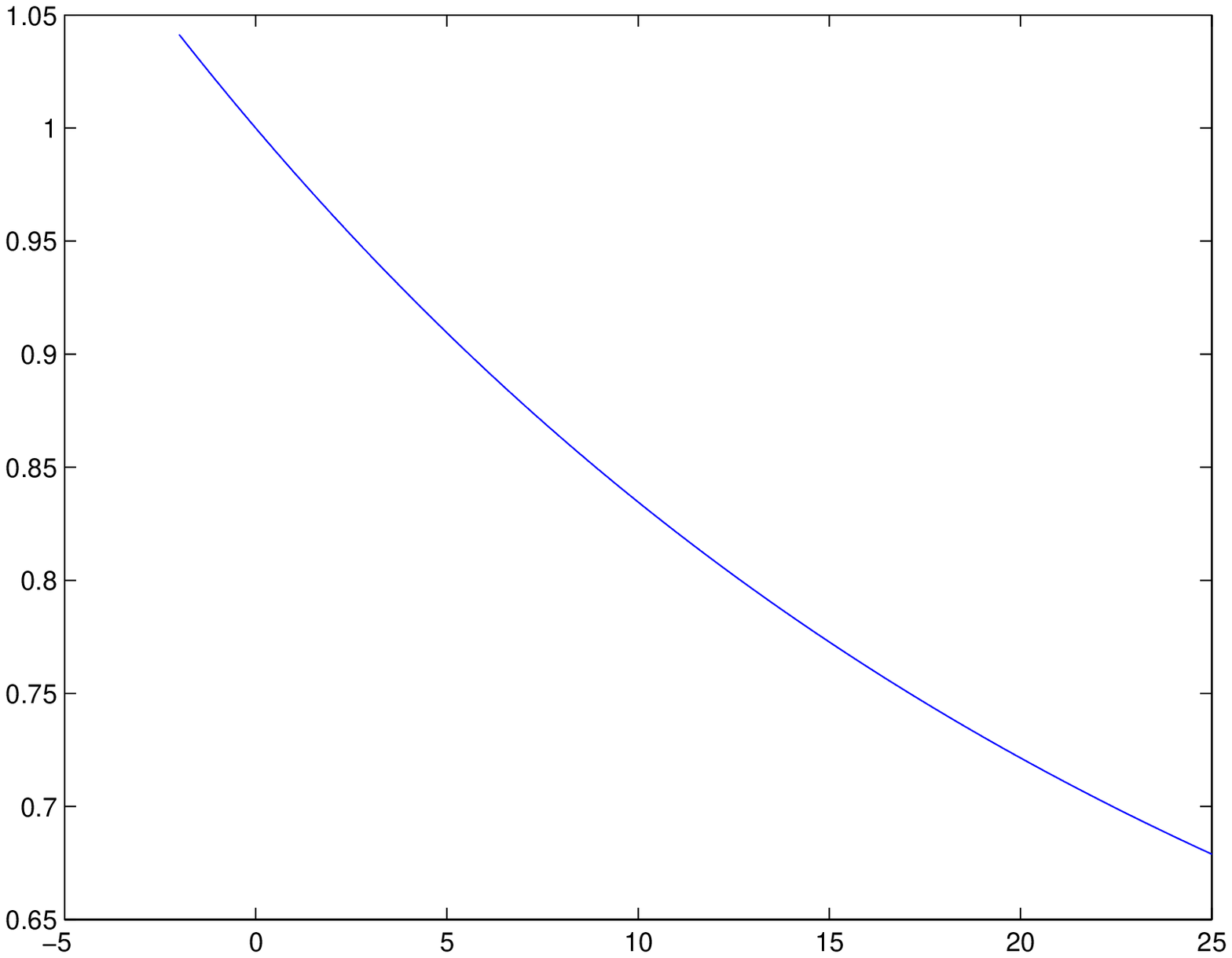} \\ \hline
\end{tabular}
\begin{align}
\text{MFST}
\end{align}
\begin{tabular}{ | c | c | }
\hline
$\phi(\lambda)$ & $\phi(\lambda)/\lambda$ \\
\includegraphics[width=0.37\textwidth,height=.185\textheight]{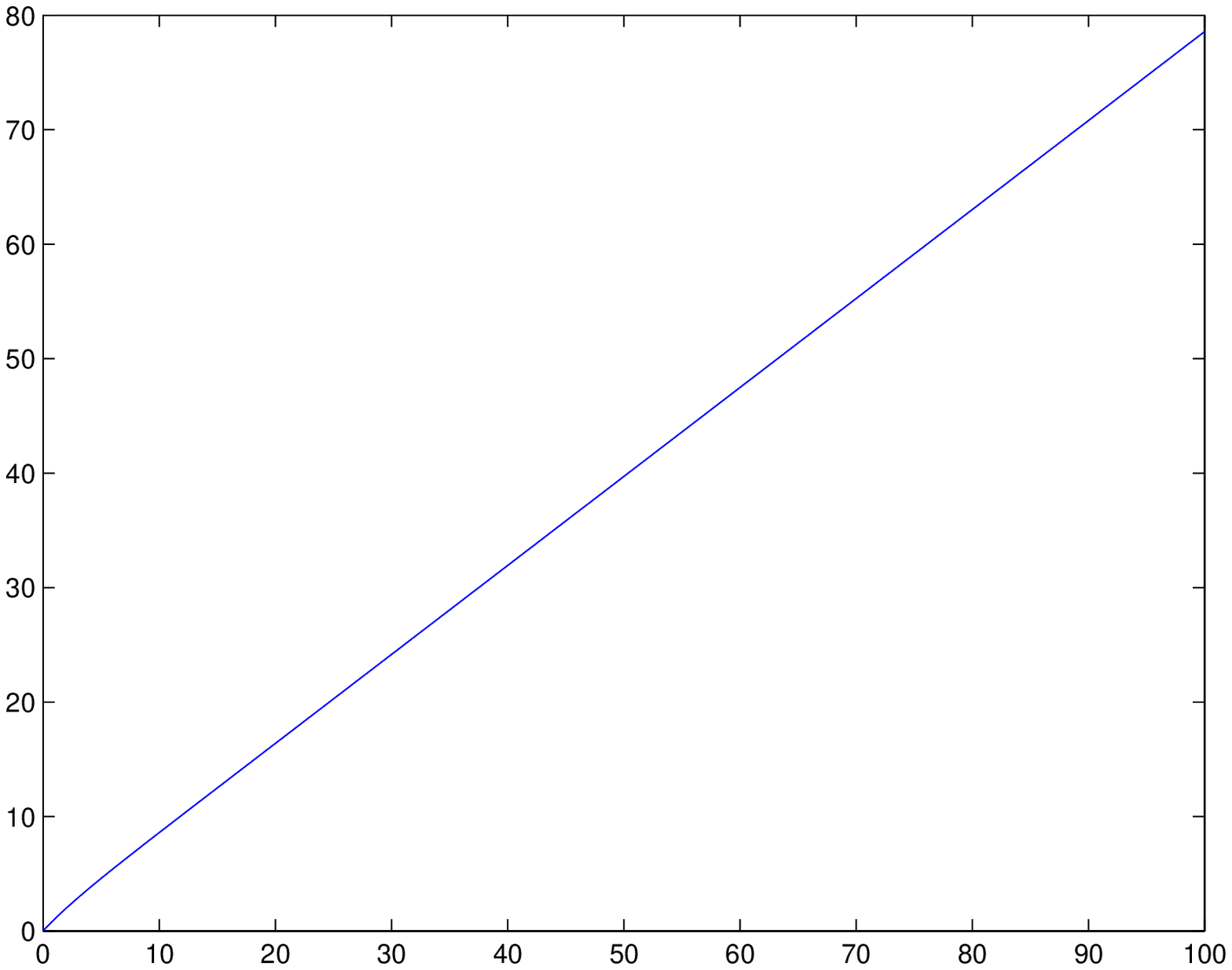} &
\includegraphics[width=0.37\textwidth,height=.185\textheight]{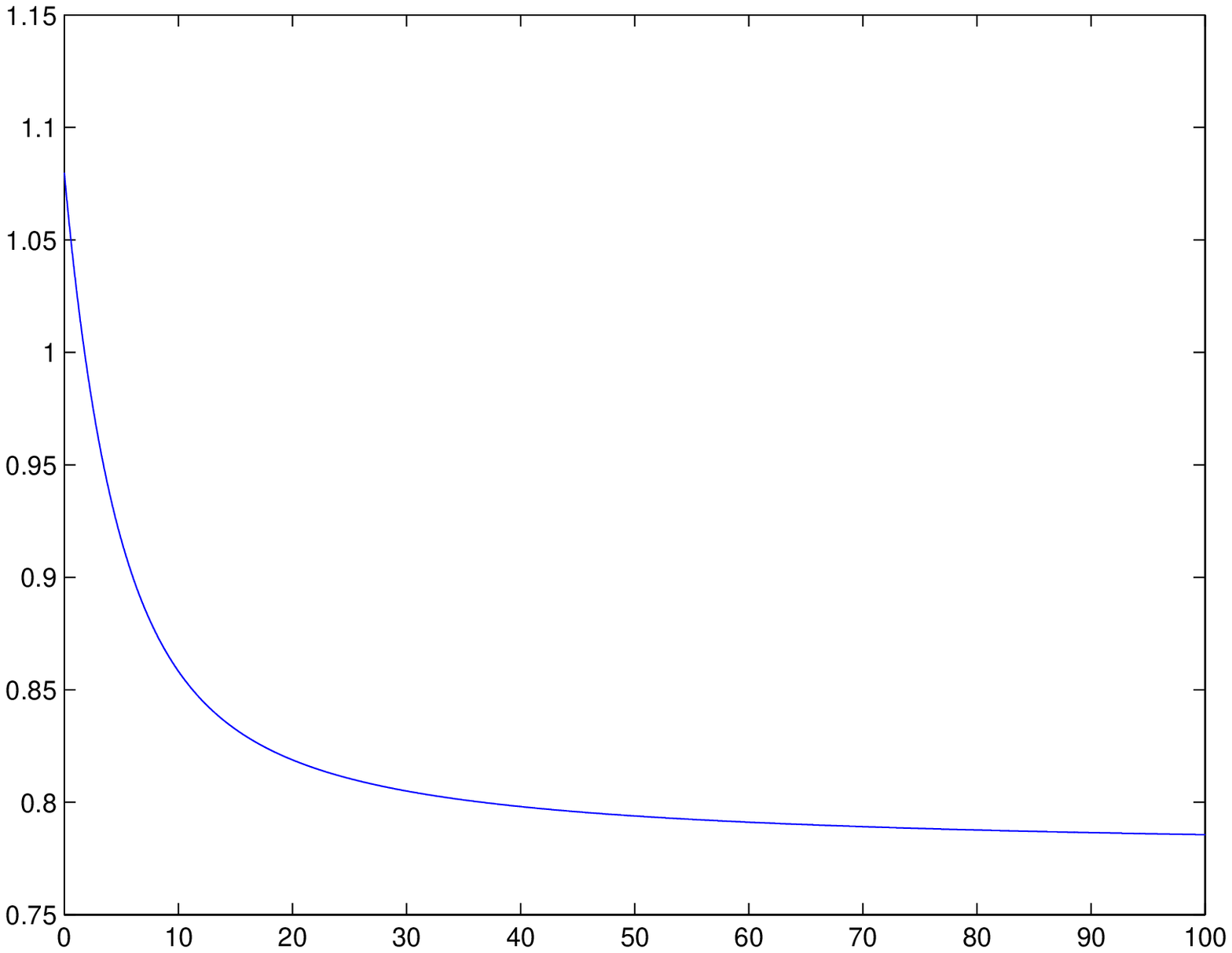} \\ \hline
$\phi(\lambda)-\gamma\lambda$ & $\widehat{F}(\lam)$ \\
\includegraphics[width=0.37\textwidth,height=.185\textheight]{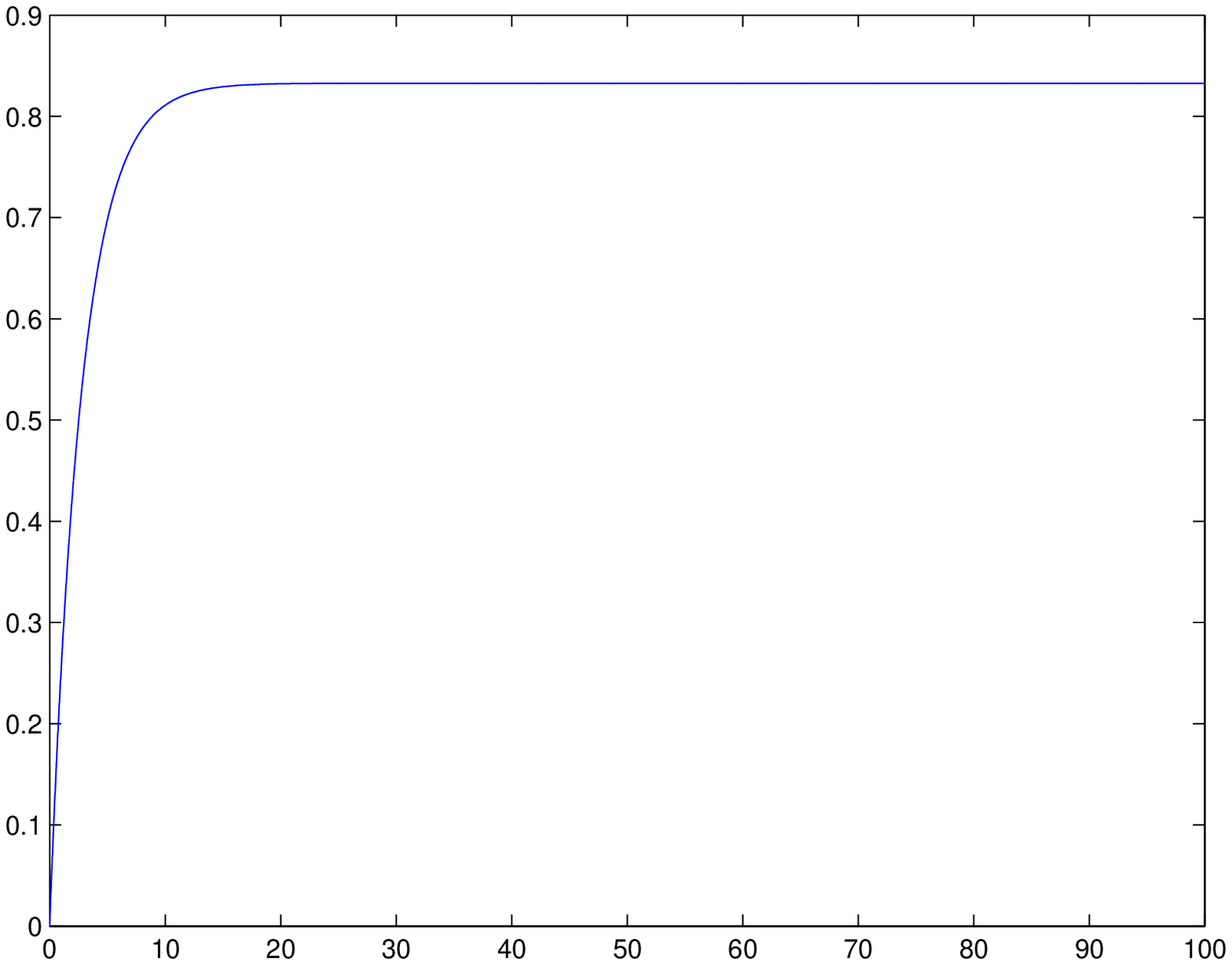} &
\includegraphics[width=0.37\textwidth,height=.185\textheight]{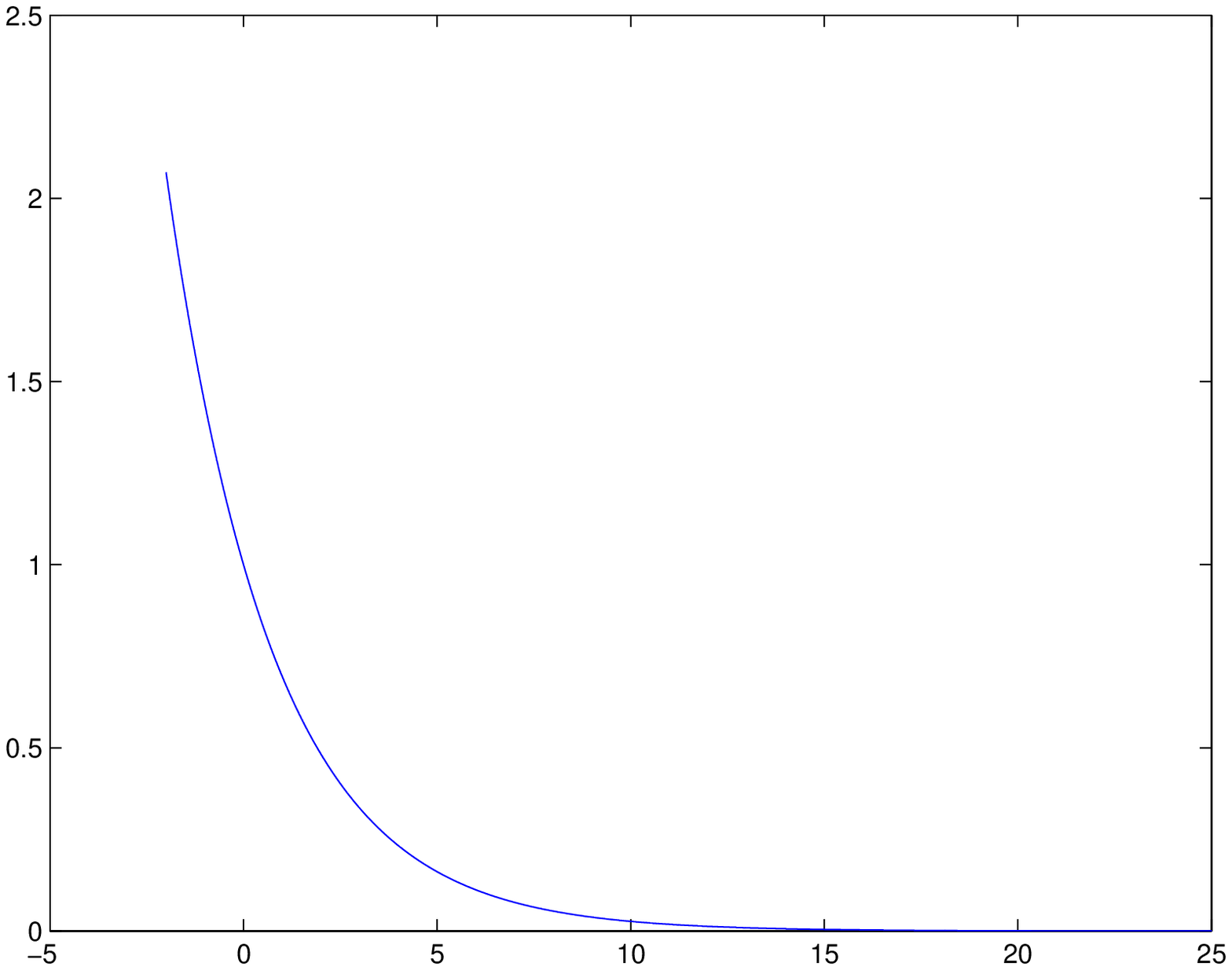} \\ \hline
\end{tabular}
\caption{Market implied $\phi(\lambda)$ from observed options in April 12, 2013 maturing in 19 days. Recall that (in the case of a finite activity subordinator) $\lim_{\lambda\rightarrow\infty}\phi(\lambda)/\lambda=\gamma$, $\lim_{\lambda\rightarrow\infty}\phi(\lambda)-\gamma\lambda=\nu([0,\infty])=\alpha$ and $\widehat{F}(\lam)=1+(\gamma \lambda - \phi(\lam))/\alpha$.}
\label{fig:MarketImpliedPhiLambda2}
\end{figure}

%
%

\clearpage
\bibliographystyle{chicago}
\bibliography{RafaBib}

\begin{thebibliography}{}

\bibitem[\protect\citeauthoryear{Abramowitz and Stegun}{Abramowitz and
  Stegun}{1972}]{abramowitz-stegun}
Abramowitz, M. and I.~A. Stegun (1972).
\newblock {\em {Handbook of Mathematical Functions: with Formulas, Graphs, and
  Mathematical Tables}\/} (10th ed.).
\newblock New York: Dover Publications.

\bibitem[\protect\citeauthoryear{Angelos}{Angelos}{2009}]{Angelos-1}
Angelos, J. (2009, March).
\newblock {Deep Out-Of-The-Money Put Options: A Credit Derivative Market
  Alternative}.
\newblock {Quick Reference Guide}, Chigago Board Options Exchange.

\bibitem[\protect\citeauthoryear{B\"aumer and Neubrander}{B\"aumer and
  Neubrander}{1994}]{baumer-neubrander-1}
B\"aumer, B. and F.~Neubrander (1994).
\newblock {Laplace Transform Methods for Evolution Equations}.
\newblock In {\em Conferenze del Seminario di Matematica dell'Universit\`a di
  Bari, 259}, Volume Swabian-Apulian Meeting on Operator Semigroups and
  Evolution Equations, Ruvo di Puglia, Italy, pp.\  27--60.

\bibitem[\protect\citeauthoryear{Bertoin}{Bertoin}{1996}]{bertoin-1}
Bertoin, J. (1996).
\newblock {\em {L\'{e}vy Processes}}, Volume 121 of {\em Cambridge Tracts in
  Mathematics}.
\newblock Cambridge, UK: Cambridge University Press.

\bibitem[\protect\citeauthoryear{Bertoin}{Bertoin}{2004}]{bertoin-4}
Bertoin, J. (2004).
\newblock {Subordinators: Examples and Applications}.
\newblock In P.~Bernard (Ed.), {\em Lectures on Probability Theory and
  Statistics Ecole d'Ete de Probabilites de Saint-Flour XXVII - 1997}, Volume
  1717 of {\em Lecture Notes in Mathematics: Mathematics and Statistics}, pp.\
  1--97. Springer Berlin / Heidelberg.

\bibitem[\protect\citeauthoryear{Bielecki and Rutkowski}{Bielecki and
  Rutkowski}{2004}]{bielecki-rutkowski-1}
Bielecki, T.~R. and M.~Rutkowski (2004).
\newblock {\em {Credit Risk: Modeling, Valuation and Hedging}}.
\newblock Berlin: Springer-Verlag.

\bibitem[\protect\citeauthoryear{Bochner}{Bochner}{1949}]{bochner-1}
Bochner, S. (1949).
\newblock {Diffusion Equation and Stochastic Processes}.
\newblock In {\em Proceedings of the National Academy of Sciences}, Volume 35
  (7), pp.\  368--370.

\bibitem[\protect\citeauthoryear{Bondarenko}{Bondarenko}{2003}]{bondarenko-1}
Bondarenko, O. (2003).
\newblock {Estimation of Risk-Neutral Densities Using Positive Convolution
  Approximation}.
\newblock {\em Journal of Econometrics\/}~{\em 116\/}(1-2), 85--112.

\bibitem[\protect\citeauthoryear{Borodin and Salminen}{Borodin and
  Salminen}{2002}]{borodin-salminen}
Borodin, A. and P.~Salminen (2002).
\newblock {\em {Handbook of Brownian Motion: Facts and Formulae}}.
\newblock Probability and Its Applications. Basel, Switzerland.: Birkhauser
  Verlag AG, {2 Rev. Ed.}

\bibitem[\protect\citeauthoryear{Breeden and Litzenberger}{Breeden and
  Litzenberger}{1978}]{breeden-litzenberger-1}
Breeden, D.~T. and R.~H. Litzenberger (1978).
\newblock {Prices of State-Contingent Claims Implicit in Option Prices}.
\newblock {\em The Journal of Business\/}~{\em 51\/}(4), 621--651.

\bibitem[\protect\citeauthoryear{Carr and Lee}{Carr and Lee}{2006}]{carr-lee-1}
Carr, P. and R.~Lee (2006, July).
\newblock {Quadratic Variation Derivatives with Skew}.
\newblock Working Paper.

\bibitem[\protect\citeauthoryear{Carr and Lee}{Carr and Lee}{2009}]{carr-lee-2}
Carr, P. and R.~Lee (2009, May).
\newblock {Robust Replication of Volatility Derivatives}.
\newblock Preprint.

\bibitem[\protect\citeauthoryear{Carr, Lee, and Wu}{Carr
  et~al.}{2012}]{carr-lee-wu-1}
Carr, P., R.~Lee, and L.~Wu (2012).
\newblock {Variance Swaps on Time-Changed L\'evy Processes}.
\newblock {\em Finance and Stochastics\/}~{\em 16\/}(2), 335--355.

\bibitem[\protect\citeauthoryear{Carr and Linetsky}{Carr and
  Linetsky}{2006}]{carr-linetsky-1}
Carr, P. and V.~Linetsky (2006).
\newblock {A Jump to Default Extended CEV Model: An Application of Bessel
  Processes}.
\newblock {\em Finance and Stochastics\/}~{\em 10\/}(3), 303--330.

\bibitem[\protect\citeauthoryear{Carr and Madan}{Carr and
  Madan}{1998}]{carr-madan-6}
Carr, P. and D.~B. Madan (1998).
\newblock {Towards a Theory of Volatility Trading}.
\newblock In R.~Jarrow (Ed.), {\em {Volatility: New Estimation Techniques for
  Pricing Derivatives}}. London, UK: Risk Books.

\bibitem[\protect\citeauthoryear{Carr and Schoutens}{Carr and
  Schoutens}{2008}]{carr-schoutens-1}
Carr, P. and W.~Schoutens (2008).
\newblock {Hedging under the Heston Model with Jump-to-Default}.
\newblock {\em International Journal of Theoretical \& Applied Finance\/}~{\em
  11\/}(4), 403--414.

\bibitem[\protect\citeauthoryear{Carr and Wu}{Carr and Wu}{2010}]{carr-wu-2}
Carr, P. and L.~Wu (2010).
\newblock {Stock Options and Credit Default Swaps: A Joint Framework for
  Valuation and Estimation}.
\newblock {\em Journal of Financial Econometrics\/}~{\em 8\/}(4), 409--449.

\bibitem[\protect\citeauthoryear{Carr and Wu}{Carr and Wu}{2011}]{carr-wu-3}
Carr, P. and L.~Wu (2011).
\newblock {A Simple Robust Link Between American Puts and Credit Protection}.
\newblock {\em The Review of Financial Studies\/}~{\em 24\/}(2), 473--505.

\bibitem[\protect\citeauthoryear{Chang, Hung, and Tsai}{Chang
  et~al.}{2012}]{Chang-Hung-Tsai-1}
Chang, J.-R., M.-W. Hung, and F.-T. Tsai (2012).
\newblock {Cross-Market Hedging Strategies for Credit Default Swaps under a
  Markov Regime-Switching Framework}.
\newblock {\em The Journal of Fixed Income\/}~{\em 22\/}(2), 44--56.

\bibitem[\protect\citeauthoryear{Consigli}{Consigli}{2004}]{Consigli-1}
Consigli, G. (2004).
\newblock {Credit Default Swaps and Equity Volatility: Theoretical Modeling and
  Market Evidence}.
\newblock In {\em {Proceedings del Workshop su Portfolio Optimisation and
  Option Pricing}}, Venezia, pp.\  45--70. Dipartamento di Matematica
  dell'Universit\`a Ca'Foscari.

\bibitem[\protect\citeauthoryear{Cont and Kokholm}{Cont and
  Kokholm}{2013}]{cont-kokholm-1}
Cont, R. and T.~Kokholm (2013).
\newblock {A Consistent Pricing Model for Index Options and Volatility
  Derivatives}.
\newblock {\em Mathematical Finance\/}~{\em 23\/}(2), 248 -- 274.

\bibitem[\protect\citeauthoryear{Cox}{Cox}{1975}]{cox-1}
Cox, J.~C. (1975).
\newblock {Notes on Option Pricing I: Constant Elasticity of Variance
  Diffusions}.
\newblock {\em Reprinted in The Journal of Portfolio Management, December
  1996\/}~{\em 23}, 15--17.

\bibitem[\protect\citeauthoryear{Cremers, Driessen, Maenhout, and
  Weinbaum}{Cremers et~al.}{2008}]{cremers-driessen-maenhout-weinbaum-1}
Cremers, M., J.~Driessen, P.~Maenhout, and D.~Weinbaum (2008).
\newblock Individual stock-option prices and credit spreads.
\newblock {\em Journal of Banking \& Finance\/}~{\em 32\/}(12), 2706--2715.

\bibitem[\protect\citeauthoryear{Davies and Martin}{Davies and
  Martin}{1979}]{davies-martin-1}
Davies, B. and B.~Martin (1979).
\newblock {Numerical Inversion of the Laplace Transform: a Survey and
  Comparison of Methods}.
\newblock {\em Journal of Computational Physics\/}~{\em 33\/}(1), 1--32.

\bibitem[\protect\citeauthoryear{Davies}{Davies}{2007}]{davies-3}
Davies, E.~B. (2007).
\newblock {\em {Linear Operators and Their Spectra}}.
\newblock Number 106 in Cambridge Studies in Advanced Mathematics. Cambridge
  University Press.

\bibitem[\protect\citeauthoryear{Erdelyi}{Erdelyi}{1953}]{erdelyi-2}
Erdelyi, A. (1953).
\newblock {\em {Higher Transcendental Functions}}, Volume~II.
\newblock New York: McGraw-Hill.

\bibitem[\protect\citeauthoryear{Filipovic, Gourier, and Mancini}{Filipovic
  et~al.}{2013}]{Filipovic-Gourier-Mancini-1}
Filipovic, D., E.~Gourier, and L.~Mancini (2013, March).
\newblock {Quadratic Variance Swap Models}.
\newblock Working paper.

\bibitem[\protect\citeauthoryear{Friedman}{Friedman}{1956}]{friedman-1}
Friedman, B. (1956).
\newblock {\em {Principles and Techniques of Applied Mathematics}}, Volume 280.
\newblock New York: Wiley.

\bibitem[\protect\citeauthoryear{Gatheral}{Gatheral}{2004}]{Gatheral-2}
Gatheral, J. (2004).
\newblock {A Parsimonious Arbitrage-Free Implied Volatility Parameterization
  with Application to the Valuation of Volatility Derivatives}.
\newblock Presentation at Global Derivatives \& Risk Management, Madrid.

\bibitem[\protect\citeauthoryear{Gatheral and Jacquier}{Gatheral and
  Jacquier}{2012}]{Gatheral-Jacquier-1}
Gatheral, J. and A.~Jacquier (2012, April).
\newblock {Arbitrage-Free SVI Volatility Surfaces}.
\newblock Working paper.

\bibitem[\protect\citeauthoryear{Hanson and Yakovlev}{Hanson and
  Yakovlev}{2002}]{hanson-yakovlev-1}
Hanson, G.~W. and A.~B. Yakovlev (2002).
\newblock {\em {Operator Theory for Electromagnetics: An Introduction}}.
\newblock New York: Springer-Verlag.

\bibitem[\protect\citeauthoryear{Jacod and Shiryaev}{Jacod and
  Shiryaev}{2002}]{jacod-shiryaev-1}
Jacod, J. and A.~N. Shiryaev (2002).
\newblock {\em {Limit Theorems for Stochastic Processes}\/} (2nd. ed.), Volume
  288 of {\em Comprenhensive Studies in Mathematics}.
\newblock Berlin: Springer.

\bibitem[\protect\citeauthoryear{Jeanblanc, Yor, and Chesney}{Jeanblanc
  et~al.}{2009}]{jeanblanc-yor-chesney-1}
Jeanblanc, M., M.~Yor, and M.~Chesney (2009).
\newblock {\em {Mathematical Methods for Financial Markets}}.
\newblock Springer Finance. London, UK: Springer.

\bibitem[\protect\citeauthoryear{Jung}{Jung}{2006}]{jung-1}
Jung, J. (2006, August).
\newblock {Vexed by Variance}.
\newblock {\em Risk Magazine\/}, 41--43.

\bibitem[\protect\citeauthoryear{Linetsky}{Linetsky}{2006}]{linetsky-1}
Linetsky, V. (2006).
\newblock {Pricing Equity Derivatives Subject to Bankruptcy}.
\newblock {\em Mathematical Finance\/}~{\em 16\/}(2), 255--282.

\bibitem[\protect\citeauthoryear{McKean}{McKean}{1956}]{mckean-1}
McKean, H.~P. (1956).
\newblock {Elementary Solutions for Certain Parabolic Partial Differential
  Equations}.
\newblock {\em Transactions of the American Mathematical Society\/}~{\em
  82\/}(2), 519--548.

\bibitem[\protect\citeauthoryear{Mendoza-Arriaga, Carr, and
  Linetsky}{Mendoza-Arriaga et~al.}{2010}]{mendoza-carr-linetsky-1}
Mendoza-Arriaga, R., P.~Carr, and V.~Linetsky (2010).
\newblock {Time Changed Markov Processes in Credit-Equity Modeling}.
\newblock {\em Mathematical Finance\/}~{\em 20\/}(4), 527--569.

\bibitem[\protect\citeauthoryear{Mendoza-Arriaga and Linetsky}{Mendoza-Arriaga
  and Linetsky}{2010}]{linetsky-mendoza-arriaga-1}
Mendoza-Arriaga, R. and V.~Linetsky (2010).
\newblock {Constructing Markov Processes with Dependent Jumps by Multivariate
  Subordination: Applications to Multi-Name Credit-Equity Modeling}.
\newblock {\em Submitted for publication.}

\bibitem[\protect\citeauthoryear{Mendoza-Arriaga and Linetsky}{Mendoza-Arriaga
  and Linetsky}{2013}]{linetsky-mendoza-arriaga-3}
Mendoza-Arriaga, R. and V.~Linetsky (2013).
\newblock {Time-Changed CIR Default Intensities with Two-Sided Mean-Reverting
  Jumps}.
\newblock {\em To appear:} Annals of Applied Probability.

\bibitem[\protect\citeauthoryear{Neuberger}{Neuberger}{1990}]{neuberger-1}
Neuberger, A. (1990).
\newblock {Volatility Trading}.
\newblock Working Paper: London Business School.

\bibitem[\protect\citeauthoryear{Nikiforov and Uvarov}{Nikiforov and
  Uvarov}{1988}]{nikiforov-uvarov-1}
Nikiforov, A. and V.~Uvarov (1988).
\newblock {\em {Special Functions of Mathematical Physics: A Unified
  Introduction with Applications}}.
\newblock Basel: Birkh\"{a}user.

\bibitem[\protect\citeauthoryear{Pazy}{Pazy}{1983}]{pazy-1}
Pazy, A. (1983).
\newblock {\em {Semigroups of Linear Operators and Applications to Partial
  Differential Equations}}.
\newblock N.Y., USA.: Springer-Verlag.

\bibitem[\protect\citeauthoryear{Phillips}{Phillips}{1952}]{phillips-1}
Phillips, R.~S. (1952).
\newblock {On the Generation of Semigroups of Linear Operators}.
\newblock {\em Pacific Journal of Mathematics\/}~{\em 2\/}(3), 343--369.

\bibitem[\protect\citeauthoryear{Prudnikov, Brychkov, and Marichev}{Prudnikov
  et~al.}{1983}]{prudnikov-2}
Prudnikov, A.~P., Y.~A. Brychkov, and O.~I. Marichev (1983).
\newblock {\em {Integrals Series: Special Functions}}, Volume~II of {\em
  Integrals and Series}.
\newblock CRC Press.

\bibitem[\protect\citeauthoryear{Prudnikov, Brychkov, and Marichev}{Prudnikov
  et~al.}{1990}]{prudnikov-3}
Prudnikov, A.~P., Y.~A. Brychkov, and O.~I. Marichev (1990).
\newblock {\em {Integrals Series: More Special Functions}}, Volume III of {\em
  Integrals and Series}.
\newblock CRC Press.

\bibitem[\protect\citeauthoryear{Reed and Simon}{Reed and
  Simon}{1980}]{reed-simon-1}
Reed, M. and B.~Simon (1980).
\newblock {\em {Methods of Modern Mathematical Physics}\/} (Revised ed.),
  Volume {I: Functional Analysis}.
\newblock Academic Press.

\bibitem[\protect\citeauthoryear{Roach}{Roach}{1982}]{roach-1}
Roach, G.~F. (1982).
\newblock {\em {Green's Functions}\/} (2nd. ed.).
\newblock Cambridge, UK: Cambridge University Press.

\bibitem[\protect\citeauthoryear{Rudin}{Rudin}{1991}]{rudin-3}
Rudin, W. (1991).
\newblock {\em {Functional Analysis}\/} (2nd. ed.).
\newblock McGraw-Hill Science.

\bibitem[\protect\citeauthoryear{Sato}{Sato}{1999}]{sato}
Sato, K. (1999).
\newblock {\em {L\'{e}vy Processes and Infinitely Divisible Distributions}}.
\newblock Cambridge, UK.: Cambridge University Press.

\bibitem[\protect\citeauthoryear{Schilling, Song, and Vondracek}{Schilling
  et~al.}{2010}]{schilling-song-vondraceck-1}
Schilling, R.~L., R.~Song, and Z.~Vondracek (2010).
\newblock {\em {Bernstein Functions: Theory and Applications}}.
\newblock Number~37 in De Gruyter studies in mathematics. Berlin: De Gruyter.

\bibitem[\protect\citeauthoryear{Widder}{Widder}{1946}]{widder-1}
Widder, D.~V. (1946).
\newblock {\em {The Laplace Transform}}.
\newblock New Jersey: Princeton University Press.

\end{thebibliography}

\end{document}